\documentclass[journal,twoside,web]{ieeecolor}
\usepackage{generic}
\usepackage{cite}
\usepackage{amssymb,amsmath,amsopn,amsfonts,colordvi}
\usepackage{mathtools}
\usepackage{comment}
\usepackage{mathrsfs}
\usepackage{algorithmic}
\usepackage{graphicx}
\usepackage{textcomp}
\usepackage{subfig}
\usepackage{algorithm}
\newtheorem{theorem}{Theorem}
\newtheorem{problem}{Problem}
\newtheorem{proposition}{Proposition}

\newtheorem{lemma}{Lemma}
\newtheorem{remark}{Remark}
\newtheorem{definition}{Definition} 
\newtheorem{assumption}{Assumption}

\def\BibTeX{{\rm B\kern-.05em{\sc i\kern-.025em b}\kern-.08em
		T\kern-.1667em\lower.7ex\hbox{E}\kern-.125emX}}
\begin{document}
	\title{Adaptive Incentive Design for Linear-Quadratic Dynamic Games:  
    No-Regret Social Coordination}
	\author{Lantian Zhang, Yibei Li, Georgios Vasileiou, Lihua Xie, \IEEEmembership{Fellow, IEEE}, and Silun Zhang
		\thanks{This work is supported by The Wallenberg AI, Autonomous Systems and Software Program (WASP) which is
funded by the Knut and Alice Wallenberg Foundation. (Corresponding author: Yibei Li).}
		\thanks{Lantian Zhang, Georgios Vasileiou, and Silun Zhang are with the Department of Mathematics, KTH Royal Institute of Technology, Stockholm 10044, Sweden (e-mails: lantian@kth.se, geovas@kth.se, silunz@kth.se). 
        Yibei Li is with the State Key Laboratory of Mathematical Sciences,
Academy of Mathematics and Systems Science, Chinese Academy of
Sciences, Beijing 100190, China (e-mail: yibeili@amss.ac.cn). 
Lihua Xie is with the School of Electrical and Electronic Engineering,
Nanyang Technological University, Singapore 639798 (e-mail: elhxie@ntu.edu.sg).
        }}
	
	\maketitle
	
	\begin{abstract}
Incentives can align private interests with collective goals, but their design becomes intertwined with learning when agents’ preferences are unknown. This paper develops an adaptive incentive-design framework for multi-agent linear–quadratic dynamic games, where a system planner can simultaneously learn agents’ private preferences and steer their strategic behavior towards the social optimum.
To this end, we first characterize the complete family of affine incentives that implement the social optimum as the unique open-loop Nash equilibrium while ensuring exact budget balance.
When agents’ preferences are unknown, we develop an online algorithm to estimate their preference parameters and establish strong consistency of the estimates under a weak trajectory-dependent excitation condition, without requiring persistent excitation or i.i.d. trajectories.
We then integrate this estimator into an active-learning incentive mechanism that combines certainty-equivalent coordination with decaying random probing. We prove that the probing generates sufficient information through the agents’ equilibrium responses and the system dynamics, ensuring consistent preference learning from the resulting closed-loop trajectories. We show that the induced trajectories converge almost surely to their socially optimal counterparts, while cumulative social-cost regret relative to the full-information oracle and cumulative net-budget imbalance grow sublinearly.  
Simulations of a transboundary pollution game demonstrate the effectiveness of the proposed framework.	
	\end{abstract}
	
	\begin{IEEEkeywords}
		Incentive design, inverse learning, linear-quadratic dynamic games, stochastic observations, and regret analysis.
	\end{IEEEkeywords}

\section{Introduction}
Cyber-physical systems increasingly comprise strategic decision makers, including humans and AI agents, whose actions are shaped by individual objectives and strategic responses to one another and their environment. 
Individually rational decisions may therefore conflict with collective goals, which makes system-level coordination a fundamental challenge.
\textit{Incentive design} addresses this challenge by allowing a system planner to design and commit to a payment mechanism that aligns agents' private incentives with desired system-level objectives (see \cite{ratliff_review} and references therein). This paradigm has shown its effectiveness in various domains, including smart grids, where pricing and rewards promote efficient energy consumption \cite{pricing_energy, rahi}; transportation systems, where congestion tolls influence travelers' route choices and improve traffic efficiency \cite{trans}; and environmental governance, where carbon taxes and emission pricing encourage firms to reduce emissions and promote sustainable development~\cite{pollution_li, pollution_franciso}. 

A core obstacle in incentive design is the \textit{information asymmetry} between the system planner and strategic agents, where agent preferences are rarely known in advance~\cite{contract, maskin1984monopoly}. This is because agents pursuing private interests are often unwilling to disclose their preferences and, in some settings, may themselves lack complete knowledge of their own objective functions. Adaptive incentive-design methods address such asymmetry by learning agents’ preferences from their responses while simultaneously adjusting the incentive policy~\cite{ratliff}. Under the assumption that agents follow asymptotically convergent equilibrium-seeking dynamics, \cite{learning_agent} develops a two-timescale adaptive algorithm that couples preference learning with incentive adaptation. By contrast, no-regret adaptive incentive design methods~\cite{vasileiou2026adaptiveincentivedesignregret} integrate learning and incentive selection within a single-timescale loop. They achieve vanishing average regret to a foresighted oracle via an explicit exploration-exploitation trade-off. Other approaches address incomplete preference knowledge in incentive design through social-gradient methods~\cite{vasileiou2026incentive_cdc}, without explicitly estimating private costs or equilibrium sensitivities, and also under boundedly rational agent responses~\cite{active-inverse}.

Adaptive incentive design becomes even more relevant and challenging in dynamic environments, where current actions alter a shared environment state and thereby shape agents' future costs and strategic responses. The planner must therefore coordinate entire state–action trajectories, rather than isolated one-shot decisions, while balancing immediate behavioral effects against their long-term consequences. Incentive problems in such games with dynamic players and environments occur in many applications, including, e.g., transboundary pollution games~\cite{pollution_li}, building energy pricing~\cite{ratliff_energy}, and transportation games~\cite{trans,manxi_wu}. Linear–quadratic (LQ) dynamic games provide a canonical setting for this problem, where linear dynamics capture action propagation through time, and quadratic objectives encode heterogeneous valuations of system states and control effort while retaining a tractable Nash structure. 

This paper develops an adaptive incentive design framework for multi-agent LQ dynamic games that actively learns agents' unknown preferences from noisy responses and simultaneously steers the induced Nash equilibrium toward the social optimum. The main contributions of the paper are threefold:
\begin{enumerate}
    \item[1)] First, we formulate stochastic adaptive incentive design for multi-agent LQ dynamic games and, under full information, characterize the complete family of affine incentives that implement the social optimum and promote budget balance (Prop.~\ref{pro1}).
    \item[2)] Second, we propose an online learning algorithm 
    to estimate agents’ unknown preference parameters and, by martingale convergence theory, prove its strong consistency without requiring persistent excitation (PE) or i.i.d. trajectory data (Thm. 1). This relaxes standard identifiability conditions in cost learning
    (see, e.g., \cite{li2024inverse,inverse_control, QU20232134}), and enables preference learning and incentive design in a closed loop.
    \item[3)] Third, we develop an adaptive incentive scheme based on certainty equivalence and decaying random probing. We establish that the probing propagates through agents’ equilibrium responses and the system dynamics to generate sufficient excitation (Lem.~\ref{lem:adaptive_excitation}, \& Thm.~\ref{thm2}), while the resulting open-loop equilibrium trajectories converge to the social optimum. The proposed scheme achieves vanishing average regret relative to the full-information oracle and vanishing average net budget (Thm.~\ref{thm3}).
\end{enumerate}

\subsection{Related Work}

{\noindent\bf Incentive Design and Stackelberg Games.} 
Classical control-theoretic studies for incentive design established incentive controllability~\cite{HoLuhMuralidharan1981,Ho1982}, and affine incentive construction under different information structures~\cite{ZhengBasar1982}. Related work developed dynamic and LQ leader–follower formulations~\cite{ChangHo1981,abou,freiling}, while functional incentive design in LQ systems has also been studied using bilinear incentives~\cite{MattBolognaniDorfler2026}. An interesting recent work \cite{leon2025onlinelearningdynamicvickreyclarkegroves} studies dynamic VCG mechanisms in unknown Markov decision processes. With known agents' preferences, these results provide the structural foundation for incentive problems. For unknown follower utilities, Lauffer et al.~\cite{no-regret_stackelberg} developed a no-regret learning algorithm for a finite-state Markov model with a single follower. Our focus is instead on continuous-state, multi-agent LQ dynamic games, where the planner adapts affine payments to asymptotically implement socially optimal state–action trajectories. 

{\noindent\bf Inverse Optimal Control and Inverse Games.} 
Preference learning in dynamic decision-making process is closely related to inverse optimal control, initiated by Kalman~\cite{inverse_kalman}, which infers cost parameters from observed optimal behavior. In LQ settings, existing work investigates cost reconstruction and identifiability~\cite{inverse_control,li2020continuous,Inverse_qu}, including partial and noisy observations~\cite{li2024inverse,zhang2021discrete,inverse_yu}. Inverse-game methods extend this perspective to cost identification through observed Nash equilibrium behavior~\cite{inverse_lq_game,inverse_game_Observations,zhang2019non,Inverse_Reinforcement}. Beyond cost identification from given trajectories, this paper studies a learning-and-decision closed-loop setting in which each incentive depends on past estimates and shapes the trajectories used for subsequent learning. We account for both the history dependence and regression bias induced by observation noise and establish strong consistency under a trajectory-dependent excitation condition.
The proposed adaptive incentive scheme then generates sufficient excitation to satisfy this condition.

{\noindent\bf Game-Based Control Systems.} 
Game-based control systems describe a regulator influencing a lower-level noncooperative dynamic game, with controllability and stabilizability studied in~\cite{renren_zhang19,renren_zhang21}. More closely related, Liu and Guo~\cite{liu_siam} considered adaptive stabilization under unknown system dynamics, with the regulator and agents learning the dynamics and adjusting their strategies online. Our setting differs in both the source of uncertainty and the regulation objective: the physical dynamics are known, whereas the agents’ cost parameters are private, and incentives are designed to align the induced Nash equilibrium with a prescribed social optimum.

{\noindent\bf Adaptive Control.} Our framework is structurally related to adaptive control, where recursive identification is coupled with online adaptation of the decision. Classical self-tuning regulation combines recursive identification with certainty-equivalent or diminishing-excitation control \cite{Astrom1973,Goodwin1981,Kumar1990,Guo1995}, with extensions to adaptive LQG \cite{Duncan1999}, data-driven LQ design \cite{DorflerTesiDePersis2023}, and nonlinear systems~\cite{ZhangGuo2026Tracking, Zhang2026}.
Unlike direct control of a system with unknown dynamics, however, the planner here chooses payments rather than the agents’ control actions. Probing must therefore propagate through the equilibrium response before it excites the cost regression. Our analysis quantifies this indirect excitation and its consequences for preference-learning accuracy and social-cost regret.

\subsection{Notations}
 For a positive integer $L$, we write $[L]=\{1,2,\ldots,L\}$.
For a matrix $X\in\mathbb{R}^{p\times q}$, $\operatorname{vec}(X)\in\mathbb{R}^{pq}$ denotes the vectorization of $X$ formed by stacking its columns. For a symmetric matrix $X\in\mathbb{R}^{p\times p}$, $\operatorname{vech}(X)\in\mathbb{R}^{p(p+1)/2}$ denotes the half-vectorization of $X$, obtained by stacking only the entries on and below the diagonal; the associated duplication matrix $\mathscr{D}_p\in\mathbb{R}^{p^2\times p(p+1)/2}$ is the unique matrix satisfying
$
\operatorname{vec}(X)=\mathscr{D}_p\operatorname{vech}(X)
$
for every symmetric $X\in\mathbb{R}^{p\times p}$. For a positive integer $n$, let $\mathbf{1}_n$ denote the
$n$-dimensional column vector of ones and $I_n$ the
$n\times n$ identity matrix.
Given a positive integer $s$, 
$X\in\mathbb{R}^{p\times p}$ and $Y\in\mathbb{R}^{p\times q}$, the lower-triangular block Toeplitz matrix
$\mathcal{L}s(X,Y)$ is defined by 
\begin{equation}\label{eq: operator_L}
\mathcal{L}_s(X, Y)
=
\begin{bmatrix}
Y &        &        &  \\
\vdots & \ddots &        &  \\
X^{s-1}Y & \cdots & Y
\end{bmatrix}\in\mathbb{R}^{sp\times sq}.    
\end{equation}
For any positive integers $r$ and $s$, and any
$z=[z_1^{\top},\ldots,z_r^{\top}]^{\top}\in\mathbb{R}^{rs}$ with
$z_i\in\mathbb{R}^s$, define the operator
$\mathcal{K}_{s,r}: \mathbb{R}^{rs}\to \mathbb{R}^{rs\times s^2}$ by
\begin{equation}\label{eq:operator}
    \mathcal{K}_{s,r}(z)
    =
    \begin{bmatrix}
        z_1^{\top}\otimes I_s\\
        \vdots\\
        z_r^{\top}\otimes I_s
    \end{bmatrix}.
\end{equation}

The remainder of the paper is organized as follows. Section~\ref{sec2} introduces the incentivized dynamic game model. Section~\ref{sec3} formulates the adaptive incentive design problem for LQ games.
Section~\ref{sec4} presents the main results on preference
estimation, including an online algorithm for estimating the agents' unknown preference parameters and an analysis of its convergence. Section~\ref{sec5} develops the adaptive incentive design algorithm and establishes vanishing average regret. Section~\ref{sec7} presents the proofs of the main results. Section~\ref{sec:sim} illustrates an application scenario through numerical examples. Finally, Section~\ref{se6} concludes the paper.

  \section{Preliminaries: Incentivized LQ Dynamic Games}\label{sec2} 

Consider $L$ strategic \textit{agents} coupled through a common linear dynamical system
\begin{equation}\label{eq2}
\begin{cases}
&
x(t+1)=Ax(t)+\sum_{i=1}^{L}B_{i}u_{i}(t), \;\; t=0, \dots, T-1,
\\
&x(0)=\xi,
\end{cases}
\end{equation}
where $x(t)\in \mathbb{R}^{n}$ is the system state at time $t$, $\xi\in \mathbb{R}^{n}$ is the initial state, and $u_{i}(t)\in \mathbb{R}^{m}$ is the strategy of agent $i$. $A \in \mathbb{R}^{n\times n}$ and $B_i \in \mathbb{R}^{n\times m}$ are deterministic system matrices. 
 Stacking agent strategy $\boldsymbol{u}_i=[u_i(0)^\top,\ldots,u_i(T-1)^\top]^\top \in\mathbb{R}^{mT}$, each agent $i\in[L]$ is associated with a private quadratic cost function
\begin{equation}\label{eq:intrinsic_cost}
\begin{aligned}
    c_i(\xi, \boldsymbol{u}_i, \boldsymbol{u}_{-i})
    =\sum_{t=0}^{T-1}\big[(x(t)-x_i^{d})^{\top}Q_i(x(t)-x_i^{d})&\\
    +u_{i}(t)^{\top}R_iu_{i}(t)&\big], 
    \end{aligned}
\end{equation}
where $\boldsymbol{u}_{-i}=(\boldsymbol{u}_{1}, \dots, \boldsymbol{u}_{i-1}, \boldsymbol{u}_{i+1},\dots, \boldsymbol{u}_{L})$ denotes the strategy profile of all agents except $i$. The matrices $Q_i\succeq 0$ and $R_i\succ 0$, together with the target state $x_i^{d}\in\mathbb{R}^n$, are the private preference parameters of agent $i$.

A \emph{system planner} (principal) seeks to guide the agents' strategic behavior but cannot directly control their strategies. Instead, the planner announces an incentive law 
$\gamma_i\!:\!\mathbb{R}^{n}\!\times\!\mathbb{R}^{m}\!\times \mathbb{R}
\!\to\mathbb{R}$ for each agent $i$. The incentive
$\gamma_i(x(t), u_i(t),t)$ specifies a monetary payment to agent $i$ based on the shared state $x(t)$ and the agent's action $u_i(t)$, with positive values representing taxes (penalties) and negative values representing subsidies (rewards). The incentive law $\gamma_i$ is parameterized as:
\begin{equation}\label{incentive}
\gamma_i(x(t),u_{i}(t), t)
=
p_i(t)^{\top}
\begin{bmatrix}
x(t)\\
u_{i}(t)\\
1
\end{bmatrix},    
\end{equation}
where $p_i(t)\in\mathbb{R}^{n+m+1}$
is the incentive parameter vector, consisting of the state and action payment rates and a constant offset.
Similar affine incentive laws are also used in \cite{learning_agent, pollution_li}.
Stacking incentive parameters $\boldsymbol{p}_i=[p_i(0)^\top,\ldots,p_i(T-1)^\top]^\top \in\mathbb{R}^{(n+m+1)T}$, the effective (total) cost of agent $i$ under the incentive $\boldsymbol{p}_i$ is given by
\begin{equation}\label{eq:total_cost}
\begin{aligned}
J_i(\xi,\boldsymbol{u}_i, \boldsymbol{u}_{-i},\boldsymbol{p}_i)={}\sum_{t=0}^{T-1} \Big[
    (x(t)-x_i^d)^\top Q_i (x(t)-x_i^d)& \\
    + u_i(t)^\top R_i u_i(t)
    + \gamma_i\bigl(x(t),u_i(t),t\bigr)&
    \Big].
\end{aligned}
\end{equation}

Define the stacked strategy and incentive profiles of all agents, respectively, as $\boldsymbol{U}=[\boldsymbol{u}_1^{\top},\cdots,\boldsymbol{u}_L^{\top}]^{\top}\in \mathbb{R}^{mLT}$ and $\boldsymbol{P}=[\boldsymbol{p}_1^{\top},\ldots,\boldsymbol{p}_L^{\top}]^{\top}\in\mathbb{R}^{(n+m+1)LT}$.
Given $\xi$ and $\boldsymbol{P}$, the multi-agent interaction with dynamics~\eqref{eq2} and cost~\eqref{eq:total_cost} defines an \emph{incentivized dynamic game}, whose special case $\boldsymbol{P}=0$ recovers the classical LQ dynamic game. 
The Nash equilibrium of the incentivized dynamic game is characterized
as follows. 
\begin{definition}[Incentivized Open-Loop Nash Equilibrium]\label{de1}
Given $\xi\in \mathbb R^{n}$ and $\boldsymbol{P}\in\mathbb{R}^{(n+m+1)LT}$, in the incentivized dynamic game with agent cost \eqref{eq:total_cost} and dynamics \eqref{eq2},
a joint strategy profile $\boldsymbol{U}^{\diamond}=[(\boldsymbol{u}_{1}^{\diamond})^{\top},\cdots, (\boldsymbol{u}_{L}^{\diamond})^{\top}]^{\top}\in \mathbb{R}^{mLT}$ is an incentivized open-loop Nash equilibrium (IONE) strategy induced by $(\xi, \boldsymbol{P})$ if, for every agent $i\in [L]$,
\begin{equation}\nonumber
J_{i}(\xi, \boldsymbol{u}_{i}^{\diamond}, \boldsymbol{u}_{-i}^{\diamond}, \boldsymbol{p}_i)\leq J_{i}(\xi, \boldsymbol{u}_{i},\boldsymbol{u}_{-i}^{\diamond}, \boldsymbol{p}_i),\quad \forall \boldsymbol{u}_{i}\in\mathbb{R}^{mT}.
\end{equation}
\end{definition}
\begin{assumption}\label{assum_ne}
When the incentive is $0$, i.e., \(\boldsymbol{P}=0\),
for every initial state $\xi\in\mathbb{R}^{n}$, the game defined by \eqref{eq2} and \eqref{eq:total_cost} admits a unique open-loop NE.
\end{assumption}
\begin{remark}
By the classical results in LQ dynamic games (see Theorem~6.2, p.~243, in~\cite{Basar}), Assumption~\ref{assum_ne} holds if the matrix
$\Delta(t)$ is nonsingular for every $t=0,\ldots,T-1$, where
$\Delta(t)$ is recursively defined backward in
time by
\vspace{-4pt}
\[
\begin{gathered}
\Delta(t)
=
I_n+\sum_{i=1}^{L}
B_iR_i^{-1}B_i^\top V_i(t+1),
\quad t=T-1,\ldots,0,
\\[-3pt]
V_i(t)
=
Q_i+A^\top V_i(t+1)\Delta(t)^{-1}A,
\; V_i(T)=0,
\; i=1,\ldots,L.
\end{gathered}
\]
\end{remark}
\vspace{4pt}

Allowing the initial state $\xi$ and the incentive parameter $\boldsymbol{P}$ to vary over $\mathbb{R}^n\times\mathbb{R}^{(n+m+1)LT}$ yields a unique \emph{IONE response map} that associates each pair \((\xi,\boldsymbol{P})\) with its induced Nash equilibrium, as formalized below. 
\begin{lemma}\label{lem_ne}
Under Assumption~\ref{assum_ne}, there exists a unique map $\mu^\diamond:\mathbb{R}^n\times\mathbb{R}^{(m+n+1)LT}\to\mathbb{R}^{mLT}$ such that $\mu^{\diamond}(\xi,\boldsymbol{P})$ is the unique IONE strategy  induced by $(\xi,\boldsymbol{P})$. Specifically,
\begin{equation}\label{first-order}
 \mu^\diamond(\xi,\boldsymbol{P})
=
-M^{-1}\bigl(\mathcal H\boldsymbol{P}+\eta\bigr),   
\end{equation}
where $M$, $\mathcal H$, and $\eta$ are explicitly defined in \eqref{uniquness}--\eqref{eta} of Appendix \ref{sec:proofs}.
\end{lemma}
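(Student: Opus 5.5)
We reduce the game to a static quadratic game in the stacked strategy $\boldsymbol{U}$. Iterating \eqref{eq2} gives the stacked state
\[
\boldsymbol{x}=[x(0)^\top,\ldots,x(T-1)^\top]^\top=\Phi\xi+\sum_{j=1}^L G_j\boldsymbol{u}_j ,
\]
where $\Phi=[I_n,A^\top,\ldots,(A^{T-1})^\top]^\top$. Each $G_j$ is the strictly lower block-triangular Toeplitz matrix built from $\mathcal{L}_{T}(A,B_j)$, shifted by one step.

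Substituting into \eqref{eq:total_cost}, $J_i$ becomes a quadratic function of $\boldsymbol{u}_i$, with Hessian $2\bigl(G_i^\top(I_T\otimes Q_i)G_i+I_T\otimes R_i\bigr)\succ 0$ since $R_i\succ 0$. It is therefore strictly convex in $\boldsymbol{u}_i$ for fixed $\boldsymbol{u}_{-i}$. The incentive \eqref{incentive} only adds terms affine in $(\boldsymbol{x},\boldsymbol{u}_i)$, hence affine in $\boldsymbol{U}$. So the Hessian, and the convexity, are unchanged by $\boldsymbol{P}$.

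By strict convexity, $\boldsymbol{U}^\diamond$ is an IONE if and only if every agent's gradient vanishes:
\[
\bigl(G_i^\top(I_T\otimes Q_i)G_i+I_T\otimes R_i\bigr)\boldsymbol{u}_i+\sum_{j\neq i}G_i^\top(I_T\otimes Q_i)G_j\boldsymbol{u}_j+G_i^\top(I_T\otimes Q_i)\bigl(\Phi\xi-\mathbf{1}_T\otimes x_i^d\bigr)+\tfrac12\,\partial_{\boldsymbol{u}_i}\Bigl(\textstyle\sum_t\gamma_i\Bigr)=0 .
\]
The last term is linear in $\boldsymbol{p}_i$ and independent of $\boldsymbol{U}$. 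It equals $\tfrac12\bigl(G_i^\top(I_T\otimes[I_n\ 0\ 0])+(I_T\otimes[0\ I_m\ 0])\bigr)\boldsymbol{p}_i$. Stacking over $i$ gives the linear system
\[
M\boldsymbol{U}+\mathcal H\boldsymbol{P}+\eta=0 ,
\]
where $M$ collects the block matrices $G_i^\top(I_T\otimes Q_i)G_j$ plus the diagonal blocks $I_T\otimes R_i$, $\mathcal H$ is block diagonal with the blocks above, and $\eta$ collects the $\xi$- and $x_i^d$-dependent terms. Here $M$ does not depend on $\xi$ or $\boldsymbol{P}$, and $\eta$ is affine in $\xi$.

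The main step is showing that $M$ is nonsingular, and this is where Assumption~\ref{assum_ne} enters. Taking $\boldsymbol{P}=0$, the IONE set for any $\xi$ is exactly the solution set of $M\boldsymbol{U}=-\eta(\xi)$. If $M$ were singular, then either this system has no solution for some $\xi$, or every solvable instance has an affine solution set of positive dimension. The first case contradicts existence. For the second, existence at any single $\xi$ already yields nonuniqueness.

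The first case needs care because $\eta(\xi)$ ranges only over an affine subspace, so unsolvability for some $\xi$ does not by itself follow from singularity. However, Assumption~\ref{assum_ne} asserts that a unique NE exists for every $\xi$. So the system is solvable at some $\xi$ (indeed all), and singularity would make its solution set a translate of $\ker M\neq\{0\}$, contradicting uniqueness. Hence $M$ is invertible.

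Finally, for general $\boldsymbol{P}$, the stationarity system $M\boldsymbol{U}=-(\mathcal H\boldsymbol{P}+\eta)$ has the unique solution $-M^{-1}(\mathcal H\boldsymbol{P}+\eta)$. Because the first-order conditions are both necessary and sufficient, this solution is the unique IONE, which defines the map $\mu^\diamond$ and gives \eqref{first-order}.
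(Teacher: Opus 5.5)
Your proposal is correct and follows the paper's proof essentially step for step. You stack the state as $\bar A\xi+\Gamma\boldsymbol{U}$, use the positive definite Hessian $2(\Gamma_i^\top\bar Q_i\Gamma_i+\bar R_i)$ (which the affine incentive leaves unchanged) to make the first-order conditions necessary and sufficient, and stack them into $M\boldsymbol{U}=-\mathcal H\boldsymbol{P}-\eta$. The one place you go beyond the paper is in justifying why Assumption~\ref{assum_ne} forces $M$ to be nonsingular, which the paper only asserts: solvability at $\boldsymbol{P}=0$ together with a nontrivial kernel would contradict uniqueness. That argument is sound.
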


The proof of Lemma \ref{lem_ne} is given in Appendix \ref{sec:proofs}.
\begin{remark}
We remark that the variables $\xi$ and $\boldsymbol{P}$ may be random in the adaptive setting considered later. In this case, $\mu^{\diamond}(\xi, \boldsymbol{P})$
is defined samplewise, with Definition~\ref{de1} applied to each realized pair of
$(\xi, \boldsymbol{P})$.  
\end{remark}

Given the initial state $\xi$ and the agents' strategy profile $\boldsymbol{U}$, the system planner evaluates the overall
performance of the system through a \emph{social cost}
\begin{equation}\label{social_cost}
\begin{aligned}
\Phi(\xi, \boldsymbol{U})
=\sum_{t=0}^{T-1}\bigg[
(x(t)-x^{d*})^\top Q^{*}(x(t)-x^{d*})&\\[-5pt]
+
\sum_{i=1}^{L}u_{i}(t)^\top R_i^{*} u_{i}(t)&
\bigg],       
\end{aligned}
\end{equation}
where $x^{d*}\in\mathbb{R}^n$ is the prescribed social target,
$Q^*\succeq0$, and $R_i^*\succ0$ denote the weighting matrices. The first term penalizes deviations of
the system state from the social target, while the second accounts for the
cost of the agents' control actions.

Incentive design allows the planner to steer the agents' behavior toward minimizing the social cost by choosing a suitable incentive $\boldsymbol{P}$. We formulate this in the next section.
\section{PROBLEM FORMULATION: ADAPTIVE INCENTIVE
DESIGN}\label{sec3}
In this section, we introduce the adaptive incentive design (AID) problem for LQ dynamic games.

The $L$ agents and the planner engage in a sequential multi-round interaction process, with rounds indexed by $h=1,2,\ldots$. Each round is referred to as a \emph{season}. Let $(\Omega , \mathcal{F}, \mathbb{P})$ be a complete probability
space, equipped with a filtration
$\{\mathcal{F}_h\}_{h\geq0}$, i.e., $\mathcal{F}_0\subset \mathcal{F}_1 \subset \cdots$, where $\mathcal{F}_h$ represents the information
 available to the planner at the end of season $h$,
and \(\mathcal{F}_0\) denotes the initial information before season \(1\).

In each season $h\geq 1$, the interaction between the planner and agents is outlined as follows:

(i) Incentive selection. At the beginning of season $h$, the planner observes an $\mathcal{F}_{h-1}$-measurable initial state $\xi_h$ and, using the agents' trajectory observations from previous seasons,
selects an $\mathcal{F}_{h-1}$-measurable incentive sequence 
$\boldsymbol{p}_{i, h}=[p_{i,h}(0)^{\top},\ldots,p_{i,h}(T-1)^{\top}]^{\top}\in\mathbb{R}^{(m+n+1)T}$ for each agent $i$, where \(p_{i,h}(t)\in\mathbb R^{m+n+1}\) denotes the incentive parameter at time \(t\). The incentives of all agents are stacked as
$\boldsymbol{P}_{h}=[\boldsymbol{p}_{1, h}^{\top},\cdots, \boldsymbol{p}_{L, h}^{\top}]^{\top}\in\mathbb{R}^{(m+n+1)LT}$. 

(ii) Agents' response. After observing $(\xi_{h}, \boldsymbol{P}_{h})$, the agents play the induced finite-horizon incentivized dynamic game~\eqref{eq2}--\eqref{eq:total_cost} and choose the incentivized open-loop Nash equilibrium strategy. By Lemma~\ref{lem_ne}, the joint IONE strategy profile is given by  
\[
\boldsymbol{U}_{h}=\mu^\diamond(\xi_{h},\boldsymbol{P}_{h}).
\]
Let the state trajectory of dynamics~\eqref{eq2} under strategy $\boldsymbol{U}_{h}$ and initial state $\xi_h$ be $\boldsymbol{X}_{h}=[x_{h}(0)^\top,\ldots,x_{h}(T-1)^\top]^\top\in \mathbb{R}^{nT}$.

(iii) Noisy observation. At the end of each season, the planner observes the noisy state-action trajectory
\begin{equation}\label{eq:noisy_obs}
\boldsymbol{Z}_h
=
\begin{bmatrix}
   \boldsymbol{X}_{h}\\
   \boldsymbol{U}_{h}
\end{bmatrix}
+\boldsymbol{\epsilon}_{h},
\end{equation}
where $\boldsymbol{\epsilon}_h\in\mathbb{R}^{(n+mL)T}$ is an $\mathcal F_h$-measurable noise and satisfies
$\mathbb{E}[\boldsymbol{\epsilon}_h\mid \mathcal{F}_{h-1}]=0.$ 
Since observation \(\boldsymbol{Z}_h\) is \(\mathcal{F}_h\)-measurable, the planner can use it to select incentives in season \(h+1\).

We note that the incentive sequence $\{\boldsymbol{P}_h\}_{h\geq1}$ is \emph{adapted} to the filtration $\{\mathcal{F}_{h-1}\}_{h\geq1}$, i.e., $\boldsymbol{P}_h$ is $\mathcal{F}_{h-1}$-measurable for every $h\geq1$.  This is precisely what makes the resulting incentive design \emph{adaptive}.

For each season $h\geq 1$, given the initial state $\xi_h$, the \emph{socially optimal strategy} $\boldsymbol{U}^{*}_{h}$ is the solution to the social cost minimization problem, i.e.,
\begin{equation}\label{social_solution}
\boldsymbol{U}^{*}_{h}=\operatorname*{argmin}_{\boldsymbol{U}\in \mathbb{R}^{mLT}}\Phi(\xi_h, \boldsymbol{U}),\quad \text{subject to \eqref{eq2},}    
\end{equation}
where social cost $\Phi$ is defined in \eqref{social_cost}.
The optimal $\boldsymbol{U}^*_{h}$ is unique due to the strict convexity of the cost $\Phi$. The vector $\boldsymbol{U}^*_{h}$ can be decomposed as
 $\boldsymbol{U}^*_{h}
=
[
\boldsymbol{u}_{1,h}^{*\top},\ldots,\boldsymbol{u}_{L,h}^{*\top}]^{\top}$ where $\boldsymbol{u}_{i,h}^*
=
[
u_{i,h}^*(0)^{\top},\ldots,u_{i,h}^*(T-1)^{\top}]^{\top}
\in\mathbb R^{mT}.$ 
The corresponding state trajectory generated by the system dynamics \eqref{eq2} is denoted by
$\{x^*_h(t)\}_{t=0}^{T}$.
 Based on classical LQ optimal control results, the structure of the socially optimal strategy can be characterized explicitly, as detailed in Appendix~\ref{sec:social}.
\subsection{Problem Formulation}
In this paper, we consider the case where the agents' preference parameters $\{Q_i, R_i, x_i^d\}_{i=1}^L$ are unknown to the planner.
The planner adaptively designs the incentive profile $\{\boldsymbol P_h\}_{h\geq 1}$
using only observations from previous interactions, with the goal of steering the agents' induced IONE strategy profile \(\boldsymbol{U}_h\) toward the socially optimal strategy profile \(\boldsymbol{U}_h^*\). To evaluate the social coordination performance of the online adaptive incentive sequence $\{\boldsymbol{P}_h\}_{h=1}^{H}$, we define the average regret up to season $H$ as
\begin{equation}\label{regret}
\mathcal{R}(H)=\frac{1}{H}\sum_{h=1}^{H}
\left[
\Phi\left(\xi_h, \boldsymbol{U}_{h}\right)
-
\Phi\left(\xi_h,  \boldsymbol{U}^{*}_h\right)
\right].
\end{equation} 

Another performance metric in economic mechanism design
is to
promote {\it budget balance} (BB), i.e., to minimize the net surplus or deficit
resulting from the total monetary transfers among the participants (see, e.g., Definition~2.40 in~\cite{Narahari2009}).
We define the average net budget up to season $H$ as
\begin{equation}\label{eq:budget}
\mathcal{B}(H)=\frac{1}{H}\sum_{h=1}^{H}\sum_{i=1}^{L}
\sum_{t=0}^{T-1}
\gamma_i\bigl(x_h(t),u_{i,h}(t), t\bigr).
\end{equation}

Then, the main problem 
is summarized as follows.
\begin{problem}[AID for LQ Dynamic Games]\label{problem}
 Suppose the agents' preference parameters $\{Q_i, R_i, x_i^d\}_{i=1}^L$ are unknown to the
planner. Construct an online adaptive incentive design algorithm for the planner that, at each season $h\geq 1$, selects an $\mathcal{F}_{h-1}-$measurable incentive profile \(\boldsymbol{P}_h\) based only on the initial state \(\xi_h\) and the observation history \(\{\boldsymbol{Z}_\tau\}_{\tau=1}^{h-1}\),
such
that both the average regret and net budget converge to $0$ as $H \to \infty$, almost surely, i.e.,
\[\lim_{H\to \infty}\mathcal R(H)=0, \text{ and } \lim_{H\to \infty}\mathcal B(H)=0, \qquad a.s.\]
\end{problem}

Problem \ref{problem} solves optimal social coordination while achieving a vanishing average regulation regret and average net transfer. Next,  we introduce the main assumptions.

\begin{assumption}\label{assum2}
For each \(i\in[L]\), \((A,B_i)\) is controllable, and \(A\) is nonsingular. The horizon length $T\geq n+2$.
\end{assumption}
\begin{assumption}\label{assum_noise}
The noise sequence $\{\boldsymbol{\epsilon}_h, \mathcal{F}_h\}$ is a martingale difference sequence. 
Moreover, there exist constants $\beta>4$ and $\sigma>0$ such that 
\begin{equation*}
\begin{aligned}
\mathbb E\!\left[
    \boldsymbol{\epsilon}_h\boldsymbol{\epsilon}_h^\top
    \mid \mathcal F_{h-1}
\right]
= \sigma^2 I_{(n+mL)T},\;\; \text{a.s., $\quad$ and}
\end{aligned}
\end{equation*}
\begin{equation*}
\begin{aligned}
\sup_{h\geq 1}
\mathbb E\!\left[
    \|\boldsymbol{\epsilon}_h\|^\beta
    \mid \mathcal F_{h-1}
\right]
< \infty
\qquad \text{a.s.}
\end{aligned}
\end{equation*}

\end{assumption}
\begin{assumption}\label{assum1}
The initial state sequence $\{\xi_h\}_{h\geq 1}$ is bounded and $\xi_h$ is $\mathcal F_{h-1}$-measurable for each $h\geq 1$.
\end{assumption}

\begin{remark}
Assumption~\ref{assum1} allows both episodic and rolling-horizon settings. In the episodic setting, each $\xi_h$ can be independent of $\xi_{j}$, $\forall h\neq j$. In the rolling-horizon setting, the initial state of season $h+1$ equals the final state of season $h$, i.e., $\xi_{h+1}=x_h(T)$. A sufficient condition for Assumption~\ref{assum1} in the rolling-horizon setting is the stability of the corresponding closed-loop system. More specifically, if there exist constants $C>0$ and $\rho\in(0,1)$ such that, for all $h\geq1$, $\|x_h(T)\|
\leq
\rho \|\xi_h\|+C,$
then $\{\xi_h\}_{h\geq 1}$ is bounded. 
\end{remark}
\subsection{Incentive Design under Full Information}
Before investigating information asymmetry, where the agents' parameters $\{Q_i,R_i,x_i^d\}_{i=1}^L$ are unknown to the planner, we begin with the full information scenario in which these values are known.
With full information,  we show that $\lim_{H\to \infty}\mathcal{R}(H)=0$ and $\lim_{H\to \infty}\mathcal{B}(H)=0$ can be achieved by a family of incentive sequences
$\{\boldsymbol{P}_h\}_{h\geq 1}$, and the following two equalities hold simultaneously for every $h\geq 1$:
\begin{equation}\label{eq:equilibrium_social_cost}   \mu^\diamond(\xi_h,\boldsymbol{P}_h)=\boldsymbol{U}^*_h,
\end{equation}
 and
\begin{equation}\label{eq:min_payment_incentive}
\sum_{i=1}^{L}
\sum_{t=0}^{T-1}
\gamma_i\bigl(x_h(t),u_{i,h}(t), t\bigr)=0. 
\end{equation}

 The following proposition provides an explicit characterization of 
 the solutions to \eqref{eq:equilibrium_social_cost} and \eqref{eq:min_payment_incentive}. The proof can be found
 in Appendix~\ref{sec:proofs}.


\begin{proposition}\label{pro1}
Consider 
the socially optimal strategy profile defined in
\eqref{social_solution}. Suppose that Assumption~\ref{assum_ne} holds.
For each season $h\geq 1$, given private cost parameters
$\{Q_i,R_i,x_i^d\}_{i=1}^L$ and the initial state $\xi_h$, an incentive parameter profile $\boldsymbol{P}_h$ satisfies \eqref{eq:equilibrium_social_cost} and \eqref{eq:min_payment_incentive} if and only if
\begin{equation}\nonumber
\boldsymbol{P}_h
\in
-\mathcal H_h^{*\top}
\bigl(\mathcal H^*_h\mathcal H_h^{*\top}\bigr)^{-1}
b_h
+
\ker\bigl(\mathcal H^*_h\bigr),
\end{equation}
where 
\begin{equation}\label{eq:optimal_p}
\begin{aligned}
&\mathcal H^*_h=
\begin{bmatrix}
\mathcal H\\
\mathbf 1_{LT}^{\top}\mathcal G^*_h
\end{bmatrix},
\;
b_h=
\begin{bmatrix}
M \boldsymbol{U}^*_h+\eta_h\\
0
\end{bmatrix},
\end{aligned}
\end{equation}
with $M$ and $\mathcal H$ defined in Lemma~\ref{lem_ne}, and
\begin{equation}\label{eq:eta_label}
  \begin{gathered}
     \mathcal G^*_h
=
\operatorname{diag}\!\bigl\{
\mathcal G_{1,h}^*,\ldots,\mathcal G_{L,h}^*
\bigr\},\; \eta_h
=
\begin{bmatrix}
\eta_{1,h}^\top&
\cdots &
\eta_{L,h}^\top
\end{bmatrix}^{\top}, \\
\mathcal G_{i,h}^*
=
\operatorname{diag}\!\bigl\{
g_{i,h}^*(0),\ldots,g_{i,h}^*(T-1)
\bigr\},\\
g_{i,h}^*(t)
=
\begin{bmatrix}
x_h^*(t)^\top, &
u_{i,h}^*(t)^\top, &
1
\end{bmatrix},\\
\eta_{i,h}
=
\Gamma_i^\top
[
(I_T\otimes Q_i)\bar A\xi_h
-\mathbf 1_T\otimes(Q_i x_i^d)].
   \end{gathered}
\end{equation}
\end{proposition}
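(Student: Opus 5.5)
The plan is to turn the two conditions \eqref{eq:equilibrium_social_cost} and \eqref{eq:min_payment_incentive} into a single linear system $\mathcal H^*_h\boldsymbol P_h=-b_h$ in the unknown $\boldsymbol P_h$. Then it suffices to show that $\mathcal H^*_h$ has full row rank, and to read off the affine solution set as the minimum-norm particular solution plus $\ker(\mathcal H^*_h)$.

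The first condition is handled by Lemma~\ref{lem_ne}. Under Assumption~\ref{assum_ne}, $M$ is nonsingular, and $\mu^\diamond(\xi_h,\boldsymbol P_h)=-M^{-1}(\mathcal H\boldsymbol P_h+\eta_h)$, where $\eta_h$ denotes $\eta$ evaluated at $\xi_h$ as in \eqref{eq:eta_label}. Hence \eqref{eq:equilibrium_social_cost} holds if and only if
\[
\mathcal H\boldsymbol P_h=-(M\boldsymbol U^*_h+\eta_h).
\]

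For the budget condition, I use that \eqref{eq:equilibrium_social_cost} forces the realized trajectory to coincide with the social one: $x_h(t)=x_h^*(t)$ and $u_{i,h}(t)=u^*_{i,h}(t)$. By \eqref{incentive}, each payment is then $\gamma_i=g^*_{i,h}(t)p_{i,h}(t)$. Summing over $i$ and $t$ gives $\mathbf 1_{LT}^\top\mathcal G^*_h\boldsymbol P_h$, because $\mathcal G^*_h$ is block diagonal with the row blocks $g^*_{i,h}(t)$, ordered consistently with the stacking of $\boldsymbol P_h$. Thus, given the first condition, \eqref{eq:min_payment_incentive} is the linear equation $\mathbf 1_{LT}^\top\mathcal G^*_h\boldsymbol P_h=0$. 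Stacking the two equations gives $\mathcal H^*_h\boldsymbol P_h=-b_h$ with $\mathcal H^*_h$ and $b_h$ as in \eqref{eq:optimal_p}. The converse direction uses the same substitution: any solution of the stacked system first yields $\mu^\diamond=\boldsymbol U^*_h$, and the budget row then gives zero net payment along that trajectory.

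The main obstacle is proving that $\mathcal H^*_h\mathcal H^{*\top}_h$ is invertible, i.e.\ that $\mathcal H^*_h$ has full row rank. Once this holds, the system is consistent, $-\mathcal H^{*\top}_h(\mathcal H^*_h\mathcal H^{*\top}_h)^{-1}b_h$ is a particular solution, and the full solution set is that vector plus $\ker(\mathcal H^*_h)$.

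To establish full row rank I would use the explicit form of $\mathcal H$ from the appendix. The action-rate component of $p_{i,h}(t)$ enters agent $i$'s first-order condition at time $t$ directly, with identity coefficient. Hence $\mathcal H$ contains, after a column permutation, a block $I_{mLT}$ on those coordinates, so $\mathcal H$ has full row rank $mLT$.

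It remains to show that the last row $\mathbf 1^\top\mathcal G^*_h$ is not in the row space of $\mathcal H$. I would exhibit a direction $\boldsymbol P\in\ker\mathcal H$ with $\mathbf 1^\top\mathcal G^*_h\boldsymbol P\neq0$, and the natural choice is the constant offsets. The offset entries enter only the constant term of $\gamma_i$, so they do not affect any first-order condition, and their columns in $\mathcal H$ are zero. Perturbing a single offset entry therefore lies in $\ker\mathcal H$ and changes the budget by exactly that amount, because the last entry of $g^*_{i,h}(t)$ is $1$.

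This proves $\operatorname{rank}\mathcal H^*_h=mLT+1$ and completes the argument. I expect the only delicate point to be checking the appendix definition of $\mathcal H$, to confirm that the offset columns are indeed zero and that the stacking orders of $\mathcal H$ and $\mathcal G^*_h$ agree.
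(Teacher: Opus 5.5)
Your proposal is correct and follows essentially the same route as the paper. You reduce both conditions to the stacked system $\mathcal H^*_h\boldsymbol P_h=-b_h$ via Lemma~\ref{lem_ne}, obtain full row rank of $\mathcal H$ from the identity block on the action-rate coordinates, and show the budget row is independent of the rows of $\mathcal H$ because the offset columns vanish in $\mathcal H$ but equal one in $\mathbf 1_{LT}^\top\mathcal G^*_h$; your kernel-vector phrasing of that last step is just the dual of the paper's column argument.
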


Proposition \ref{pro1} answers how one designs an incentive sequence
incentive $\{\boldsymbol{P}_h\}_{h\geq 1}$ for zero regret and zero net budget,
when the planner knows the agents' preference parameters $\{Q_i,R_i,x_i^d\}_{i=1}^L$.
This result serves as a benchmark and provides the basis for the subsequent adaptive incentive design under
unknown preference parameters.

To formulate the associated estimation problem, let
$\bar x_i^d=Q_i x_i^d$ and
$d_\theta=\frac{m(m+1)}{2}+\frac{n(n+1)}{2}+n$,
and define the parameter vectors
\[
\theta_i
=
\begin{bmatrix}
\operatorname{vech}(R_i)^\top &
\operatorname{vech}(Q_i)^\top &
(\bar x_i^d)^\top
\end{bmatrix}^{\top}
\in \mathbb R^{d_\theta},
\]
\[
\theta
=
\begin{bmatrix}
\theta_1^\top & \cdots & \theta_L^\top
\end{bmatrix}^{\top}
\in \mathbb R^{L d_\theta}.
\] 

\section{Adaptive Preference Learning}\label{sec4}
In this section, we study the online estimation of the
preference parameter vector $\theta$ from noisy observations
of the agents' strategic behavior. 

We first use Pontryagin's
maximum principle to express the agents' equilibrium conditions
as a linear regression in $\theta$. To this end, define the
selection matrices
\begin{equation*}
E_x=
\begin{bmatrix}
I_{nT} & 0_{nT\times LmT}
\end{bmatrix},\quad \text{and}    
\end{equation*}
\[
E_{u_i}
=
\begin{bmatrix}
0_{mT\times (nT+(i-1)mT)}
&
I_{mT}
&
0_{mT\times (L-i)mT}
\end{bmatrix}.
\] 
Since only noisy observations $\boldsymbol{Z}_h$ are available, the resulting
regression relation contains a noise-dependent perturbation, as
made explicit in the following lemma.

\begin{lemma}\label{lem31}
Suppose that Assumption~\ref{assum_ne} holds. For each season \(h\geq 1\) and
each agent \(i\in[L]\), define
\begin{equation}\label{phi}
\begin{aligned}
\phi_{i,h}
&=
\Bigl[
\mathcal{K}_{m,T}\!\left(E_{u_i}\boldsymbol{Z}_h\right)\mathscr{D}_m,
\\[-1mm]
&\qquad \qquad
\Gamma_i^{\top}
\mathcal{K}_{n,T}\!\left(E_x\boldsymbol{Z}_h\right)\mathscr{D}_n,
\;\;
-\Gamma_i^{\top}(\mathbf{1}_T\otimes I_n)
\Bigr],
\\[1mm]
\delta_{i,h}
&=
\Bigl[
\mathcal{K}_{m,T}\!\left(E_{u_i}\boldsymbol{\epsilon}_h\right)
\mathscr{D}_m,\;\; \Gamma_i^{\top}
\mathcal{K}_{n,T}\!\left(E_x\boldsymbol{\epsilon}_h\right)
\mathscr{D}_n,
\;\;
0
\Bigr],\\[1mm]
y_{i,h}
&=
-\frac{1}{2}
\left(\Gamma_i^{\top}S_x+S_u\right)\boldsymbol{p}_{i,h},
\end{aligned}
\end{equation}
where the operator
\(\mathcal K_{\cdot,\cdot}(\cdot)\) and \(\Gamma_i\) are defined in \eqref{eq:operator} and \eqref{eq:gamma}, respectively, and 
$\mathscr{D}_m\in\mathbb R^{m^2\times m(m+1)/2}$, $\mathscr{D}_n\in\mathbb R^{n^2\times n(n+1)/2}$ denote the duplication matrices. 
Then the agents' state and
IONE trajectories satisfy
\begin{equation}\label{regression}
(\phi_{h}-\delta_h)\theta=y_h,   
\end{equation}
where $\phi_h=\operatorname{diag}\{\phi_{1,h},\cdots \phi_{L,h}\},\  \delta_h=\operatorname{diag}\{\delta_{1,h},\cdots \delta_{L,h}\}$, and $ y_h=[y_{1,h}^{\top},\cdots y_{L,h}^{\top}]^{\top}.
$ 
\end{lemma}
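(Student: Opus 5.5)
The plan is to write the first-order (stationarity) conditions of each agent's best response in stacked form. These conditions are linear in the preference parameters $(R_i, Q_i, \bar x_i^d)$. I would then rewrite the matrix–vector products as (duplication-matrix)$\times$(half-vectorized parameter) products using the operator $\mathcal K_{\cdot,\cdot}$. The noise enters only through the linear operator $\mathcal K$, so it splits off additively as $\delta_h$.

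\textbf{Step 1 (stacked dynamics and cost).} Iterating \eqref{eq2}, the state trajectory of season $h$ is $\boldsymbol X_h=\bar A\xi_h+\sum_{j=1}^L\Gamma_j\boldsymbol u_{j,h}$, with $\bar A$ and the block-Toeplitz $\Gamma_j$ (built from $\mathcal L_T(A,B_j)$) as in \eqref{eq:gamma} of the Appendix. The incentive term is $\sum_t\gamma_i=\boldsymbol p_{i}^\top(S_x^\top\boldsymbol X+S_u^\top\boldsymbol u_i)+\text{const}$, where $S_x,S_u$ select the state- and action-rate components of $\boldsymbol p_i$. Hence $J_i$ is a quadratic function of $\boldsymbol u_i$ with Hessian $2(\Gamma_i^\top(I_T\otimes Q_i)\Gamma_i+I_T\otimes R_i)\succ0$.

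\textbf{Step 2 (stationarity).} By Lemma~\ref{lem_ne}, $\boldsymbol U_h=\mu^\diamond(\xi_h,\boldsymbol P_h)$ is the IONE. Hence $\boldsymbol u_{i,h}$ minimizes the convex function $J_i(\xi_h,\cdot,\boldsymbol u_{-i,h},\boldsymbol p_{i,h})$, and its gradient vanishes:
\[
2(I_T\otimes R_i)\boldsymbol u_{i,h}+2\Gamma_i^\top(I_T\otimes Q_i)\boldsymbol X_h-2\Gamma_i^\top(\mathbf 1_T\otimes \bar x_i^d)+(\Gamma_i^\top S_x+S_u)\boldsymbol p_{i,h}=0 .
\]
This is exactly the discrete Pontryagin condition.

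\textbf{Step 3 (linearization in $\theta_i$).} For $z=[z_1^\top,\dots,z_T^\top]^\top$ and symmetric $X$, we have $Xz_t=(z_t^\top\otimes I)\operatorname{vec}(X)=(z_t^\top\otimes I)\mathscr D\operatorname{vech}(X)$. Hence
\[
(I_T\otimes R_i)\boldsymbol u_{i,h}=\mathcal K_{m,T}(\boldsymbol u_{i,h})\mathscr D_m\operatorname{vech}(R_i),\qquad
(I_T\otimes Q_i)\boldsymbol X_h=\mathcal K_{n,T}(\boldsymbol X_h)\mathscr D_n\operatorname{vech}(Q_i).
\]
Dividing Step 2 by $2$ gives $\bar\phi_{i,h}\theta_i=y_{i,h}$. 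Here $\bar\phi_{i,h}$ is $\phi_{i,h}$ with $\boldsymbol Z_h$ replaced by the noiseless trajectory $[\boldsymbol X_h^\top,\boldsymbol U_h^\top]^\top$.

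\textbf{Step 4 (noise split).} The map $z\mapsto\mathcal K_{s,r}(z)$ is linear, and the selections satisfy $E_x\boldsymbol Z_h=\boldsymbol X_h+E_x\boldsymbol\epsilon_h$ and $E_{u_i}\boldsymbol Z_h=\boldsymbol u_{i,h}+E_{u_i}\boldsymbol\epsilon_h$. Since the third block of $\phi_{i,h}$ is deterministic, $\bar\phi_{i,h}=\phi_{i,h}-\delta_{i,h}$. Stacking over $i$ with the block-diagonal structure gives \eqref{regression}.

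The main obstacle is bookkeeping rather than any conceptual difficulty. The key points are:
\begin{itemize}
\item matching the conventions of $\Gamma_i$, $S_x$ and $S_u$ in the Appendix, in particular that $\Gamma_i$ maps into $\boldsymbol X$ over times $0,\dots,T-1$, so that $x(0)=\xi$ carries no control dependence;
\item checking the factor $\tfrac12$ and the sign in $y_{i,h}$;
\item making sure the Kronecker ordering in $\mathcal K$ is consistent with $\operatorname{vec}$ for symmetric matrices.
\end{itemize}
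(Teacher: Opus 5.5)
Your proposal is correct and follows essentially the same route as the paper. Like the paper, you obtain agent $i$'s stationarity condition $\bar R_i\boldsymbol u_{i,h}+\Gamma_i^\top\bar Q_i\boldsymbol X_h-\Gamma_i^\top(\mathbf 1_T\otimes\bar x_i^d)=-\tfrac12(\Gamma_i^\top S_x+S_u)\boldsymbol p_{i,h}$, rewrite it as $\bar\phi_{i,h}\theta_i=y_{i,h}$ using the $\mathcal K$/duplication-matrix identities, and split off $\delta_{i,h}$ by linearity of $\mathcal K$. The one difference is how you reach that stationarity condition: you differentiate the stacked cost directly, exactly as in the paper's proof of Lemma~\ref{lem_ne} (eq.~\eqref{eq:foc_i}), whereas the paper goes through the costate recursion and eliminates it with the identity $\Gamma_i^\top=(I_T\otimes B_i^\top)\mathcal A^{-1}(\mathcal J_T\otimes I_n)$. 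Your route is slightly more direct and yields the same equation.
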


Lemma~\ref{lem31} shows that the regression vector in the perturbed linear regression model \eqref{regression} is corrupted by measurement noise. Therefore, the canonical techniques from the least-squares regression lead to a biased preference estimate. The
following result proposes a bias-corrected loss whose conditional expectation is
minimized by the true parameter.
\begin{remark}\label{cor1}
For each season
\(h\geq 1\) and any  $\mathcal{F}_{h-1}-$measurable \(\boldsymbol{P}_h\), the true cost parameter $\theta$
satisfies 
    \begin{equation}\label{eq:loss}
    \theta\in
    \operatorname*{argmin}_{\vartheta\in \mathbb{R}^{d_\theta L}}\mathbb{E}\left[\|y_h-\phi_h \vartheta\|^2-\vartheta^{\top}\Xi \vartheta\mid \mathcal{F}_{h-1}\right],
    \end{equation}
    where
    \begin{equation}\label{eq:xi}
        \Xi=\mathbb{E}[\delta_h^{\top}\delta_h\mid\mathcal{F}_{h-1}].
    \end{equation}
Specifically, by the explicit expression of $\delta_h$ in \eqref{phi} and Assumption~\ref{assum_noise},
$\Xi$ can be computed explicitly as
$\Xi=\operatorname{diag}\{\Xi_{1},\cdots \Xi_{L}\}$, and, for each $i=1,\ldots,L$,
{\small
\begin{equation}\label{xi}
\begin{aligned}
\Xi_i=\sigma^2
\begin{bmatrix}
T\mathscr{D}_m^{\top}\mathscr{D}_m & 0 & 0\\
0 & \mathscr{D}_n^{\top}(I_n \otimes \operatorname{btr}(\Gamma_i\Gamma_i^\top))\mathscr{D}_n & 0\\
0&0&0_{n\times n}
\end{bmatrix}.
\end{aligned}
\end{equation}}
Here, partition $\Gamma_i\in\mathbb R^{nT\times mT}$ into
$T\times T$ blocks as $\Gamma_i=[\Gamma_{i,ts}]_{t,s=1}^{T}$,
where $\Gamma_{i,ts}\in\mathbb R^{n\times m}$, and define $\operatorname{btr}(\Gamma_i\Gamma_i^\top)
=
\sum_{t=1}^{T}\sum_{s=1}^{T}
\Gamma_{i,ts}\Gamma_{i,ts}^\top
\in\mathbb R^{n\times n}.$
\end{remark}

The regularization term in the loss \eqref{eq:loss} is introduced to compensate for the errors-in-variables bias caused by the observation noise in the regression vector.
According to Remark \ref{cor1}, we construct the following modified recursive LS algorithm to estimate $\theta$:
\begin{equation}\label{alg}
 \begin{aligned}
&\hat{\theta}_{h+1}=\hat{\theta}_{h}+\Lambda_{h+1}^{\dagger}\phi_h^{\top}(y_h-\phi_h\hat{\theta}_{h})+\Lambda_{h+1}^{\dagger}\Xi\hat{\theta}_{h},\\
& \Lambda_{h+1}=\Lambda_{h}+\phi_h^\top\phi_h-\Xi,  
 \end{aligned}
 \end{equation}
where $\hat{\theta}_{h+1}$ is the estimate update of $\theta$ at
	the end of season $h$; $\phi_{h}, y_h,$ and $\Xi$  are defined as in \eqref{phi} and \eqref{xi}; $\Lambda_h^{\dagger}$ denotes the Moore-Penrose pseudoinverse of $\Lambda_h$; the initial value $\hat{\theta}_{1}\in \mathbb{R}^{d_\theta L}$ and $\Lambda_1\succ0$  can be chosen arbitrarily.


We next establish the consistency of the recursive estimator \eqref{alg}. The result gives
an almost-sure estimation error bound under an excitation condition on the
information matrix \(\Lambda_h\), which allows us to avoid the assumption of the Persistent Excitation (PE) condition.
\begin{theorem}\label{thm1}
Suppose that Assumptions~\ref{assum_ne}--\ref{assum1} hold, and consider
the noisy regression model \eqref{regression}. Let $\{\boldsymbol{P}_h\}_{h\ge 1}$ be any
incentive sequence such that $\boldsymbol{P}_h$ is $\mathcal{F}_{h-1}$-measurable. If
\begin{equation}\label{growth_phi}
\|\phi_h\|=o\bigl(h^{\alpha}\bigr),\; a.s.,\ h\to\infty,
\end{equation}
for some $\alpha\in(0,\tfrac{1}{2})$, and as $h\to\infty,$
\begin{equation}\label{excitation}
\bigl(\lambda_{\max}(\Lambda_h)\,\log\log\lambda_{\max}(\Lambda_h)\bigr)^{1/2}
=o\bigl(\lambda_{\min}(\Lambda_h)\bigr),\; a.s.,
\end{equation}
then the recursive estimate $\hat{\theta}_h$ generated by \eqref{alg} is
strongly consistent, and satisfies as $h\to\infty,$
\begin{equation}\label{thm1_convergence}
\left\|\theta-\hat{\theta}_h\right\|^2=O\left(\frac{\lambda_{\max}\{\Lambda_h\}\log \log (\lambda_{\max}\{\Lambda_h\})}{(\lambda_{\min}\{\Lambda_h\})^2}\right),\ a.s.    
\end{equation}
\end{theorem}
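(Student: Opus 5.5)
Since $\Lambda_{h}$ becomes positive definite (eventually, under \eqref{excitation}), I will first rewrite \eqref{alg} in closed form. From $\Lambda_{h+1}\hat\theta_{h+1}=\Lambda_{h+1}\hat\theta_h+\phi_h^\top(y_h-\phi_h\hat\theta_h)+\Xi\hat\theta_h$ and $\Lambda_{h+1}=\Lambda_h+\phi_h^\top\phi_h-\Xi$, one gets $\Lambda_{h+1}\hat\theta_{h+1}=\Lambda_h\hat\theta_h+\phi_h^\top y_h$. This holds whenever $\Lambda_{h+1}$ is invertible; for the finitely many initial steps where it may not be, I only need the identity from some random index $h_0$ on, and the early terms contribute an $O(1)$ constant. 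Hence
\[
\Lambda_{h+1}\hat\theta_{h+1}=\Lambda_{1}\hat\theta_1+\sum_{k=1}^{h}\phi_k^\top y_k+O(1).
\]
Substituting $y_k=(\phi_k-\delta_k)\theta$ from Lemma~\ref{lem31} and $\sum_k\phi_k^\top\phi_k=\Lambda_{h+1}-\Lambda_1+h\Xi$ gives
\[
\Lambda_{h+1}(\hat\theta_{h+1}-\theta)=O(1)+\sum_{k=1}^h w_k,\qquad w_k:=\Xi\theta-\phi_k^\top\delta_k\theta .
\]

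Next I will show that $\{w_k,\mathcal F_k\}$ is a martingale difference sequence. Write $\phi_k=\bar\phi_k+\delta_k$, where $\bar\phi_k$ is the noise-free regressor; it is $\mathcal F_{k-1}$-measurable because $\xi_k$ and $\boldsymbol P_k$ are, and the IONE trajectory is a deterministic function of them. Then $\phi_k^\top\delta_k\theta=\bar\phi_k^\top\delta_k\theta+\delta_k^\top\delta_k\theta$. Since $\delta_k$ is linear in $\boldsymbol\epsilon_k$, the first term has zero conditional mean, and the second has conditional mean $\Xi\theta$ by \eqref{eq:xi} and Assumption~\ref{assum_noise}. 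The conditional second moment is $O(\|\bar\phi_k\|^2+1)=O(\|\phi_k\|^2+\|\delta_k\|^2)$. Because $\delta_k$ is quadratic in the noise, I need $\beta/2>2$ conditional moments, and this is exactly where $\beta>4$ enters.

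The key step is a law-of-the-iterated-logarithm bound for the vector martingale $S_h=\sum_{k\le h}w_k$, namely
\[
\|S_h\|^2=O\big(s_h\log\log s_h\big),\qquad s_h:=\sum_{k\le h}\mathbb E[\|w_k\|^2\mid\mathcal F_{k-1}].
\]
I will then argue that $s_h=O(\lambda_{\max}(\Lambda_h))$. The leading part is $\sum_k\|\bar\phi_k\|^2\lesssim\operatorname{tr}\sum_k\phi_k^\top\phi_k$. The cross terms $\bar\phi_k^\top\delta_k$ are themselves martingale sums, which I bound by the strong law for martingales. The $h\Xi$ correction and the quadratic noise terms are $O(h)$, and I need $h=O(\lambda_{\max}(\Lambda_h))$; I expect to get this from $\sum_k\phi_k^\top\phi_k\succeq c\,h\,\Xi$-type growth of the noise energy, or else to handle it by showing $\lambda_{\max}$ grows at least linearly. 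The growth condition \eqref{growth_phi} with $\alpha<1/2$ is used to invoke the martingale LIL (e.g. Stout's or Chow's theorem). That theorem needs the normalized increments $\|w_k\|/\sqrt{s_k\log\log s_k}$ to vanish, and $o(h^\alpha)$ growth together with the $\beta$-moment bound and Borel--Cantelli gives this.

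Finally,
\[
\|\hat\theta_{h+1}-\theta\|\le\lambda_{\min}(\Lambda_{h+1})^{-1}\big(O(1)+\|S_h\|\big),
\]
which yields \eqref{thm1_convergence}. Consistency then follows from \eqref{excitation}. The main obstacle is the LIL step: $\Lambda_h$ is not the conditional variance of $S_h$, it is random, and it contains the nonmonotone $-\Xi$ correction. If controlling $s_h$ by $\lambda_{\max}(\Lambda_h)$ proves delicate, I would fall back on the standard self-normalized bound $\|S_h\|^2=O(\lambda_{\max}\log\log\lambda_{\max})$ componentwise, applying a scalar martingale LIL to each coordinate of $S_h$.
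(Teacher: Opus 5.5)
Your skeleton is the paper's. You use the identity $\Lambda_{h+1}\hat\theta_{h+1}=\Lambda_h\hat\theta_h+\phi_h^\top y_h$ for large $h$, and you split $\phi_k^\top\delta_k\theta-\Xi\theta$ into the cross martingale $\bar\phi_k^\top\delta_k\theta$ and the quadratic-noise martingale $(\delta_k^\top\delta_k-\Xi)\theta$. You then bound each by an LIL-type result and divide by $\lambda_{\min}(\Lambda_{h+1})$.

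The gap is the step you flag but leave open: the linear lower bound $h=O(\lambda_{\max}(\Lambda_h))$, together with $g_h:=\sum_{k\le h}\|\bar\phi_k\|^2=O(\lambda_{\max}(\Lambda_{h+1}))$. You need it for three reasons. The quadratic-noise sum is only $O(\sqrt{h\log\log h})$. The terms $\sum_k\|\delta_k\|^2$ and $h\Xi$ that enter your bound on $s_h$ are of order $h$. And without it you only get $\|\hat\theta_h-\theta\|^2=O((\lambda_{\max}+h)\log\log(\lambda_{\max}+h)/\lambda_{\min}^2)$. That is not \eqref{thm1_convergence}, and it need not vanish under \eqref{excitation}: take $\lambda_{\max}\sim h^{0.6}$ and $\lambda_{\min}\sim h^{0.5}$.

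Your first suggested mechanism cannot supply this bound. Since $\Lambda_{h+1}=\Lambda_1+\sum_{k\le h}(\phi_k^\top\phi_k-\Xi)$ and $\mathbb E[\delta_k^\top\delta_k\mid\mathcal F_{k-1}]=\Xi$, the noise energy is cancelled exactly by the $-\Xi$ correction. It contributes only a zero-mean fluctuation to $\Lambda_h$, never linear growth.

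The missing idea is structural. In $\phi_{i,h}$, the block multiplying $\bar x_i^d$ is the constant, noise-free matrix $-\Gamma_i^\top(\mathbf 1_T\otimes I_n)$. It is nonzero because $B_i\neq0$, which follows from controllability in Assumption~\ref{assum2}. Choose a unit vector $q_i$ with $\Gamma_i^\top(\mathbf 1_T\otimes I_n)q_i\neq0$, and let $a_i$ be the unit vector supported on that block. Then $\delta_h a_i=0$, hence $a_i^\top\Xi a_i=0$ and $\Xi a_i=0$. This gives $a_i^\top\Lambda_{h+1}a_i=a_i^\top\Lambda_1 a_i+h\|\Gamma_i^\top(\mathbf 1_T\otimes I_n)q_i\|^2$, which is the linear growth of $\lambda_{\max}$.

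The same fact gives $\|\bar\phi_h\|\ge c>0$, so $g_h\ge ch$. You need this in the LIL step too. The paper applies Lemma~\ref{lem:martingale} separately to the two pieces: with predictable weights $\bar\phi_k$ for the cross term, and with the identity for $(\delta_k^\top\delta_k-\Xi)\theta$, which has $\beta/2>2$ conditional moments. For the cross term, the $\sqrt{\log\log}$ rate needs $\|\bar\phi_h\|=o(g_h^{\bar\alpha})$ with $\bar\alpha<1/2$. The growth condition \eqref{growth_phi} gives only $o(h^{\bar\alpha})$, and it converts to the needed form precisely because $h=O(g_h)$.

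Your plan of a single LIL normalized by the conditional variance $s_h$, with the increment condition checked via Borel--Cantelli, has to make the same conversion. The separate, predictably weighted application is the cleaner way to do it.
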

\begin{remark}
Theorem~\ref{thm1} establishes the strong consistency of the proposed estimators under the excitation condition~\eqref{excitation} on the trajectory \(\{\boldsymbol{Z}_h\}_{h\geq 1}\). Compared to the identifiability conditions in inverse optimal control and games
\cite{inverse_control, zhang2021discrete, QU20232134, inverse_game_Observations,Inverse_Reinforcement}, which typically require the trajectory 
\(\{\boldsymbol{Z}_h\}_{h\geq 1}\) to be i.i.d., condition \eqref{excitation} allows temporal dependence and nonstationarity, provided sufficient information accumulates in all parameter directions. This accommodates the temporal dependence induced by the learning-and-decision feedback loop considered here, where past observations determine the current parameter estimates, which guide the design of incentives that, in turn, shape future observations.
\end{remark}

Theorem~\ref{thm1} specifies the excitation required for consistent preference learning but does not prescribe how to generate it. The next section addresses this active-learning task by designing incentives that produce sufficiently informative trajectories while achieving sublinear social-cost regret.

\section{Adaptive Incentive Design}\label{sec5}
In this section, we develop an adaptive incentive design algorithm based on the
online preference learning algorithm proposed in Section~\ref{sec4}.  The resulting Algorithm 1 solves Problem 1 and gives explicit rates for the almost-sure convergence of the average regret and the average net-budget imbalance to zero.

We first recover the component estimates of $R_i$, $Q_i$, and $\bar x_i^d$ from $\hat\theta_h$.
Write $\hat\theta_h
=
[\hat\theta_{1,h}^{\top},\ldots,\hat\theta_{L,h}^{\top}]^{\top}$ where
$\hat\theta_{i,h}\in\mathbb R^{d_\theta}.$ For each \(i\in[L]\), partition $\hat\theta_{i,h}\in\mathbb{R}^{d_\theta}$ as
$
\hat\theta_{i,h}=\big[(\hat\theta_{i,h}^{(1)})^{\top},(\hat\theta_{i,h}^{(2)})^{\top},(\hat\theta_{i,h}^{(3)})^{\top}\big]^{\top},
$
where $\hat\theta_{i,h}^{(1)}\in\mathbb{R}^{m(m+1)/2}$, $\hat\theta_{i,h}^{(2)}\in\mathbb{R}^{n(n+1)/2}$, and $\hat\theta_{i,h}^{(3)}\in\mathbb{R}^n$. The corresponding estimates of the cost matrices $R_i$, $Q_i$, and the target state $\bar x_i^d$ at season $h$ are denoted by $\hat R_{i,h}$, $\hat Q_{i,h}$, and $\hat {\bar x}_{i,h}^d$, respectively, and satisfy
\[
\operatorname{vech}(\hat R_{i,h})=\hat\theta_{i,h}^{(1)},
\quad
\operatorname{vech}(\hat Q_{i,h})=\hat\theta_{i,h}^{(2)},
\quad
\hat {\bar x}_{i,h}^d=\hat\theta_{i,h}^{(3)}.
\]

Given the estimates \((\hat R_{i,h},\hat Q_{i,h}, \hat {\bar x}_{i,h}^d)\) at
season \(h\), we construct the adaptive incentive according to Proposition~\ref{pro1} following the certainty-equivalence principle, by replacing the unknown cost parameter $\theta$ with its latest estimates, i.e., let
\begin{equation}\label{ce}
\bar{\boldsymbol{P}}_h=-\mathcal H_h^{*\top}(\mathcal H^*_h\mathcal H_h^{*\top})^{-1}\ \hat{b}_h,
\end{equation}
where $\mathcal{H}_h^{*}$ is given in \eqref{eq:optimal_p}, and 
\[
\hat{b}_h
=
\begin{bmatrix}
\hat{M}_h\boldsymbol{U}^*_h+\hat{\eta}_h\\
0
\end{bmatrix}.
\]
Here, $\hat{M}_h$ and $\hat{\eta}_h$ are obtained from $M$ and $\eta_h$ in \eqref{uniquness} and \eqref{eq:eta_label}, respectively, by replacing the unknown preference parameters $\{Q_i,R_i,x_i^d\}_{i=1}^{L}$ with their estimates, i.e.,
\[
\begin{gathered}
\hat{\eta}_h =
\begin{bmatrix}
\hat{\eta}_{1,h}^{\top} & \cdots & \hat{\eta}_{L,h}^{\top}
\end{bmatrix}^{\top},
\\[3pt]
\hat{\eta}_{i,h} =
\Gamma_i^{\top}
\Bigl[
\hat{\bar Q}_{i,h}\bar A\,\xi_h
-\mathbf{1}_T\otimes\hat{\bar x}_{i,h}^{d}
\Bigr],
\quad i\in[L],
\end{gathered}
\]
and
\begin{equation*}
\begingroup
\setlength{\arraycolsep}{2pt}
\small
\hat M_h
=
\begin{bmatrix}
\Gamma_1^\top \hat{\bar Q}_{1,h}\Gamma_1
+\hat{\bar R}_{1,h}
&
\cdots
&
\Gamma_1^\top \hat{\bar Q}_{1,h}\Gamma_L
\\
\vdots
&
\ddots
&
\vdots
\\
\Gamma_L^\top \hat{\bar Q}_{L,h}\Gamma_1
&
\cdots
&
\Gamma_L^\top \hat{\bar Q}_{L,h}\Gamma_L
+\hat{\bar R}_{L,h}
\end{bmatrix}
\endgroup
\end{equation*}
where $\hat{\bar Q}_{i,h}
=
I_T\otimes\hat Q_{i,h},$ $\hat{\bar R}_{i,h}
=
I_T\otimes\hat R_{i,h}$, for $i\in[L]$.

To ensure boundedness of the designed incentive, we project the
certainty-equivalence output $\bar{\boldsymbol{P}}_h$ given by \eqref{ce} onto a slowly
growing ball
\begin{equation}\label{certain}
\hat{\boldsymbol{P}}_h=\Pi_{\mathcal{P}_h}(\bar{\boldsymbol{P}}_h),
\end{equation}
where $\mathcal P_h
=
\{\boldsymbol{P}\in\mathbb R^{(m+n+1)LT}:\|\boldsymbol{P}\|\le \log (h+1)\}$. The logarithmic radius is loose enough
that the projection is inactive once $\bar{\boldsymbol{P}}_h$ stabilizes, yet tight
enough to control the influence of inaccurate early-season estimates.

The projected certainty-equivalent incentive $\hat{\boldsymbol{P}}_h$ uses the current preference estimate to coordinate agents, but does not by itself guarantee informative data for subsequent learning. To actively generate such information, we perturb the incentive
\begin{equation}\label{adaptive_incentive}
\begin{aligned}
\boldsymbol{P}_h=\hat{\boldsymbol{P}}_h+\gamma_hv_h, 
\end{aligned}
\end{equation}
where $\gamma_h=h^{-\gamma}$ is a shrinking probing scale, and the rate $\gamma \in [0,\frac{1}{4n})$. 
The sequence \(\{v_h\}_{h\geq 1}\)  is an i.i.d. sequence, and for each $h\geq1$, $v_h$ is independent of $\sigma\!\left(
\xi_1,\ldots,\xi_h,\,
\boldsymbol{\epsilon}_1,\ldots,\boldsymbol{\epsilon}_{h-1},\,
v_1,\ldots,v_{h-1}
\right).$ Moreover, each \(v_h\in\mathbb R^{(m+n+1)LT}\) satisfies for some $\bar{v}>0$, \footnote{For definiteness in the present setting,  for $h\geq 0$, one may take
\[
\mathcal F_h
=
\sigma\left(\{\xi_j:1\leq j\leq h+1 \}\cup
\{\epsilon_j:1\leq j\leq h\}
\cup
\{v_j:1\leq j\leq h+1\}
\right)
\]}
\begin{equation}\label{probing}
\mathbb E[v_h]=0,\quad
   \mathbb{E}[v_hv_h^{\top}]=I,\ \ \|v_h\|\leq \bar{v},\quad a.s.,\ \ \forall h\geq 1. 
\end{equation}
The two terms in \eqref{adaptive_incentive} serve complementary purposes: $\hat{\boldsymbol{P}}_h$ exploits the current estimate for social coordination, whereas $\gamma_hv_h$ probes agents’ responses to support preference learning. Because the planner chooses payments rather than agents’ actions, this probing must generate excitation indirectly through the open-loop Nash response and the system dynamics. 

The complete adaptive incentive design algorithm for the LQ dynamic games is summarized in
Algorithm~\ref{alg:adaptive_incentive}.
\begin{algorithm}[htbp]
\caption{Adaptive incentive design for LQ dynamic games (LQ-AID)}
\label{alg:adaptive_incentive}
\begin{algorithmic}[1]
\STATE \textbf{Initialization:} Choose initial parameters $ \hat{\theta}_1$ and $\Lambda_1$ such that $\hat\theta_1\in \mathbb{R}^{d_{\theta}L}$ and
$\Lambda_1\succ0.$

\FOR{\(h=1,2,\ldots\)}
    \STATE The planner observes the initial state \(\xi_h\), and designs the incentive based on the current estimate \(\hat{\theta}_{h}\):
    \[
\boldsymbol{P}_h=\hat{\boldsymbol{P}}_h+\gamma_hv_h,
    \qquad
    \gamma_h=h^{-\gamma},
    \]
    where $\hat{\boldsymbol{P}}_h$ is given in \eqref{certain}, and $v_h$ satisfies \eqref{probing}. 
    \STATE The agents observe \(\boldsymbol{P}_h\) and play the open-loop Nash equilibrium
    strategy profile \(\boldsymbol{U}_h=\mu^\diamond(\xi_h, \boldsymbol{P}_h)\), generating the IONE response
    trajectory \((\boldsymbol{X}_h,\boldsymbol{U}_h)\).
    \STATE The planner observes the noisy trajectory
    \(\boldsymbol{Z}_h
    \), and updates
\begin{equation}\nonumber
 \begin{aligned}
&\hat{\theta}_{h+1}=\hat{\theta}_{h}+\Lambda_{h+1}^{\dagger}\phi_h^{\top}(y_h-\phi_h\hat{\theta}_{h})+\Lambda_{h+1}^{\dagger}\Xi\hat{\theta}_{h},\\
& \Lambda_{h+1}=\Lambda_{h}+\phi_h^\top\phi_h-\Xi,  
 \end{aligned}
 \end{equation}
 where $\phi_h$, $y_h$, and $\Xi$ are defined in \eqref{phi} and \eqref{xi}. 
\ENDFOR
\end{algorithmic}
\end{algorithm}

We next analyze the convergence rate of the parameter estimates generated by
Algorithm~\ref{alg:adaptive_incentive}. By Theorem~\ref{thm1}, this relies on the condition \eqref{excitation} when the
closed-loop trajectories are generated by the designed incentives given in the algorithm. To this end, the next lemma
shows that the designed adaptive incentive $\{\boldsymbol{P}_h\}_{h\geq 1}$ provides the
required growth of \(\lambda_{\min}(\Lambda_h)\).
\begin{lemma}\label{lem:adaptive_excitation}
Suppose that Assumptions~\ref{assum_ne}--\ref{assum1} hold. In
Algorithm~\ref{alg:adaptive_incentive}, there exist an almost surely finite
random index \(h_0\) and a positive random constant \(c_\Lambda\) such that
\[
\lambda_{\min}(\Lambda_h)
\geq
c_\Lambda h^{1-2n\gamma},
\qquad
a.s.,
\qquad
\forall h\geq h_0 .
\]
\end{lemma}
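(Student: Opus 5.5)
The plan is to split $\Lambda_h$ into a noise-free Gram matrix, which we bound from below, and noise-driven remainders, which we show are of lower order.

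\textbf{Decomposition.} Write $\phi_k=\phi_k^0+\delta_k$. Here $\phi_k^0$ is the regressor \eqref{phi} built from the true IONE trajectory $(\boldsymbol X_k,\boldsymbol U_k)$, and $\delta_k$ is the noise part from Lemma~\ref{lem31}. Then
\[
\Lambda_{h+1}=\Lambda_1+\sum_{k=1}^{h}\phi_k^{0\top}\phi_k^0+\sum_{k=1}^{h}\bigl(\phi_k^{0\top}\delta_k+\delta_k^\top\phi_k^0\bigr)+\sum_{k=1}^{h}\bigl(\delta_k^\top\delta_k-\Xi\bigr).
\]
Since $\xi_k$ is bounded (Assumption~\ref{assum1}), $\|\hat{\boldsymbol P}_k\|\le\log(k+1)$ and $\|v_k\|\le\bar v$, Lemma~\ref{lem_ne} gives $\|\boldsymbol U_k\|+\|\boldsymbol X_k\|=O(\log k)$. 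Hence $\|\phi_k^0\|=O(\log k)$.

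\textbf{Noise terms.} Both remainder sums are martingale sums. The cross term is linear in $\boldsymbol\epsilon_k$ with an $\mathcal F_{k-1}$-measurable coefficient, because $\phi_k^0$ depends only on $(\xi_k,\hat{\boldsymbol P}_k,v_k)$ and $v_k$ is independent of $\boldsymbol\epsilon_k$. The last term has conditional mean zero by \eqref{eq:xi}, and conditional moments of order $\beta/2>2$ by Assumption~\ref{assum_noise}. The standard martingale estimates (e.g. Chow's theorem) then give both sums as $O(h^{1/2+\varepsilon})$ a.s. for every $\varepsilon>0$. Because $\gamma<\frac1{4n}$ we have $1-2n\gamma>\frac12$, so these terms are $o(h^{1-2n\gamma})$. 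The constant $\Lambda_1\succ0$ is harmless.

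\textbf{Conditional excitation of the main term.} It remains to show $\lambda_{\min}\bigl(\sum_{k\le h}\phi_k^{0\top}\phi_k^0\bigr)\ge c\,h^{1-2n\gamma}$. By \eqref{first-order}, $\boldsymbol U_k$ and $\boldsymbol X_k$ are affine in $v_k$, so for every unit $w\in\mathbb R^{Ld_\theta}$,
\[
\phi_k^0 w=a_k(w)+\gamma_k B_k(w)v_k,
\]
where $a_k(w)$ and $B_k(w)$ are $\mathcal F_{k-1}$-measurable. The key intermediate claim is
\[
\mathbb E\bigl[\phi_k^{0\top}\phi_k^0\mid\mathcal F_{k-1}\bigr]\succeq c_0\,\gamma_k^{2n}I
\]
for all large $k$, with a deterministic $c_0>0$ that is uniform in the $\mathcal F_{k-1}$-measurable quantities $\xi_k$ and $\hat{\boldsymbol P}_k$.

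To prove it I would:
\begin{itemize}
\item Use $\mathbb E[v_kv_k^\top]=I$ to reduce the claim to lower-bounding $\|a_k(w)\|^2+\gamma_k^2\|B_k(w)\|_F^2$ uniformly over $w$.
\item Handle each block of $\theta_i$ separately. The $\operatorname{vech}(R_i)$ block is excited directly by $v_k$ through $\boldsymbol u_i$. The $\bar x_i^d$ block carries the constant column $-\Gamma_i^\top(\mathbf 1_T\otimes I_n)$, which has full column rank.
\item Treat the $\operatorname{vech}(Q_i)$ block, which I expect to be the main obstacle. Here $\mathcal K_{n,T}(\boldsymbol X_k)$ must have state samples whose outer products span all symmetric $n\times n$ directions. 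Probing reaches the state only through $\Gamma_i$, i.e.\ through the Toeplitz map $\mathcal L_T(A,B_i)$. Using controllability of $(A,B_i)$, invertibility of $A$ and $T\ge n+2$ (Assumption~\ref{assum2}), I would write the worst-case quadratic form as a polynomial in $\gamma_k$. I would then argue through a Krylov/Vandermonde determinant that its lowest nonvanishing coefficient has degree at most $2n$, with the coefficient bounded below by compactness over unit $w$ and bounded $\xi_k$. I would try this determinant argument first; if it is too loose, I would fall back on a direct spanning argument using $n$ consecutive reachable state increments.
\end{itemize}

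\textbf{Summing.} Finally, $\sum_{k\le h}\bigl(\phi_k^{0\top}\phi_k^0-\mathbb E[\phi_k^{0\top}\phi_k^0\mid\mathcal F_{k-1}]\bigr)$ is a martingale whose increments are $O(\log^2k)$. It is therefore $O(h^{1/2}\log^3h)=o(h^{1-2n\gamma})$ a.s. Since $\sum_{k\le h}k^{-2n\gamma}\asymp h^{1-2n\gamma}$, combining this with the conditional excitation bound and the noise estimates gives $c_\Lambda$ and an a.s.\ finite $h_0$ as in the statement.
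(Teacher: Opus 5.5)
Your architecture (noise-free Gram matrix plus martingale remainders, a per-season conditional lower bound, then summation) is a legitimate alternative to the paper's, and the noise and final-summation steps are fine. The gap is the one step that carries the content of the lemma, $\mathbb E[\phi_k^{0\top}\phi_k^0\mid\cdot]\succeq c_0\gamma_k^{2n}I$. You do not prove it, and the route you sketch aims at the wrong target. Since $\phi_k^0w=a_k(w)+\gamma_kB_k(w)v_k$ is affine in $v_k$, its conditional second moment is exactly $\|a_k(w)\|^2+\gamma_k^2\|B_k(w)\|_F^2$. There is therefore no higher-order coefficient in $\gamma_k$ to locate. A determinant bound would tie the exponent to the kernel dimension of the nominal regressor, which can reach $m(m+1)/2+n(n+1)/2$ per agent (e.g.\ when $\xi_k=0$ and the probe-free $\bar{\boldsymbol U}_k=0$), not to $n$. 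Treating the blocks of $\theta_i$ separately also does not control $\lambda_{\min}$, because of cross-block cancellations.

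What works is the following. Let $\mathcal A_k,\mathcal B$ be the matrices of $w\mapsto a_k(w)$ and $w\mapsto B_k(w)$. For $\gamma_k\le1$ one has $\mathcal A_k^\top\mathcal A_k+\gamma_k^2\mathcal B^\top\mathcal B\succeq\gamma_k^2(\mathcal A_k^\top\mathcal A_k+\mathcal B^\top\mathcal B)$, and this combines with a per-season identifiability fact. Since $\mathcal H$ has full row rank, $M^{-1}\mathcal H$ is onto, so the probing excites every direction of $\boldsymbol U_k$. Write $w_i=(\operatorname{vech}(R'),\operatorname{vech}(Q'),b)$. Then $(I_T\otimes R')+\Gamma_i^\top(I_T\otimes Q')\Gamma_i=0$ forces $R'=0$, because the last block column of $\Gamma_i$ is zero. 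Peeling blocks from the bottom-right corner then gives $(A^jB_i)^\top Q'A^kB_i=0$ for $0\le j,k\le T-2$, hence $Q'=0$ by controllability. The constant column $-\Gamma_i^\top(\mathbf 1_T\otimes I_n)$ has full column rank and handles $b$. This yields order $\gamma_k^2$, stronger than both your claim and the lemma.

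There are two further problems. First, you cannot invoke compactness in $\hat{\boldsymbol P}_k$. Before consistency is known, $\hat{\boldsymbol P}_k$ is only confined to $\|\hat{\boldsymbol P}_k\|\le\log(k+1)$. The per-season bound genuinely degrades like $\gamma_k^2/(1+\|\bar{\boldsymbol X}_k\|^2+\|\bar{\boldsymbol U}_k\|^2)$, where $(\bar{\boldsymbol X}_k,\bar{\boldsymbol U}_k)$ is the probe-free trajectory. The reason is that the constant column can cancel the nominal part along a unit direction whose $(R',Q')$-component has size of order $1/\log k$. For $n\ge2$ and $\gamma>0$, the slack between $k^{-2\gamma}(\log k)^{-2}$ and $k^{-2n\gamma}$ absorbs this. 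For $n=1$ or $\gamma=0$, however, no deterministic uniform $c_0$ exists, and you only get $h^{1-2\gamma}/(\log h)^2$. Removing that logarithm requires exactly the paper's bootstrap: a rough bound, then Theorem~\ref{thm1} to get $\hat\theta_h\to\theta$, so the projection becomes inactive and trajectories are bounded, and only then the sharp bound.

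Second, you use a single $\mathcal F_{k-1}$ for two incompatible purposes: to make $\phi_k^0$ measurable for the noise cross term, and to treat $v_k$ as fresh. In the paper's filtration $v_k$ is $\mathcal F_{k-1}$-measurable. Conditioning then returns a single realization, which is singular whenever $mT<d_\theta$. You need a second filtration excluding $v_k$, as in the proof of Lemma~\ref{lem4.2}.

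For comparison, the paper excites only $\zeta_{i,h}(T-1)$, at cumulative rate $h^{1-2\gamma}$. It lifts this to $\bar\phi$ through the last $n$ block rows and Lemma~\ref{lem4}, paying $(\rho_h/(hC_h^2))^{n-1}$, which is the origin of the exponent $2n\gamma$. Your per-season route, once completed as above, avoids that loss.
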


Lemma~\ref{lem:adaptive_excitation} quantifies the information generated by active incentive probing. Despite the vanishing perturbation amplitude, the smallest eigenvalue of the information matrix grows fast enough to satisfy the learning condition in Theorem~\ref{thm1}. This yields the following convergence rate of preference estimation.


\begin{theorem}\label{thm2}
Suppose that Assumptions \ref{assum_ne}--\ref{assum1} hold. Then the parameter estimates \(\hat{\theta}_h\) generated by Algorithm~1 satisfy the following upper bound:
   \[
   \|\hat \theta_h-\theta\|=O\left(\frac{\sqrt{h\log\log h}}{h^{1-2n\gamma}}\right),\ \ a.s.,\ \ h\to \infty,
   \]
   where $\gamma\in [0,\frac{1}{4n})$ is given in \eqref{adaptive_incentive}.
\end{theorem}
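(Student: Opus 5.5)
The plan is to combine Theorem~\ref{thm1} with Lemma~\ref{lem:adaptive_excitation}. Theorem~\ref{thm1} needs two hypotheses along the closed-loop trajectories of Algorithm~\ref{alg:adaptive_incentive}: the growth condition \eqref{growth_phi} and the excitation condition \eqref{excitation}. Its bound \eqref{thm1_convergence} then reduces the claim to two-sided eigenvalue estimates on $\Lambda_h$. The lower bound $\lambda_{\min}(\Lambda_h)\ge c_\Lambda h^{1-2n\gamma}$ is supplied by Lemma~\ref{lem:adaptive_excitation}. What remains is a matching upper bound $\lambda_{\max}(\Lambda_h)=O(h)$ a.s., together with \eqref{growth_phi}.

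\textbf{Step 1: growth of $\phi_h$.} By Lemma~\ref{lem_ne}, $\boldsymbol{U}_h=-M^{-1}(\mathcal H\boldsymbol{P}_h+\eta_h)$, and $\boldsymbol{X}_h$ is affine in $(\xi_h,\boldsymbol{U}_h)$. The inputs are controlled as follows:
\begin{itemize}
\item $\|\hat{\boldsymbol{P}}_h\|\le\log(h+1)$ by the projection \eqref{certain};
\item $\|\gamma_h v_h\|\le\bar v$;
\item $\xi_h$ is bounded by Assumption~\ref{assum1}.
\end{itemize}
Hence $\|(\boldsymbol{X}_h,\boldsymbol{U}_h)\|=O(\log h)$. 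Since $\phi_{i,h}$ is linear in $\boldsymbol{Z}_h$ plus a constant block,
\[
\|\phi_h\|=O\bigl(\log h+\|\boldsymbol{\epsilon}_h\|\bigr).
\]
For the noise, $\sup_h\mathbb E[\|\boldsymbol{\epsilon}_h\|^\beta\mid\mathcal F_{h-1}]<\infty$ with $\beta>4$. A conditional Borel--Cantelli argument gives $\|\boldsymbol{\epsilon}_h\|=o(h^{1/\beta+\varepsilon'})$ a.s.: summing $\mathbb P(\|\boldsymbol{\epsilon}_h\|>h^{\alpha}\mid\mathcal F_{h-1})\le C h^{-\alpha\beta}$ with $\alpha\beta>1$ works. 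Because $1/\beta<1/4$, we can pick $\alpha\in(1/\beta,1/2)$, which yields \eqref{growth_phi}.

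\textbf{Step 2: $\lambda_{\max}(\Lambda_h)=O(h)$.} Write $\Lambda_{h+1}=\Lambda_1+\sum_{j\le h}(\phi_j^\top\phi_j-\Xi)$. Then
\[
\lambda_{\max}(\Lambda_h)\le C+\sum_{j\le h}\|\phi_j\|^2 .
\]
By Step 1, $\|\phi_j\|^2\le C(\log^2 j+\|\boldsymbol{\epsilon}_j\|^2)$. The deterministic part sums to $O(h\log^2 h)$, which is too crude. I would sharpen it by noting that $\hat{\boldsymbol{P}}_h$ is eventually bounded: once estimates are consistent, $\bar{\boldsymbol{P}}_h$ stays bounded by continuity of \eqref{ce} in $\hat\theta_h$. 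Since that uses the conclusion, I would avoid the circularity with a two-pass argument.
\begin{itemize}
\item \textbf{First pass.} Use the crude bound $\lambda_{\max}=O(h\log^2h)$ together with Lemma~\ref{lem:adaptive_excitation}. Since $2n\gamma<1/2$, the ratio $(h\log^2h\log\log h)^{1/2}/h^{1-2n\gamma}\to0$, so \eqref{excitation} holds and Theorem~\ref{thm1} gives $\hat\theta_h\to\theta$ a.s.
\item \textbf{Second pass.} Boundedness of $\{\xi_h\}$, continuity of the map $\hat\theta\mapsto\bar{\boldsymbol P}_h$ near $\theta$, and invertibility of $\mathcal H_h^{*}\mathcal H_h^{*\top}$ along the bounded social trajectories imply that $\bar{\boldsymbol{P}}_h$, hence $\boldsymbol{P}_h$, is eventually bounded.
\end{itemize}
For the noise part, $\sum_{j\le h}\|\boldsymbol{\epsilon}_j\|^2=O(h)$ a.s. by the martingale strong law. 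The conditional second moments equal $\sigma^2(n+mL)T$, and the centered terms have bounded conditional $\beta/2>2$ moments. Together these give $\lambda_{\max}(\Lambda_h)=O(h)$.

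\textbf{Step 3: conclusion.} Plug $\lambda_{\max}=O(h)$ and $\lambda_{\min}\ge c_\Lambda h^{1-2n\gamma}$ into \eqref{thm1_convergence}:
\[
\|\hat\theta_h-\theta\|^2=O\!\left(\frac{h\log\log h}{h^{2(1-2n\gamma)}}\right).
\]
Taking square roots gives the claimed rate. Note that $x\mapsto x\log\log x$ is increasing, so the upper bound on $\lambda_{\max}$ transfers.

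The main obstacle is the circularity in Step 2: a linear bound on $\lambda_{\max}$ requires bounded incentives, which in turn relies on consistency. The two-pass argument above (crude $O(h\log^2h)$ bound, then consistency, then eventual boundedness) is what I would try first. One also must check that invertibility of $\mathcal H_h^*\mathcal H_h^{*\top}$ holds uniformly in $h$ for bounded $\xi_h$, so that $\bar{\boldsymbol P}_h$ depends continuously on $\hat\theta_h$ uniformly in $h$. If this uniformity is not available, the rate must absorb an extra $\log h$ factor.
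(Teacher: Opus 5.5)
Your proposal is correct and follows essentially the paper's route. The paper also verifies \eqref{growth_phi} from the $O(\log h)$ trajectory bound plus a Borel--Cantelli bound on the noise, and it obtains consistency from crude eigenvalue bounds. It then uses eventual inactivity of the projection to get $\lambda_{\max}(\Lambda_h)=O(h)$, and plugs this together with the $h^{1-2n\gamma}$ lower bound into \eqref{thm1_convergence}. The eventual inactivity relies on $\inf_h\sigma_{\min}(\mathcal H_h^*)>0$, which holds by compactness of the closure of $\{\xi_h\}$.

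The only difference is bookkeeping. The paper runs your two-pass argument inside the proof of Lemma~\ref{lem:adaptive_excitation}, and its first pass uses the cruder lower bound $h^{1-2n\gamma}/(\log h)^{4n-2}$ rather than the lemma's sharp one. The uniformity you flag at the end is settled there by the same compactness argument, so no extra $\log h$ factor arises.
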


The detailed proof of Theorem 2 is deferred  to the next section, and we now present the sublinear growth rate of the cumulative regret and budget for the proposed adaptive incentive algorithm. 
\begin{theorem}\label{thm3}
Suppose that Assumptions~\ref{assum_ne}--\ref{assum1} hold. Then the average regret $\mathcal R(H)$ and the average net budget
$\mathcal B(H)$ satisfy that
\[
\mathcal R(H)
=
O\left(
(\log\log H)H^{-1+4n\gamma}
+
H^{-2\gamma}
\right),\ \ a.s.,
\]
and
\[
\mathcal B(H)
=
O\left(
(\log\log H)^{\frac12}H^{-\frac12+2n\gamma}
+
H^{-\gamma}
\right),\ \ a.s., 
\]
as $H\to\infty.$
Here, $\mathcal R(H)$ and $\mathcal B(H)$ are defined in
\eqref{regret} and \eqref{eq:budget}, respectively. 
\end{theorem}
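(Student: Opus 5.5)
Theorem~\ref{thm3} follows from Theorem~\ref{thm2}: the regret and the net budget are each split into a certainty-equivalence error and a probing error. I treat the regret first, then the budget, and then justify the regularity facts both steps use.

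\textbf{Regret.} By Lemma~\ref{lem_ne}, $\boldsymbol{U}_h=-M^{-1}(\mathcal H\boldsymbol{P}_h+\eta_h)$ is affine in $\boldsymbol{P}_h$. Let $\boldsymbol{P}_h^\star$ denote the oracle incentive of Proposition~\ref{pro1} built from the true $\theta$, i.e., \eqref{ce} with $\hat b_h$ replaced by $b_h$. By Proposition~\ref{pro1}, $\mu^\diamond(\xi_h,\boldsymbol{P}^\star_h)=\boldsymbol{U}^*_h$. Hence
\[
\boldsymbol{U}_h-\boldsymbol{U}^*_h=-M^{-1}\mathcal H\bigl(\hat{\boldsymbol{P}}_h-\boldsymbol{P}^\star_h\bigr)-\gamma_h M^{-1}\mathcal H v_h .
\]
Since $\boldsymbol{U}^*_h$ minimizes the strictly convex quadratic $\Phi(\xi_h,\cdot)$, the first-order term vanishes. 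Therefore $\Phi(\xi_h,\boldsymbol{U}_h)-\Phi(\xi_h,\boldsymbol{U}^*_h)\le C\|\boldsymbol{U}_h-\boldsymbol{U}^*_h\|^2$ with $C$ uniform in $h$. This gives the bound $2C\bigl(\|M^{-1}\mathcal H\|^2\|\hat{\boldsymbol{P}}_h-\boldsymbol{P}^\star_h\|^2+\gamma_h^2\bar v^2\|M^{-1}\mathcal H\|^2\bigr)$.

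The probing part sums to $O(\sum_h h^{-2\gamma})=O(H^{1-2\gamma})$, which after dividing by $H$ gives the $H^{-2\gamma}$ term. For the estimation part I need $\|\hat{\boldsymbol{P}}_h-\boldsymbol{P}^\star_h\|\le C'\|\hat\theta_h-\theta\|$ for all large $h$. With this, Theorem~\ref{thm2} gives $\|\hat\theta_h-\theta\|^2=O(\log\log h\cdot h^{-1+4n\gamma})$. Summing yields $O(\log\log H\cdot H^{4n\gamma})$, since $4n\gamma<1$ makes the exponent $-1+4n\gamma>-1$. Dividing by $H$ gives the first term. Finitely many early seasons contribute only $O(1)$, because $\|\hat{\boldsymbol{P}}_h\|\le\log(h+1)$ keeps each early term finite.

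\textbf{Budget.} The payment is $\sum_i\sum_t\gamma_i=\mathbf 1^\top\mathcal G_h\boldsymbol{P}_h$, where $\mathcal G_h$ is built from the actual trajectory $(x_h,u_{i,h})$ exactly as $\mathcal G^*_h$ is built from the optimal one. By construction \eqref{ce}, $\bar{\boldsymbol{P}}_h$ satisfies the budget row $\mathbf 1^\top\mathcal G^*_h\bar{\boldsymbol{P}}_h=0$ exactly, and the projection is inactive for large $h$. So the per-season budget equals
\[
\mathbf 1^\top(\mathcal G_h-\mathcal G^*_h)\hat{\boldsymbol{P}}_h+\gamma_h\mathbf 1^\top\mathcal G_h v_h .
\]
The trajectory-to-payment map is affine and $\hat{\boldsymbol{P}}_h$ is bounded, so the first term is $O(\|\boldsymbol{U}_h-\boldsymbol{U}^*_h\|)=O(\|\hat\theta_h-\theta\|+\gamma_h)$. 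The second term is $O(\gamma_h)$. Summing $\sqrt{\log\log h}\,h^{-1/2+2n\gamma}+h^{-\gamma}$ and dividing by $H$ gives the stated bound on $\mathcal B(H)$.

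\textbf{Main obstacle.} The hard step is the Lipschitz bound $\|\bar{\boldsymbol{P}}_h-\boldsymbol{P}^\star_h\|\le C'\|\hat\theta_h-\theta\|$ with a deterministic constant, together with the claim that the projection eventually becomes inactive.

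For the first part I would argue as follows. Since $\hat b_h-b_h$ is linear in $\hat\theta_h-\theta$ with coefficients bounded through $\boldsymbol{U}^*_h$ and $\xi_h$, which are bounded by Assumption~\ref{assum1}, it suffices to bound $\|\mathcal H_h^{*\top}(\mathcal H^*_h\mathcal H^{*\top}_h)^{-1}\|$ uniformly in $h$. This requires $\lambda_{\min}(\mathcal H^*_h\mathcal H^{*\top}_h)$ to be bounded away from zero. That holds if the budget row $\mathbf 1^\top\mathcal G^*_h$ stays uniformly away from the row space of $\mathcal H$. The constant entry $1$ in $g^*_{i,h}(t)$ offers a direction orthogonal to the state and action payment channels affecting $\mathcal H$, which would give this uniform separation. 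I expect this verification to need the most care.

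For the second part, once the Lipschitz bound holds, $\|\boldsymbol{P}^\star_h\|$ is bounded and $\log(h+1)\to\infty$. Hence $\bar{\boldsymbol{P}}_h$ lies inside $\mathcal P_h$ for all $h$ beyond an a.s.\ finite random index, and the projection can be ignored from then on.
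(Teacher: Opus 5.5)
Your proposal is correct and follows essentially the same route as the paper. Both arguments compare with the oracle incentive of Proposition~\ref{pro1}, bound $\mathcal H_h^{*\top}(\mathcal H_h^{*}\mathcal H_h^{*\top})^{-1}$ uniformly, use Theorem~\ref{thm2} to get $\|\boldsymbol P_h-\boldsymbol P_h^*\|=O(\|\hat\theta_h-\theta\|+h^{-\gamma})$ once the projection becomes inactive, and then apply a second-order expansion of $\Phi$ for the regret and a first-order perturbation of $\mathbf 1_{LT}^\top\mathcal G_h\boldsymbol P_h$ for the budget. Your two small variations are both valid: for the budget you use that $\bar{\boldsymbol P}_h$ already satisfies $\mathbf 1_{LT}^\top\mathcal G_h^*\bar{\boldsymbol P}_h=0$ exactly (since $\mathcal G_h^*$ is known to the planner) instead of comparing to $\boldsymbol P_h^*$, and you bound $\lambda_{\min}(\mathcal H_h^*\mathcal H_h^{*\top})$ directly through the constant-offset coordinates, where $\mathcal H$ vanishes, rather than by the paper's compactness-and-continuity argument, which makes that step slightly more explicit.
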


Theorem~\ref{thm3} shows that the performance loss of the adaptive incentive
has two sources. The first term in both $\mathcal R(H)$ and $\mathcal B(H)$ is actually caused by the parameter learning error
\(\|\hat\theta_h-\theta\|\), while the second term in both bounds is caused by the exploration
signal \(\gamma_hv_h\).
We note that the dependence of the first term on the state dimension $n$ originates from the excitation bound in Lemma~\ref{lem:adaptive_excitation}, where the growth rate of the smallest eigenvalue of the information matrix $\Lambda_h$ depends explicitly on $n$ and, in turn, determines the convergence rate of the parameter estimates. Consequently, sufficient excitation in higher-dimensional systems requires a more slowly decaying exploration signal, which leads to the condition $0<\gamma<1/(4n)$ for sublinear regret. Balancing the polynomial orders of the two terms yields
$\gamma^\star=1/(4n+2)$.
\section{Proof of main results}\label{sec7}
In this section, we provide proofs of Theorems \ref{thm1}-\ref{thm3}.

\subsection{Proof of Theorem~\ref{thm1}}
To prove Theorem~\ref{thm1}, we first establish several preliminary bounds in the following lemma.
\begin{lemma}\label{lem:preliminary_bounds}
Under the conditions  of Theorem~\ref{thm1}, for each $h\geq 1$, define
\begin{equation}\label{barphi}
\bar\phi_h=\phi_h-\delta_h,
\qquad
g_h=\sum_{j=1}^h\|\bar\phi_j\|^2.
\end{equation}
Then as $h\to\infty$,
\begin{align}
&h
=
O\!\left(g_h\right),\; g_h
=
O\!\left(\lambda_{\max}(\Lambda_{h+1})\right), \quad a.s.,
\label{eq:pre_g}
\end{align}
and, for some $\bar\alpha\in
\left(
\max\left\{\alpha,\frac1\beta\right\},
\frac12
\right)$,
\begin{equation}\label{eq:pre_growth}
\|\bar\phi_h\|
=
o(h^{\bar\alpha}),\quad a.s.    \end{equation}
\end{lemma}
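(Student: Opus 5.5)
We prove the three claims in the order \eqref{eq:pre_growth}, $h=O(g_h)$, and $g_h=O(\lambda_{\max}(\Lambda_{h+1}))$. The tools are the structure of $\phi_h$ in \eqref{phi}, the moment bound of Assumption~\ref{assum_noise}, and the excitation hypothesis \eqref{excitation}.

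\emph{Growth bound \eqref{eq:pre_growth}.} Write $\bar\phi_h=\phi_h-\delta_h$. By hypothesis \eqref{growth_phi}, $\|\phi_h\|=o(h^\alpha)$. The perturbation $\delta_h$ is linear in $\boldsymbol{\epsilon}_h$, so $\|\delta_h\|\le C\|\boldsymbol{\epsilon}_h\|$ for a deterministic constant $C$ depending only on $\Gamma_i$, $\mathscr D_m$, $\mathscr D_n$ and $T$. It remains to show $\|\boldsymbol{\epsilon}_h\|=o(h^{\bar\alpha})$ for any $\bar\alpha>1/\beta$. By the conditional Markov inequality,
\[
\mathbb P(\|\boldsymbol{\epsilon}_h\|>h^{\bar\alpha}\mid\mathcal F_{h-1})\le \frac{\sup_j\mathbb E[\|\boldsymbol\epsilon_j\|^\beta\mid\mathcal F_{j-1}]}{h^{\beta\bar\alpha}},
\]
and the right-hand side is summable because $\beta\bar\alpha>1$. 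The conditional Borel--Cantelli lemma (L\'evy's extension) then gives $\|\boldsymbol\epsilon_h\|\le h^{\bar\alpha}$ eventually. Applying this with a slightly smaller exponent $\bar\alpha'\in(\max\{\alpha,1/\beta\},\bar\alpha)$ upgrades the bound to $o(h^{\bar\alpha})$. Because $\beta>4$, we have $1/\beta<1/2$, so the interval for $\bar\alpha$ is nonempty.

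\emph{Lower bound $h=O(g_h)$.} The last block column of each $\phi_{i,h}$ (and hence of $\bar\phi_{i,h}$, since $\delta_h$ vanishes there) is the deterministic matrix $-\Gamma_i^\top(\mathbf 1_T\otimes I_n)$. If this block is nonzero, then $\|\bar\phi_h\|^2\ge c>0$ for every $h$, which gives $g_h\ge ch$ directly. If that route fails, for example because the block is degenerate, the fallback is to use \eqref{excitation}, which forces $\lambda_{\min}(\Lambda_h)\to\infty$, together with the relation between $\Lambda_h$ and $\sum\phi_j^\top\phi_j$ below. The primary plan is the deterministic block, using the nonzero structure of $\Gamma_i$ from controllability.

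\emph{Upper bound $g_h=O(\lambda_{\max}(\Lambda_{h+1}))$.} We have $\Lambda_{h+1}=\Lambda_1+\sum_{j\le h}(\phi_j^\top\phi_j-\Xi)$. Expand $\phi_j^\top\phi_j=\bar\phi_j^\top\bar\phi_j+\bar\phi_j^\top\delta_j+\delta_j^\top\bar\phi_j+\delta_j^\top\delta_j$. This yields
\[
\operatorname{tr}\Lambda_{h+1}=\operatorname{tr}\Lambda_1+g_h+2\sum_{j\le h}\operatorname{tr}(\bar\phi_j^\top\delta_j)+\sum_{j\le h}\operatorname{tr}(\delta_j^\top\delta_j-\Xi).
\]
The two sums are martingales: $\bar\phi_j$ is $\mathcal F_{j-1}$-measurable because it depends only on the noiseless trajectory and the incentive, and $\delta_j$ is linear in $\boldsymbol\epsilon_j$, so the cross terms have zero conditional mean; the last sum has zero conditional mean by \eqref{eq:xi}.

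For the cross-term martingale, the local convergence theorem for martingales gives a bound of $O\big(g_h^{1/2}\log^{1/2+\kappa} g_h\big)$ for any $\kappa>0$, which is $o(g_h)$. For the last martingale, the conditional $\beta/2$-th moments of $\|\delta_j\|^2$ are bounded with $\beta/2>2$, so the strong law for martingales gives $o(h)$, which is $o(g_h)$ by the previous step. Hence $\operatorname{tr}\Lambda_{h+1}=(1+o(1))g_h$, and since $\lambda_{\max}\ge \operatorname{tr}/(Ld_\theta)$, the claim follows.

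The main obstacle is the measurability bookkeeping in this last step. One must verify that $\bar\phi_j$ is indeed $\mathcal F_{j-1}$-measurable. This holds because $\boldsymbol P_j$ and $\xi_j$ are $\mathcal F_{j-1}$-measurable and the noiseless trajectory is a deterministic function of them by Lemma~\ref{lem_ne}.
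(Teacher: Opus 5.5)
Your proof is correct, but the upper bound $g_h=O(\lambda_{\max}(\Lambda_{h+1}))$ takes a genuinely different route from the paper.

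The other two steps match the paper's argument. The growth bound uses conditional Borel--Cantelli for $\|\boldsymbol\epsilon_h\|$. The bound $h=O(g_h)$ uses the deterministic column block $-\Gamma_i^\top(\mathbf 1_T\otimes I_n)$, which has $-B_i^\top\neq0$ among its blocks by controllability. The fallback via \eqref{excitation} is therefore unnecessary. It would not give linear growth by itself, and as sketched it leans on your next step, which already uses $h=O(g_h)$; you can drop it.

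\textbf{Upper bound in the paper.} The paper never expands the cross term. It proceeds as follows:
\begin{itemize}
\item It uses the crude inequality $\|\bar\phi_j\|^2\le 2\|\phi_j\|^2+2\|\delta_j\|^2$.
\item It takes traces in $\Lambda_{h+1}=\Lambda_1+\sum_{j\le h}\phi_j^\top\phi_j-h\Xi$ to get $\sum_{j\le h}\|\phi_j\|^2=O(\lambda_{\max}(\Lambda_{h+1})+h)$.
\item It uses $\sum_{j\le h}\|\delta_j\|^2=O(h)$.
\item It absorbs all $O(h)$ terms with a separate linear lower bound $\lambda_{\max}(\Lambda_{h+1})\ge a_i^\top\Lambda_1a_i+ch$. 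This bound comes from the same direction $a_i$ used for $h=O(g_h)$, since $\delta_ha_i=0$ forces $\Xi a_i=0$.
\end{itemize}

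\textbf{Upper bound in your proof.} You expand $\operatorname{tr}\Lambda_{h+1}$ exactly. You then control the cross term $\sum_{j}\operatorname{tr}(\bar\phi_j^\top\delta_j)$ with a martingale-transform estimate (e.g.\ Lemma~\ref{lem:martingale}). This requires $\bar\phi_j$ to be $\mathcal F_{j-1}$-measurable and $g_h\to\infty$, both of which hold.

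\textbf{Trade-off.} Your route gives the sharper statement $\operatorname{tr}\Lambda_{h+1}=(1+o(1))\sum_{j\le h}\|\bar\phi_j\|_F^2$. The paper's route is more elementary: it needs only a strong law for $\|\delta_j\|^2$, with no cross-term estimate and no measurability of $\bar\phi_j$.

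One bookkeeping detail: your trace expansion produces $\sum_j\|\bar\phi_j\|_F^2$ rather than $g_h$ when $\|\cdot\|$ is the spectral norm. The two differ by at most a factor of $Ld_\theta$, so the conclusion is unaffected.
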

The proof of Lemma~\ref{lem:preliminary_bounds} is given in Appendix~\ref{sec:proofs}. 

{\bfseries\itshape Proof of Theorem~\ref{thm1}:} For each $h\geq 1$, define
\begin{equation}\nonumber
    \begin{aligned}
\tilde{\theta}_h=\theta-\hat{\theta}_h.
    \end{aligned}
\end{equation}
By~\eqref{alg} and \eqref{regression}, we have
\begin{equation}\nonumber
\begin{aligned}
&\tilde{\theta}_{h+1}=(I+\Lambda_{h+1}^{\dagger}\Xi)\tilde{\theta}_{h}-\Lambda_{h+1}^{\dagger}\phi_h^{\top}(y_h-\phi_h\hat{\theta}_{h})-\Lambda_{h+1}^{\dagger}\Xi\theta\\
&=(I+\Lambda_{h+1}^{\dagger}\Xi-\Lambda_{h+1}^{\dagger}\phi_h^{\top}\phi_h)\tilde{\theta}_{h}+\Lambda_{h+1}^{\dagger}\phi_h^{\top}\delta_h\theta-\Lambda_{h+1}^{\dagger}\Xi\theta.
\end{aligned}
\end{equation}
By \eqref{eq:pre_g} in Lemma \ref{lem:preliminary_bounds}, 
\begin{equation}\label{eq:lambda_max}
h=O\left(\lambda_{\max}(\Lambda_h)\right),\;\; a.s., \;\; h\to \infty.    
\end{equation}
Together with \eqref{excitation}, this implies that 
$\lambda_{\min}(\Lambda_h)\to\infty$ as $h\to \infty$ almost surely. Therefore, there
exists an almost surely finite $h_0$ such that $\Lambda_h$ is
invertible for all $h\ge h_0$.
Hence, for all $h\geq h_0$, $\Lambda_h^{\dagger}=\Lambda_h^{-1}$, and
\begin{equation}\label{eq:recursion}
\begin{aligned}
\Lambda_{h+1}\tilde{\theta}_{h+1}=&(\Lambda_{h+1}+\Xi-\phi_h^{\top}\phi_h)\tilde{\theta}_{h}+
\phi_h^{\top}\delta_h\theta-\Xi\theta\\
=&\;\Lambda_{h}\tilde{\theta}_{h}+
\phi_h^{\top}\delta_h\theta-\Xi\theta,
\end{aligned}
\end{equation}
where the second equality follows from the recursion in \eqref{alg}.
Iterating \eqref{eq:recursion} and using
$\phi_h=\bar{\phi}_h+\delta_h$ yields
\begin{equation}\label{eqq:lambda34}
\Lambda_{h+1}\tilde{\theta}_{h+1}=\Lambda_{h_0}\tilde{\theta}_{h_0}+\sum_{j=h_0}^{h}
\bar{\phi}_j^{\top}\delta_j\theta+\sum_{j=h_0}^{h}(\delta_j^{\top}\delta_j-\Xi)\theta.
\end{equation}

We now bound the two summation terms on the right-hand side (RHS) of~\eqref{eqq:lambda34} separately.
 For the last term, by \eqref{eq:xi}, for each $h\geq 1$, we have 
\[
\mathbb{E}[\delta_h^{\top}\delta_h-\Xi\mid \mathcal{F}_{h-1}]=0, \quad \text{a.s.}
\]
Hence, $\{\delta_h^{\top}\delta_h-\Xi,\,\mathcal{F}_h\}_{h}$ is a martingale difference sequence. Moreover,
\begin{equation}\nonumber
\begin{aligned}
&\mathbb{E}\big[\|\delta_h^{\top}\delta_h-\Xi\|^{\beta/2}\mid \mathcal{F}_{h-1}\big]=O\big(\mathbb{E}[\|\delta_h\|^{\beta}\mid \mathcal{F}_{h-1}]\big)+O(1)\\
=&\;O\big(\mathbb{E}[\|\epsilon_h\|^{\beta}\mid \mathcal{F}_{h-1}]\big)+O(1)=O(1), \;\; \text{a.s.},\;\; h\to \infty.
\end{aligned}
\end{equation}
Since $\beta>4$, applying the martingale convergence result in Lemma~\ref{lem:martingale} gives
\begin{equation}\label{eq:eta}
\left\|\sum_{j=h_0}^{h}(\delta_j^{\top}\delta_j-\Xi)\theta\right\|=O\big(\sqrt{h\log\log h}\big), \quad \text{a.s.}
\end{equation}

For the cross term, $\bar\phi_h$ is $\mathcal F_{h-1}$-measurable and
$\{\delta_h,\mathcal F_h\}_{h}$ is a martingale difference
sequence. Besides, Lemma \ref{lem:preliminary_bounds} gives $\|\bar{\phi}_h\|=o(h^{\bar \alpha})=o(g_h^{\bar \alpha})$. Therefore, by Lemma~\ref{lem:martingale},
\begin{equation}\label{eq:delta}
\left\|
\sum_{j=h_0}^{h}\bar\phi_j^\top\delta_j
\right\|
=
O\!\left(
\sqrt{g_h\log\log g_h}
\right),
\quad\text{a.s.}
\end{equation}
Hence, by \eqref{eq:pre_g},
\[
\left\|
\sum_{j=h_0}^{h}\bar\phi_j^\top\delta_j
\right\|
=
O\!\left(
\sqrt{
\lambda_{\max}(\Lambda_{h+1})
\log\log\lambda_{\max}(\Lambda_{h+1})
}
\right),
\;\text{a.s.}
\]
Finally, by~\eqref{eq:lambda_max}, \eqref{eqq:lambda34}, \eqref{eq:eta},  and the bound above, we obtain
\[
\|\Lambda_{h}\tilde{\theta}_h\|=O\left(\sqrt{\lambda_{\max}(\Lambda_h)\log\log \lambda_{\max}(\Lambda_h)}\right), \quad \text{a.s.}
\]
Since $\|\Lambda_{h}\tilde{\theta}_h\|\geq \lambda_{\min}(\Lambda_{h})\|\tilde{\theta}_h\|$, it follows that
\[
\|\tilde{\theta}_h\|=O\left(\frac{\sqrt{\lambda_{\max}(\Lambda_h)\log\log \lambda_{\max}(\Lambda_h)}}{\lambda_{\min}(\Lambda_{h})}\right), \quad \text{a.s.}
\]
\hfill $\square$
\subsection{Proof of Theorem~\ref{thm2}}
We first establish three lemmas that will be used to derive
the excitation properties of the information matrix $\Lambda_h$. The
first provides a lower bound for a structured block matrix.
\begin{lemma}\label{lem4}
Let $r=n+m+1$. For each $h\geq r$, let
\[
W_h
=
\begin{bmatrix}
w_{11}I_m & \cdots & w_{1r}I_m\\
\vdots & \ddots & \vdots\\
w_{h1}I_m & \cdots & w_{hr}I_m
\end{bmatrix}
\in\mathbb R^{hm\times rm},
\]
For each entry $w_{ij}\in \mathbb{R}$, let $\mathfrak{W}_{ij}\in\mathbb R^{mn\times mn}$ be an upper
triangular matrix of the form
\[
\mathfrak{W}_{ij}
=
\begin{bmatrix}
w_{ij}I_m & \tilde{w}_{ij}^{(1)}I_m & \cdots & \tilde{w}_{ij}^{(n-1)}I_m\\
0 & w_{ij}I_m & \cdots & \tilde{w}_{ij}^{(n-2)}I_m\\
\vdots & \vdots & \ddots & \vdots\\
0 & 0 & \cdots & w_{ij}I_m
\end{bmatrix},
\]
where $\tilde{w}_{ij}^{(k)}\in \mathbb{R}$.
Define the block matrix
\[
\mathscr W_h
=
\begin{bmatrix}
\mathfrak{W}_{11} & \cdots & \mathfrak{W}_{1r}\\
\vdots & \ddots & \vdots\\
\mathfrak{W}_{h1} & \cdots & \mathfrak{W}_{hr}
\end{bmatrix}
\in\mathbb R^{hmn\times rmn}.
\]
Suppose that there exist positive sequences
$\{\rho_h\}_{h\geq1}$ and $\{C_h\}_{h\geq1}$ such that, for every
$h\geq r$,
\begin{equation}\label{eq:d_h}
\lambda_{\min}(W_h^\top W_h)\geq \rho_h,
\end{equation}
and $|\tilde{w}_{ij}^{(k)}|\leq C_h$, $|w_{ij}|\leq C_h$
for all admissible $i,j,k$. Then there exists a constant
$\mu>0$, independent of $h$, such that
\[
\lambda_{\min}
\left(
\mathscr W_h^\top\mathscr W_h
\right)
\geq
\mu
\frac{h^{1-n}\rho_h^n}{C_h^{2n-2}}.
\]
\end{lemma}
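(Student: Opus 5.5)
The plan is to exploit the fact that every block $\mathfrak W_{ij}$ is block upper-triangular Toeplitz in an inner index $k\in[n]$. After regrouping rows and columns by this index, $\mathscr W_h$ becomes a block upper-triangular matrix whose diagonal blocks all equal $W_h$. Its smallest singular value is then bounded by back-substitution.

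\textbf{Step 1 (permutation).} Index the rows of $\mathscr W_h$ by $(i,k,\cdot)$ and the columns by $(j,l,\cdot)$, with $i\in[h]$, $j\in[r]$, $k,l\in[n]$, and the last index running over the $m$ coordinates of the $I_m$ factor. Let $\Pi_R$ and $\Pi_C$ be the permutations that make $k$ (respectively $l$) the outermost index. Then
\[
\Pi_R\mathscr W_h\Pi_C=
\begin{bmatrix}
W_h & T_{12} & \cdots & T_{1n}\\
0 & W_h & \cdots & T_{2n}\\
\vdots & & \ddots & \vdots\\
0 & 0 & \cdots & W_h
\end{bmatrix},
\]
where each $T_{kl}\in\mathbb R^{hm\times rm}$ ($l>k$) has the same block pattern as $W_h$ with $w_{ij}$ replaced by $\tilde w_{ij}^{(l-k)}$. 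Permutations are orthogonal, so $\lambda_{\min}(\mathscr W_h^\top\mathscr W_h)$ is unchanged. Using the entry bound and the Frobenius norm gives $\|T_{kl}\|\le\sqrt{hrm}\,C_h$. The same bound applied to $W_h$ gives $\rho_h\le\lambda_{\max}(W_h^\top W_h)\le hrmC_h^2$, so
\[
\kappa_h:=\sqrt{h}\,C_h/\sqrt{\rho_h}\ge (rm)^{-1/2}.
\]

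\textbf{Step 2 (back-substitution).} Let $x=[x_1^\top,\ldots,x_n^\top]^\top$ and $y=\Pi_R\mathscr W_h\Pi_C x$, with blocks $y_k$. Since $\|W_h z\|\ge\sqrt{\rho_h}\|z\|$, the last block row $y_n=W_hx_n$ gives $\|x_n\|\le\|y\|/\sqrt{\rho_h}$. For $k<n$, block row $k$ reads $W_hx_k=y_k-\sum_{l>k}T_{kl}x_l$, hence
\[
\|x_k\|\le \rho_h^{-1/2}\Bigl(\|y\|+\sqrt{hrm}\,C_h\sum_{l>k}\|x_l\|\Bigr).
\]
Induction downward in $k$ gives $\|x_k\|\le c\,\|y\|\rho_h^{-1/2}(1+\sqrt{rm}\,\kappa_h)^{\,n-k}$ for a constant $c$ depending only on $n$.

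\textbf{Step 3 (conclusion).} From Step 1, $1\le\sqrt{rm}\,\kappa_h$, so $1+\sqrt{rm}\,\kappa_h\le 2\sqrt{rm}\,\kappa_h$. Summing the squares over $k$ then gives
\[
\|x\|^2\le c'\,\|y\|^2\,\rho_h^{-1}\kappa_h^{2(n-1)}=c'\,\|y\|^2\,\frac{h^{n-1}C_h^{2n-2}}{\rho_h^{\,n}},
\]
with $c'$ depending only on $n$, $m$, $r$. Therefore $\lambda_{\min}(\mathscr W_h^\top\mathscr W_h)=\min_{x\neq0}\|y\|^2/\|x\|^2\ge\mu\,h^{1-n}\rho_h^n/C_h^{2n-2}$ with $\mu=1/c'$, which is the claim.

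The main obstacle is Step 1: setting up the permutation carefully and checking that the triangular structure survives with identical diagonal blocks $W_h$, including the orientation of which $T_{kl}$ are nonzero. The observation $\kappa_h\gtrsim1$ is what lets the constants collapse to the clean power $\kappa_h^{2(n-1)}$. Everything else is routine norm bookkeeping.
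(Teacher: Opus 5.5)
Your proof is correct and takes essentially the same approach as the paper: the same permutation to a block upper-triangular matrix with $W_h$ on every diagonal block, and the same use of $\rho_h\le\lambda_{\max}(W_h^\top W_h)\lesssim hC_h^2$ to collapse the lower-order powers into $\bigl(hC_h^2/\rho_h\bigr)^{n-1}$. The only difference is cosmetic: the paper multiplies on the left by $I_n\otimes W_h^\dagger$ and bounds $(I+\mathcal N_h)^{-1}$ through the finite Neumann series of the nilpotent part, while your back-substitution carries out that same inversion explicitly (and the recursion in fact gives $\|x_k\|\le\|y\|\rho_h^{-1/2}(1+\sqrt{rm}\,\kappa_h)^{n-k}$ with constant $1$).
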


The proof of Lemma~\ref{lem4} is deferred to
Appendix~\ref{sec:proofs}. For
$i\in[L]$, $h\geq1$, and $t=1,\ldots,T$, define augmented action-state vectors
\begin{equation}\label{eq:def_zeta}
\zeta_{i,h}(t)=[ u_{i,h}(t-1)^{\top},x_h(t)^{\top},   -1]^{\top}. 
\end{equation}
The following lemma shows how excitation of the \(\{\zeta_{i,h}(t)\}\) translates into excitation of the regression vectors \(\{\bar{\phi}_h\}\).
\begin{lemma}\label{lem:phi_excitation}
Suppose that Assumption \ref{assum2} holds. Assume that there exist positive sequences
$\{\rho_h\}_{h\ge1}$ and $\{C_h\}_{h\ge1}$,
with $\rho_h\to\infty$ as $h\to\infty$,
such that, for every $i\in[L]$ and $h\ge1$,
\begin{equation}\label{zeta}
\lambda_{\min}\left(
I_{m+n+1}+\sum_{\tau=1}^h \zeta_{i,\tau}(T-1) \zeta_{i,\tau}(T-1)^{\top}
\right)\ge \rho_h,
\end{equation}
and
\begin{equation}\label{trajectory_bound}
\max_{\tau\le h}
\sum_{t=1}^{T}
\|\zeta_{i,\tau}(t)\|
\le C_h .    
\end{equation}
Then there exists a constant $\mu>0$, independent of $h$,
such that, for every $h\ge1$, 
\begin{equation}\label{eq: lambda_min_bound}
\lambda_{\min}\left(\Lambda_1+
\sum_{\tau=1}^h
\bar\phi_{\tau}^{\top}\bar\phi_{\tau}
\right)
\ge
\mu\frac{h^{1-n}\rho_h^{n}}{C_h^{2n-2}},
\end{equation}
where $\{\bar\phi_{\tau}\}_{\tau\geq 1}$ is defined in \eqref{barphi}.
\end{lemma}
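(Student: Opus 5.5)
Since $\bar\phi_\tau=\operatorname{diag}\{\bar\phi_{1,\tau},\ldots,\bar\phi_{L,\tau}\}$ is block diagonal, I will work agent by agent and prove the bound \eqref{eq: lambda_min_bound} for $\Lambda_{1}+\sum_\tau\bar\phi_{i,\tau}^\top\bar\phi_{i,\tau}$ (absorbing $\Lambda_1\succ0$ into a constant). Here $\bar\phi_{i,\tau}$ is the noise-free regressor: it is \eqref{phi} with $E_{u_i}\boldsymbol Z_\tau$ and $E_x\boldsymbol Z_\tau$ replaced by $\boldsymbol u_{i,\tau}$ and $\boldsymbol X_\tau$. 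For a unit vector $\vartheta=[\operatorname{vech}(\Delta R)^\top,\operatorname{vech}(\Delta Q)^\top,\Delta\bar x^\top]^\top$, the product $\bar\phi_{i,\tau}\vartheta$ is the stacked costate residual. In time-indexed form it reads
\[
2\Delta R\,u_{i,\tau}(t)+\sum_{s>t}B_i^\top (A^\top)^{s-t-1}\bigl(\Delta Q\,x_\tau(s)-\Delta\bar x\bigr),
\]
which uses the structure of $\Gamma_i$ from \eqref{eq:gamma}, taken as $\mathcal L_T(A,B_i)$ up to a shift. Everything is linear in $(\Delta R,\Delta Q,\Delta\bar x)$, with coefficients built from entries of $\zeta_{i,\tau}(t)$.

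The first step is to reduce $\bar\phi_{i,\tau}^\top\bar\phi_{i,\tau}$ to a structured form. Because $\mathcal L_T(A^\top,B_i^\top)$-type operators have a full-rank lower-triangular structure, I will take the last $n+1$ time steps $t=T-n-1,\ldots,T-1$, which is possible since $T\ge n+2$. Applying a fixed invertible change of coordinates that depends only on $A,B_i$, I will extract from $\bar\phi_{i,\tau}\vartheta$ the vector $[B_i^\top;B_i^\top A^\top;\ldots;B_i^\top (A^{\top})^{n-1}](\Delta Q\,x_\tau(T-1)-\Delta\bar x)$, together with the $\Delta R\,u_{i,\tau}(T-2)$ term. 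Controllability of $(A,B_i)$ gives full column rank of this observability-type matrix, so up to constants $\|\bar\phi_{i,\tau}\vartheta\|^2$ dominates a quadratic form in $\Delta R,\Delta Q,\Delta\bar x$ with coefficients given by $\zeta_{i,\tau}(T-1)$. The interaction with earlier states $x_\tau(s)$, $s<T-1$, produces the off-diagonal entries $\tilde w^{(k)}$. Nonsingularity of $A$ lets me rewrite $x_\tau(T-1-k)$ in terms of $x_\tau(T-1)$ and inputs, so these entries are bounded by $C_h$ via \eqref{trajectory_bound}.

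The second step is to write the stacked matrix $[\bar\phi_{i,1};\ldots;\bar\phi_{i,h}]$, after this transformation and after vectorizing $\Delta Q\,x$ via $(x^\top\otimes I_n)$, exactly in the block form $\mathscr W_h$ of Lemma~\ref{lem4}. Its diagonal blocks $w_{\tau j}I$ are the entries of $\zeta_{i,\tau}(T-1)$, and its upper triangular blocks $\tilde w^{(k)}_{\tau j}I$ come from the coupling across the $n$ propagation steps. The associated $W_h$ satisfies $W_h^\top W_h\succeq(\sum_\tau\zeta_{i,\tau}(T-1)\zeta_{i,\tau}(T-1)^\top)\otimes I$. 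Since $\rho_h\to\infty$, the identity in \eqref{zeta} contributes only a constant, so $\lambda_{\min}(W_h^\top W_h)\ge\rho_h/2$ for large $h$; for small $h$ the claim follows from $\Lambda_1\succ0$ after shrinking $\mu$. Lemma~\ref{lem4} then gives $\lambda_{\min}(\mathscr W_h^\top\mathscr W_h)\ge\mu h^{1-n}\rho_h^n/C_h^{2n-2}$. Pulling back through the fixed invertible transformations and the duplication matrices $\mathscr D_m,\mathscr D_n$, which have full column rank, changes only $\mu$.

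The main obstacle is the first step: matching the regressor exactly to the upper-triangular Toeplitz-block pattern in Lemma~\ref{lem4}. This requires the repeated diagonal entry to be the same scalar $w_{ij}$ across all $n$ blocks, which is needed because the bound hinges on an $n$-fold triangular structure. I will try to obtain it by expressing each $x_\tau(T-1-k)$ as $A^{-k}x_\tau(T-1)$ plus input terms. The leading coefficients then all reduce to entries of $x_\tau(T-1)$, and the remainders go into the $\tilde w$ entries bounded by $C_h$. Handling the $\Delta R$ and $\Delta\bar x$ components consistently inside the same structure, through the $u_{i,\tau}(T-2)$ and $-1$ entries of $\zeta$, is the delicate bookkeeping.
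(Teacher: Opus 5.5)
Your overall architecture matches the paper's: restrict to one agent, keep only the last $n$ block rows of $\bar\phi_{i,\tau}$, use controllability through $\mathscr C_i=[A^{n-1}B_i,\ldots,B_i]^\top$, apply Lemma~\ref{lem4} with $W_h=[\zeta_{i,1}(T-1),\ldots,\zeta_{i,h}(T-1)]^\top\otimes I_m$, and pull back. The rows you want are the residuals at $0$-indexed times $T-n-1,\ldots,T-2$; the row at time $T-1$ is just $\Delta R\,u_{i,\tau}(T-1)$ and plays no role. However, the step you flag as the main obstacle is where the whole proof lives, and neither of your devices for it works.

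First, the per-sample claim fails. You assert that $\|\bar\phi_{i,\tau}\vartheta\|^2$ dominates, up to constants, a form in $\Delta R\,u_{i,\tau}(T-2)$ and $\mathscr C_i(\Delta Q\,x_\tau(T-1)-\Delta\bar x)$. No data-independent change of coordinates removes cross terms such as $\Delta R\,u_{i,\tau}(T-3)+B_i^\top(\Delta Q\,x_\tau(T-2)-\Delta\bar x)$, and these can cancel $B_i^\top A^\top(\Delta Q\,x_\tau(T-1)-\Delta\bar x)$. For instance, take $n=2$, $m=1$, $\Delta Q=0$. Whenever $u_{i,\tau}(T-2)\neq0$, you can pick $\Delta\bar x\neq0$ and $\Delta R$ so that the residuals at times $T-2$ and $T-3$ both vanish, yet $\mathscr C_i\Delta\bar x\neq0$. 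Were a uniform domination of this kind true, summing over $\tau$ would give $\lambda_{\min}\gtrsim\rho_h$, and Lemma~\ref{lem4} with its $h^{1-n}C_h^{2-2n}$ loss would be superfluous.

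Second, rewriting $x_\tau(T-1-k)$ as $A^{-k}x_\tau(T-1)$ plus input terms changes no entry of any matrix. It is the same number written differently, now involving the other agents' inputs, so it cannot create a Toeplitz pattern. Nonsingularity of $A$ plays no role in this lemma.

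The missing idea is a lifting of the unknown under which the pattern is automatic. Let $\mathscr S_j\in\mathbb R^{mn\times mn}$, $j\in[n]$, carry identity blocks exactly on the $(n-j)$-th block superdiagonal. Let $\mathscr E=[0,\ldots,0,I_m]^\top\in\mathbb R^{mn\times m}$, and set
\[
\begin{aligned}
\mathcal X_{i,\tau}^\top&=\Bigl[\sum_{j=1}^{n}u_{i,\tau}(T-n+j-2)^\top\otimes\mathscr S_j,\\
&\qquad\sum_{j=1}^{n}x_\tau(T-n+j-1)^\top\otimes\mathscr S_j,\ -\sum_{j=1}^{n}\mathscr S_j\Bigr],\\
\mathcal C_i&=\operatorname{diag}\bigl\{(I_m\otimes\mathscr E)\mathscr D_m,\ (I_n\otimes\mathscr C_i)\mathscr D_n,\ \mathscr C_i\bigr\}.
\end{aligned}
\]
Then the $n$ selected block rows of $\bar\phi_{i,\tau}$ equal $\mathcal X_{i,\tau}^\top\mathcal C_i$ exactly. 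In the $k$-th column group of $\mathcal X_{i,\tau}^\top$, the $(p,q)$ sub-block is (the $k$-th entry of $\zeta_{i,\tau}(T-1-(q-p))$) times $I_m$ for $q\ge p$, and zero otherwise.

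This gives everything Lemma~\ref{lem4} needs:
\begin{enumerate}
\item The diagonal scalar is automatically the $k$-th entry of $\zeta_{i,\tau}(T-1)$ in every slot.
\item The off-diagonal scalars $\tilde w^{(\ell)}$ are raw earlier-time entries of $\zeta_{i,\tau}$, bounded by $C_h$ directly from \eqref{trajectory_bound}.
\item The $\Delta R$ bookkeeping is handled by $\mathscr E$, which places $\operatorname{vec}(\Delta R)$ in the last slot only, so $u_{i,\tau}(T-2-\ell)$ lands on the $\ell$-th superdiagonal.
\end{enumerate}

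Lemma~\ref{lem4} then bounds $\lambda_{\min}\bigl(\sum_\tau\mathcal X_{i,\tau}\mathcal X_{i,\tau}^\top\bigr)$ on the whole lifted space $\mathbb R^{(m+n+1)mn}$. The map $\mathcal C_i$ is injective (by controllability, $\mathscr E$, and the duplication matrices); it maps into a larger space, so it is not an ``invertible transformation'' as you wrote. Hence $\sum_\tau\bar\phi_{i,\tau}^\top\bar\phi_{i,\tau}\succeq\sigma_{\min}^2(\mathcal C_i)\,\lambda_{\min}\bigl(\sum_\tau\mathcal X_{i,\tau}\mathcal X_{i,\tau}^\top\bigr)I_{d_\theta}$, which is the claimed bound.
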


The proof of Lemma~\ref{lem:phi_excitation} is given in Appendix~\ref{sec:proofs}. 
The next lemma establishes the lower bound for augmented action-state vectors
under the exploration mechanism of Algorithm~1.
\begin{lemma}\label{lem4.2}
Consider Algorithm~1 under
Assumptions~\ref{assum_ne}--\ref{assum1}. Suppose that there exists
a positive sequence $\{ C_h\}_{h\ge1}$ with
$C_h=O(\log h)$ such that, for every $i\in[L]$,
\begin{equation}\label{eq:trajectory_bound_aux}
\max_{\tau\le h}
\sum_{t=1}^{T}
\|\zeta_{i,\tau}(t)\|
\le C_h,
\qquad\text{a.s.}
\end{equation}
Then there exists a constant $\mu>0$ such
that, for each $i\in[L]$, 
{\small
\[
\lambda_{\min}\!\left(I_{m+n+1}+
\sum_{\tau=1}^{h}
\zeta_{i,\tau}(T-1)
\zeta_{i,\tau}(T-1)^\top
\right)
\ge
\mu\frac{h^{1-2\gamma}}{1+ C_h^2},\; a.s.
\]}
\end{lemma}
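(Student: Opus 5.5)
The plan is to split $\zeta_{i,\tau}(T-1)$ into a predictable part and a probing-driven part, and then run a martingale argument on the quadratic form. By Lemma~\ref{lem_ne}, $\boldsymbol U_\tau=-M^{-1}(\mathcal H\boldsymbol P_\tau+\eta_\tau)$, and the state trajectory is affine in $(\xi_\tau,\boldsymbol U_\tau)$. Hence $\zeta_{i,\tau}(T-1)$ is affine in $\boldsymbol P_\tau=\hat{\boldsymbol P}_\tau+\gamma_\tau v_\tau$, which lets us write
\[
\zeta_{i,\tau}(T-1)=a_{i,\tau}+\gamma_\tau G_i v_\tau ,
\]
where $a_{i,\tau}$ is $\mathcal F_{\tau-1}$-measurable (it depends on $\xi_\tau$ and $\hat{\boldsymbol P}_\tau$, and its last coordinate is $-1$). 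Here $G_i$ is a deterministic matrix whose last row is zero; it is built from the rows of $M^{-1}\mathcal H$ selecting $u_i(T-2)$ and $x(T-1)$.

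Fix a unit vector $z=[z_1^\top,z_2^\top,z_3]^\top$. Then
\[
\mathbb E\bigl[(z^\top\zeta_{i,\tau})^2\mid\mathcal F_{\tau-1}\bigr]=(z^\top a_{i,\tau})^2+\gamma_\tau^2\|G_i^\top z\|^2 .
\]
The key step is to show that this conditional variance is bounded below by $c\,\gamma_\tau^2/(1+\|a_{i,\tau}\|^2)$ uniformly in unit $z$. There are two cases.
\begin{itemize}
\item If $\|(z_1,z_2)\|$ is bounded below, use that the restriction of $G_i$ to the $(u,x)$ block has full row rank $m+n$. This should follow from Assumption~\ref{assum2}: because $T\ge n+2$ and $(A,B_i)$ is controllable, the probing on $p_{i}$ (in particular the action payment rate at times $\le T-2$) can independently move $u_i(T-2)$ and $x(T-1)$. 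The invertibility of $M$ (Assumption~\ref{assum_ne}) together with $A$ nonsingular is what lets us pass from the incentive to the equilibrium response.
\item If $\|(z_1,z_2)\|$ is small, then $|z_3|\approx1$, and $|z^\top a_{i,\tau}|\ge |z_3|-\|(z_1,z_2)\|\|a_{i,\tau}\|$. This stays bounded below unless $\|(z_1,z_2)\|\gtrsim 1/\|a_{i,\tau}\|$, and in that regime the probing term contributes at least $\gamma_\tau^2/\|a_{i,\tau}\|^2$.
\end{itemize}
Combining the two cases gives the bound. Since $\|a_{i,\tau}\|\le C_h+O(1)$ by \eqref{eq:trajectory_bound_aux} and the boundedness of $v_\tau$, summing over $\tau$ yields
\[
\sum_{\tau\le h}\mathbb E\bigl[\zeta\zeta^\top\mid\mathcal F_{\tau-1}\bigr]\succeq \frac{c\,h^{1-2\gamma}}{1+C_h^2}I ,
\]
using $\sum_{\tau\le h}\tau^{-2\gamma}\asymp h^{1-2\gamma}$.

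The realized sum differs from the conditional one by a matrix martingale $\sum_{\tau}\bigl(\zeta\zeta^\top-\mathbb E[\zeta\zeta^\top\mid\mathcal F_{\tau-1}]\bigr)$. Its increments are of size $O(\gamma_\tau\|a_{i,\tau}\|+\gamma_\tau^2)=O(\gamma_\tau C_\tau)$, since the $a a^\top$ part cancels and only the cross and centered quadratic terms remain. By Lemma~\ref{lem:martingale} this martingale is $O\bigl(\sqrt{h^{1-2\gamma}}\,C_h\log h\bigr)$, or more crudely $o(h^{1-2\gamma}/(1+C_h^2))$ once $C_h=O(\log h)$ and $\gamma<1/2$. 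The martingale is therefore dominated, and the stated lower bound follows for large $h$. The finitely many small $h$ are absorbed by the $I_{m+n+1}$ term after shrinking $\mu$.

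The main obstacle is the rank claim for $G_i$. We must verify that the map from the probing directions to $(u_i(T-2),x(T-1))$ through $-M^{-1}\mathcal H$ is surjective. This requires unpacking the structure of $\mathcal H$: perturbing the action-rate entries of $p_i(t)$ shifts agent $i$'s first-order condition by a unit vector in the $u_i(t)$ coordinate. I would first show that $M^{-1}\mathcal H$ restricted to those action-rate coordinates is onto $\mathbb R^{mLT}$, since $\mathcal H$ contains an identity-like block and $M$ is invertible. Then I would use controllability, with $T-1\ge n+1$ steps available, to reach any pair $(u_i(T-2),x(T-1))$ through agent $i$'s inputs alone.
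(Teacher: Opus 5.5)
Your proposal is correct and is essentially the paper's proof. You use the same ingredients:
\begin{itemize}
\item the same decomposition of $\zeta_{i,\tau}(T-1)$ into a predictable part plus $\tau^{-\gamma}$ times the probe image $-D_iv_\tau$;
\item the same full-row-rank argument for $D_i=\mathfrak A_iM^{-1}\mathcal H$, namely the identity blocks of $\mathcal H$ on all agents' action-rate coordinates, invertible $M$, and then controllability with $A$ nonsingular and $T\ge n+2$ for $\mathfrak A_i$;
\item the same martingale comparison via Lemma~\ref{lem:martingale}.
\end{itemize}
The only variation is how you lower-bound the conditional Gramian. You bound $\lambda_{\min}$ of each conditional second moment by $c\,\tau^{-2\gamma}/(1+\|a_{i,\tau}\|^2)$ and then sum. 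The paper sums first and applies Jensen with the average $\bar a_{i,h}$. Both give the same rate.

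Two small corrections are needed.
\begin{itemize}
\item In the paper's setup $\boldsymbol P_\tau$, and hence $v_\tau$, is already $\mathcal F_{\tau-1}$-measurable. Your conditional-variance identity therefore only holds if you condition on the pre-probe $\sigma$-field $\sigma(\xi_1,\ldots,\xi_\tau,\boldsymbol\epsilon_1,\ldots,\boldsymbol\epsilon_{\tau-1},v_1,\ldots,v_{\tau-1})$, which is the paper's $\bar{\mathcal F}_{\tau-1}$.
\item The surjectivity claim should rest on probing all agents' incentive coordinates, as in your last paragraph. Your first bullet suggests probing $p_i$ alone suffices, but that is not justified, since $M^{-1}$ couples the agents' responses.
\end{itemize}
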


The proof of Lemma~\ref{lem4.2} is given in Appendix~\ref{sec:proofs}. Building on these lemmas, we now present the proof of Lemma~\ref{lem:adaptive_excitation}.

{\bfseries\itshape Proof of Lemma~\ref{lem:adaptive_excitation}:}
The proof consists of two steps. First, we establish the strong consistency of the parameter estimates \(\hat{\theta}_h\). We then use this result to derive a sharper lower bound on the minimum eigenvalue of the information matrix \(\Lambda_h\).

Step 1: Strong consistency of the parameter estimates. By \eqref{certain} and the definition of the projection set $\mathcal{P}_h$, $\|\hat{\boldsymbol{P}}_h\|
\le \log(h+1)$. Since $\{\gamma_hv_h\}$ is bounded, \eqref{adaptive_incentive} implies \(\|\boldsymbol{P}_h\|=O(\log h)\). Moreover, since $\{\xi_h\}_h$ is bounded, $\{\eta_h\}_h$ is bounded. Furthermore, by \eqref{first-order} and \eqref{eq:stack}, the IONE strategy and the corresponding state trajectory in season $h$ satisfies
\begin{equation}\label{eq: ione_state}
\begin{aligned}
\boldsymbol{U}_h&=-M^{-1}(\mathcal{H}\boldsymbol{P}_h+\eta_h),\\
\boldsymbol{X}_h&=\bar{A}\xi_h+\Gamma \boldsymbol{U}_h.
\end{aligned}
\end{equation} 
It follows that $\|\boldsymbol{U}_h\|+\|\boldsymbol{X}_h\|=O(\log h), a.s.,$ as $h\to \infty.$
Thus, by the definition of $\zeta_{i,h}(t)$ and $\bar \phi_h$ in \eqref{eq:def_zeta} and \eqref{barphi},
\begin{equation}\label{eq:barphi_log}
\max_{\tau\leq h}\sum_{t=1}^{T}\|\zeta_{i,\tau}(t)\|=O(\log h),\; \|\bar\phi_h\|=O(\log h),
\quad\text{a.s.}
\end{equation}
By Lemma~\ref{lem4.2} and \eqref{eq:barphi_log}, there exists a constant $c_{\zeta}>0$ such that
\begin{equation}\nonumber
\lambda_{\min}
\left(I+
\sum_{\tau=1}^{h}
\zeta_{i,\tau}(T-1)\zeta_{i,\tau}(T-1)^{\top}
\right)
\ge
c_{\zeta}
\frac{h^{1-2\gamma}}{(\log h)^{2}},
\; \text{a.s.}
\end{equation}
Substituting this lower bound into Lemma~\ref{lem:phi_excitation}, together with \eqref{eq:barphi_log}, yields a constant \(c_{\phi}>0\) such that
\begin{equation}\label{eq:barphi_all_lower}
\lambda_{\min}
\left(\Lambda_1+
\sum_{\tau=1}^{h}
\bar\phi_{\tau}^{\top}\bar\phi_{\tau}
\right)
\ge
c_\phi
\frac{h^{1-2n\gamma}}{(\log h)^{4n-2}},
\quad\text{a.s.}
\end{equation}
On the other hand, by \eqref{eq:barphi_log},
\begin{equation}\label{eq:barphi_max_rough}
\lambda_{\max}
\left(\Lambda_1+
\sum_{\tau=1}^{h}
\bar\phi_{\tau}^{\top}\bar\phi_{\tau}
\right)
\leq
\sum_{\tau=1}^{h}\|\bar\phi_\tau\|^2
=O\bigl(h(\log h)^2\bigr),
\; \text{a.s.}
\end{equation}
By \eqref{alg}, $\Lambda_{h+1}
=\Lambda_1+
\sum_{\tau=1}^{h}
\bigl(\phi_\tau^\top\phi_\tau-\Xi\bigr).$ Recalling that $\phi_h=\bar\phi_h+\delta_h$, we obtain
from Lemma~\ref{lem:excitation_equivalence} and
\eqref{eq:barphi_max_rough} that, for every
$\eta_0\in(1/2,1-2n\gamma)$,
\begin{equation}\label{difference}
\left\|
\Lambda_{h+1}-\Lambda_{1}-\sum_{\tau=1}^{h}
\bar\phi_\tau^\top\bar\phi_\tau
\right\|
=o\left(h^{\eta_0}
\right),
\quad\text{a.s.}    
\end{equation}
Since $h^{\eta_0}=o\!\left(h^{1-2n\gamma}/(\log h)^{4n-2}\right)$,
it follows from \eqref{eq:barphi_all_lower},
\eqref{eq:barphi_max_rough}, and \eqref{difference} that
\begin{equation}\label{eq:Lambda_rough}
\begin{aligned}
\frac{h^{1-2n\gamma}}{(\log h)^{4n-2}}&=O(\lambda_{\min}(\Lambda_h)),\qquad\text{a.s.},\\
\lambda_{\max}(\Lambda_h)
&=O\bigl(h(\log h)^2\bigr),
\qquad\text{a.s.}
\end{aligned}
\end{equation}
These preliminary bounds give
\begin{equation*}
\begin{aligned}
&\frac{
\bigl(
\lambda_{\max}(\Lambda_h)
\log\log\lambda_{\max}(\Lambda_h)
\bigr)^{1/2}
}{
\lambda_{\min}(\Lambda_h)
}\\
=&
O\left(
h^{-1/2+2n\gamma}
(\log h)^{4n-1}
(\log\log h)^{1/2}
\right)
=o(1),\quad a.s.,  
\end{aligned}
\end{equation*}
where the last relation follows from \(\gamma<1/(4n)\).

Moreover, by Assumption~\ref{assum_noise} and the Borel--Cantelli lemma, for any 
\(\alpha\in(1/\beta,1/2)\), $\|\boldsymbol{\epsilon}_h\|=o(h^\alpha)$ as $h\to \infty.$ Thus,  $\|\delta_h\|=o(h^\alpha)$ as $h\to \infty.$
Since \(\log h=o(h^\alpha)\), \eqref{eq:barphi_log} implies 
\begin{equation}\label{phi_bound}
\|\phi_h\|
\leq
\|\bar\phi_h\|+\|\delta_h\|
=o(h^\alpha),\quad a.s.   
\end{equation}
Thus, all the conditions of Theorem~\ref{thm1} are satisfied, and it follows that $\hat\theta_h\to \theta$ almost surely as $h\to \infty$.

Step 2:  Sharpening the lower bound on $\lambda_{\min}(\Lambda_h)$. By Step 1, $\hat\theta_h\to\theta$ almost surely. Hence, $\{\hat M_h\}_{h\geq1}$ and $\{\hat\eta_h\}_{h\geq1}$ are bounded almost surely. Since $\{\xi_h\}_{h\geq1}$ is bounded, the expression of $\boldsymbol U_h^*$ in Appendix~\ref{sec:social} further implies that $\{\boldsymbol U_h^*\}_{h\geq 1}$ is bounded, and consequently $\{\hat b_h\}_{h\geq1}$ is bounded. The boundedness of $\{\xi_h\}_{h\geq1}$ also ensures that its closure is compact. By Proposition~\ref{pro1}, \(\mathcal H^*_h\) has full row rank on this compact set. Hence, by continuity, $\inf_{h\geq 1}
\sigma_{\min}\bigl(\mathcal H^*_h\bigr)>0,$ which in turn implies that $\mathcal H_h^{*\top}
\bigl(
\mathcal H^*_h\mathcal H_h^{*\top}
\bigr)^{-1}$ is uniformly bounded. This implies that $\{\bar{\boldsymbol{P}}_h\}_{h\geq 1}$ is almost surely bounded. Since the radius $\log(h+1)$ of the projection set $\mathcal{P}_h$ tends to infinity, there exists an almost surely finite $h_1$ such that $\|\bar{\boldsymbol{P}}_h\|
\leq \log(h+1)$ for all
$h\geq h_1.$ Thus,
\begin{equation}\label{eq: projection}
\hat{\boldsymbol{P}}_h=\bar{\boldsymbol{P}}_h,
\qquad
h\geq h_1.
\end{equation}
Thus, by \eqref{adaptive_incentive}, \(\{\boldsymbol{P}_h\}_{h\geq 1}\) is bounded almost surely. Consequently, by \eqref{eq: ione_state},
\begin{equation}\label{eq:trajectory_bounded}
\|\boldsymbol{U}_h\|+\|\boldsymbol{X}_h\|
=O(1),
\;\; \text{a.s.,}\;\; h\to \infty.
\end{equation}
Applying Lemma~\ref{lem:phi_excitation} once again, now using
\eqref{eq:trajectory_bounded}, eliminates the logarithmic loss and yields
\begin{equation}\label{eq:min_barphi}
\lambda_{\min}
\left(\Lambda_1+
\sum_{\tau=1}^{h}
\bar\phi_\tau^\top\bar\phi_\tau
\right)
\geq
\bar c_\phi h^{1-2n\gamma},
\; \text{a.s.},\; h\to \infty,
\end{equation}
for some \(\bar c_\phi>0\). Furthermore,
\begin{equation}\label{eq:max_barphi}
\lambda_{\max}
\left(\Lambda_1+
\sum_{\tau=1}^{h}
\bar\phi_\tau^\top\bar\phi_\tau
\right)
=O(h),
\; \text{a.s.},\; h\to \infty,
\end{equation}
since \(\|\bar\phi_h\|=O(\|\boldsymbol{U}_h\|+\|\boldsymbol{X}_h\|)=O(1),  a.s.\) Similarly, by \eqref{alg}, \eqref{eq:max_barphi}, and Lemma~\ref{lem:excitation_equivalence},
\[
\left\|
\Lambda_{h+1}-\Lambda_{1}-\sum_{\tau=1}^{h}
\bar\phi_\tau^\top\bar\phi_\tau
\right\|
=o\left(
h^{1-2n\gamma}\right),
\; \text{a.s.},\; h\to \infty. 
\]
Therefore, as $h\to \infty$,
\begin{equation}\label{eq:lambda_max_thm2}
h^{1-2n\gamma}=O(\lambda_{\min}(\Lambda_h)),
\quad
\lambda_{\max}(\Lambda_h)
=O(h),
\quad\text{a.s.}    
\end{equation}
This completes the proof of Lemma~\ref{lem:adaptive_excitation}.
\hfill $\square$

{\bfseries\itshape Proof of Theorem \ref{thm2}:}
The condition \eqref{growth_phi} of Theorem~\ref{thm1} has already been verified in \eqref{phi_bound}. Thus, applying the estimation-error bound in Theorem~\ref{thm1} and using
\eqref{eq:lambda_max_thm2} yields
\[
\begin{aligned}
\|\hat\theta_h-\theta\|^2
&=
O\left(
\frac{h\log\log h}
{h^{2-4n\gamma}}
\right)
=O\left(
(\log\log h)h^{-1+4n\gamma}
\right),\;\text{a.s.},
\end{aligned}
\]
which proves Theorem~\ref{thm2}.
\hfill $\square$
\subsection{Proof of Theorem~\ref{thm3}}
For each $h\geq 1$, let
\begin{equation}\nonumber
\boldsymbol{P}_h^{*}=
-\mathcal H_h^{*\top}
\bigl(\mathcal H^*_h\mathcal H_h^{*\top}\bigr)^{-1}
b_h,
\end{equation}
where $\mathcal H^*_h$ and $b_h$ are given in Proposition~\ref{pro1}. By their definitions,
$\boldsymbol{P}^*_h$ satisfies
\begin{equation}\label{eq:pstar_conditions}
\mathcal H\boldsymbol{P}^{*}_h
=
-M\boldsymbol{U}^*_h-\eta_h,\;\;\text{and}\;\;
\mathbf 1_{LT}^{\top}\mathcal G^*_h
\boldsymbol{P}^*_h=0.
\end{equation}
As shown in Step 2 of the proof of Lemma \ref{lem:adaptive_excitation},
 $\mathcal H_h^{*\top}
\bigl(
\mathcal H^*_h\mathcal H_h^{*\top}
\bigr)^{-1}$ is uniformly bounded. Moreover, since $\hat{M}_h$ and $\hat{\eta}_h$ depend locally Lipschitz continuously on the estimate $\hat{\theta}_h$ in a neighborhood of the true parameter $\theta$, we obtain
\begin{equation}\nonumber
\|b_h-\hat{b}_h\|=O(\|\hat\theta_h-\theta\|),\quad a.s.    
\end{equation}
Hence, by 
\eqref{ce} and the above bound of $\|b_h-\hat{b}_h\|$,
\begin{equation}\label{eq:incentive_bound}
\|\bar{\boldsymbol{P}}_h-\boldsymbol{P}^*_h\|
=
O(\|\hat\theta_h-\theta\|),
\qquad \text{a.s.}    
\end{equation}
By \eqref{adaptive_incentive} and \eqref{eq: projection},
$\boldsymbol{P}_h=\bar{\boldsymbol{P}}_h+h^{-\gamma}v_h$ for all $h\geq h_1$. Combining this with \eqref{eq:incentive_bound}, the convergence rate of
$\|\hat{\theta}_h-\theta\|$ in Theorem~\ref{thm2}, and
$\sup_{h\geq 1}\|v_h\|\leq \bar{v}$, 
\begin{equation}\label{eq:p_error_rate}
\begin{aligned}
&\|\boldsymbol{P}_h-\boldsymbol{P}^*_h\|
\leq
\|\bar{\boldsymbol{P}}_h-\boldsymbol{P}^*_h\|
+h^{-\gamma}\|v_h\|\\
=&O(\|\hat{\theta}_h-\theta\|
+h^{-\gamma})=O(\rho_h),
\; \text{a.s.},
\end{aligned}
\end{equation}
where $\rho_h
=
(\log\log h)^{1/2}h^{-1/2+2n\gamma}
+h^{-\gamma}.$ 

By \eqref{eq: ione_state} and \eqref{eq:pstar_conditions}, $\boldsymbol{U}_h-\boldsymbol{U}^*_h
=
-M^{-1}\mathcal H
\bigl(\boldsymbol{P}_h-\boldsymbol{P}^*_h\bigr).$
This yields
\begin{equation}\label{eq:trajectory_error_rate}
\|\boldsymbol{U}_h-\boldsymbol{U}^*_h\|
=O(\|\boldsymbol{P}_h-\boldsymbol{P}^*_h\|)=
O(\rho_h),
\; \text{a.s.}
\end{equation}
Since $\boldsymbol{U}^*_h$ minimizes
$\Phi(\xi_h,\cdot)$, we have $\nabla_{\mathbf u}\Phi
\bigl(\xi_h,\boldsymbol{U}^*_h\bigr)=0.$
Moreover, the Hessian of $\Phi(\xi_h,\cdot)$ is uniformly bounded
over the relevant bounded set. Hence, by a second-order Taylor
expansion and \eqref{eq:trajectory_error_rate},
\[
\Phi(\xi_h,\boldsymbol{U}_h) - \Phi(\xi_h,\boldsymbol{U}^*_h) = O\!\left( \|\boldsymbol{U}_h-\boldsymbol{U}^*_h\|^2 \right) = O(\rho_h^2), \;\; \text{a.s.}.
\]
Summing over $h=1,\ldots,H$ yields
\begin{equation}\nonumber
\mathcal R(H)
=
O\left(
(\log\log H)H^{4n\gamma-1}
+
H^{-2\gamma}
\right),
\quad\text{a.s.}
\end{equation}

Finally, let $\mathcal G_h
=
\operatorname{diag}\!\bigl\{
\mathcal G_{1,h},\ldots,\mathcal G_{L,h}
\bigr\}$ where $\mathcal G_{i,h}
=
\operatorname{diag}\!\bigl\{
g_{i,h}(0),\ldots,g_{i,h}(T-1)
\bigr\},$ $g_{i,h}(t)
=
\begin{bmatrix}
x_h(t)^\top, &
u_{i,h}(t)^\top, &
1
\end{bmatrix}.$ 
By the definition of $\gamma_i$ and the stacking of $\boldsymbol P_h$,
\[
\sum_{i=1}^{L}
\sum_{t=0}^{T-1}
\gamma_i\bigl(x_h(t),u_{i,h}(t), t\bigr)=\mathbf 1_{LT}^\top\mathcal G_h\boldsymbol P_h. 
\]
By
\eqref{eq:pstar_conditions},
$\mathbf 1_{LT}^\top\mathcal G^*_h\boldsymbol{P}^*_h=0$. Hence, 
\begin{equation}\label{eq:budget_proof}
\begin{aligned} &\bigl|\mathbf 1_{LT}^\top\mathcal G_h\boldsymbol P_h\bigr| =\bigl|\mathbf 1_{LT}^\top \bigl(\mathcal G_h\boldsymbol P_h-\mathcal G_h^*\boldsymbol P_h^*\bigr)\bigr|\\ =& O\Bigl( \|\mathcal G_h\|\,\|\boldsymbol P_h-\boldsymbol P_h^*\| + \|\mathcal G_h-\mathcal G_h^*\|\,\|\boldsymbol P_h^*\| \Bigr),\;\; a.s. 
\end{aligned}
\end{equation}
By the definition of $\mathcal G_h$ and $\mathcal G_h^{*}$,
\begin{equation}\label{eq:budget_difference}
\|\mathcal G_h-\mathcal G_h^*\| = O\!\left( \|\boldsymbol U_h-\boldsymbol U_h^*\| \right).
\end{equation}
Moreover, the boundedness of $\{\xi_h\}_{h\geq 1}$ implies that $\{\boldsymbol P_h^*\}_{h\geq 1}$ is bounded. Therefore, by \eqref{eq:p_error_rate}, \eqref{eq:trajectory_error_rate}, \eqref{eq:budget_proof} and \eqref{eq:budget_difference},
\[
\begin{aligned}
|\mathbf 1_{LT}^\top\mathcal G_h\boldsymbol{P}_h|
=O(\rho_h),
\qquad\text{a.s.}
\end{aligned}
\]
Therefore,
\[
\begin{aligned}
|\mathcal B(H)|
&\leq\frac{1}{H}\sum_{h=1}^H|\mathbf 1_{LT}^\top\mathcal G_h\boldsymbol{P}_h| \\
&=
O\left(
(\log\log H)^{1/2}H^{-1/2+2n\gamma}
+H^{-\gamma}
\right),
\qquad\text{a.s.}
\end{aligned}
\]
which proves the desired net-budget bound.
\hfill $\square$
\section{Simulation}\label{sec:sim}
We demonstrate the proposed adaptive incentive algorithm using the Transboundary Pollution~Problem (TPP) \cite{longDynamicGamesTransboundary2012, calvoDynamicModelsEnvironmental2013} between $L$ countries and an international planner. The international planner provides subsidies to the countries, and each country chooses its strategy of pollution mitigation based on its own domestic economic preferences and the international subsidy it receives. The planner's goal is to design the subsidy so as to coordinate the countries' mitigation strategies aligning with a global desirable outcome that reduces overall pollution.

\textbf{The LQ Transboundary Pollution Problem}.
We formulate a discrete dynamic game of the TPP.
The game involves \(L\) countries cumulatively affecting $n$ pollution sinks, such as seas or waterways that border them. Each country selects a pollution abatement action $u_i \in \mathbb{R}^m$, which affects the global pollution profile $x\in \mathbb{R}^n$ through the dynamics
\begin{equation}\label{eq:dynamics_sim}
x(t+1)
=
A x(t)+E-\sum_{i=1}^{L}B_i u_i(t),
\quad x(0)=\xi,
\end{equation} 
where \(x(t)\in\mathbb R^n\) represents the system state, 
\(u_i(t)\in\mathbb R^m\) is the mitigation strategy selected by country \(i\), and
\(E\in\mathbb R^n\) is a vector of baseline pollution inflow. The matrix $A$ represents the pollution absorption and mixing between sinks, and \(B_iu_i\) is the marginal influence due to the abatement of country \(i\)\footnote{Model \eqref{eq:dynamics_sim} is a discrete-time LQ formulation of stock-pollution dynamics in transboundary pollution
games \cite{pollution_li}.} 

Assume the international planner provides each country $i$ with an affine subsidy at time t, as specified in \eqref{incentive}. Given the incentive $\{p_i(t)\}_t$, country $i$'s cost is given by \eqref{eq:total_cost}, where $Q_i \succeq 0$ represents the weight on deviations from the desired state, while $R_i \succ 0$ represents the weight on its abatement effort.

The \emph{social welfare loss} of the international planner $\Phi(\xi, \boldsymbol{U})$ is specified in \eqref{social_cost},
where $Q^{*}\succeq0$ represents the planner's weight on the system state,
while $R_i^{*}\succ0$ represents the global cost associated with country
$i$'s abatement effort.
The international planner does not know the countries' preference parameters
\(\{Q_i,R_i\}_{i=1}^L\), as those would be affected by country-specific environmental priorities, abatement costs, and environmental policy, which are typically private or difficult to quantify. As a result, the planner calls for the repeated interaction with the countries over seasons (months) to estimate their private preference, then issues an adaptive subsidy $\{p_i(t)\}_t$ to each country to minimize the social welfare cost. The planner observes the agents' state-action trajectory and uses this information to update the incentive design for the next season. We validate the performance of the proposed adaptive algorithm through the following numerical experiments.

\textbf{Simulation results.} We choose $n=2, L=3, m=1$, and set
\[
A=\begin{bmatrix}
0.75 & 0.18\\
0.14 & 0.8
\end{bmatrix},\quad E=
\begin{bmatrix}
1\\
1
\end{bmatrix},
\]
and
\[
B_1=
\begin{bmatrix}
1\\
0.5
\end{bmatrix},\quad B_2=
\begin{bmatrix}
0.4\\
1
\end{bmatrix},
\quad
B_3=
\begin{bmatrix}
0.5\\
0.3
\end{bmatrix}.
\]
Each season has horizon length \(T=15\), and the initial pollution stock is
fixed as $\xi=[2, 2]^{\top}$ for each $h\geq 1.$
The true preference matrices are $Q_1=\operatorname{diag}(1, 0.5)$, $Q_2=\operatorname{diag}(0.4, 1)$, $Q_3=\operatorname{diag}(0.1, 0.1)$, 
$R_1=1, R_2=3,$ and $ R_3=0.5.$ The social cost parameters are chosen as $Q^{*}=\operatorname{diag}(1, 1)$ and $R_1^{*}=1, R_2^{*}=3, R_3^{*}=0.5.$ At each season \(h\), the planner observes noisy
state and action trajectories where noise $\epsilon_h$ is Gaussian with variance $10^{-5}$. The adaptive incentive is given by \eqref{adaptive_incentive}, where \(v_h\) is a Rademacher probing signal,  $\gamma_h=h^{-0.03}$.

We compare the proposed Algorithm~1 against a certainty-equivalence baseline, denoted IOC-CE, which employs the inverse optimal control or game estimation~\cite{ inverse_game_Observations} to repeatedly estimate the countries' preference  from all trajectories collected so far, and then issues the incentive corresponding to the current estimate as given in \eqref{adaptive_incentive}, without additional exploratory probing. 
Figure~\ref{fig:estimation_error_comparison} shows that the estimation error \(\|\hat\theta_h-\theta\|^2\) under Algorithm~1 steadily decreases with the season $h$, whereas IOC-CE stagnates at a much larger level, since its incentive is not designed to actively excite the trajectories for identification. This is confirmed by the minimum eigenvalue of the information matrix $\lambda_{min}(\Lambda_h-\Lambda_1)$, which grows steadily under Algorithm~1 but remains essentially flat under IOC-CE, highlighting the importance of active excitation and a principled exploration–exploitation trade-off for closed-loop incentive design.

\begin{figure}[htbp]
    \centering    \includegraphics[width=0.5\textwidth]{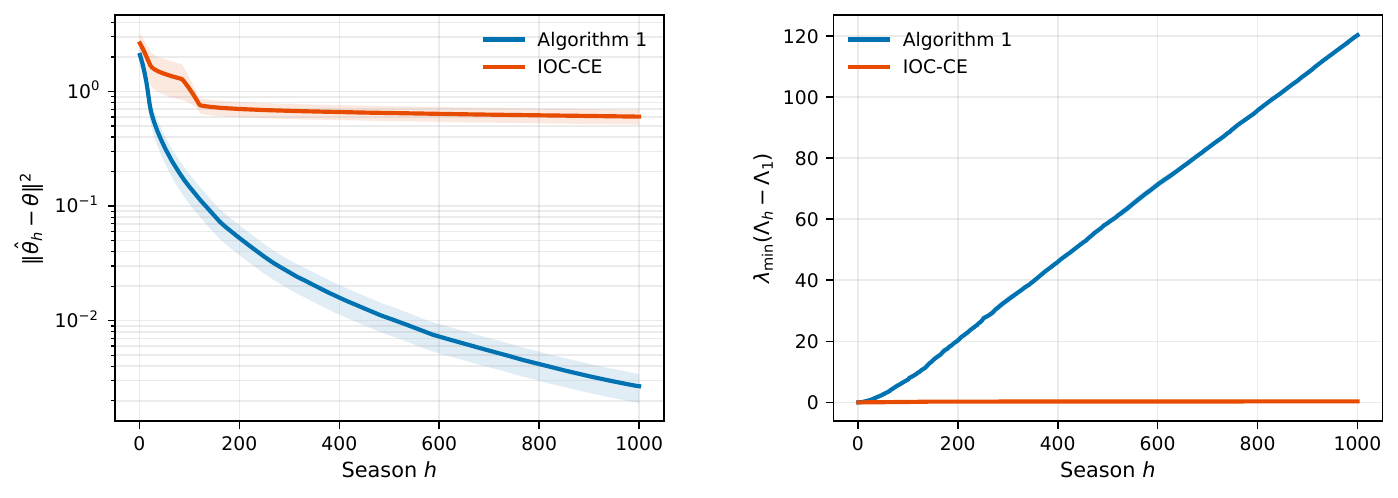}
\caption{Comparison of learning performance and excitation. Left: squared parameter estimation error. Right: minimum eigenvalue of the accumulated information matrix. Solid lines represent the mean over independent runs (each with independently sampled observation noise and exploration signal), and shaded regions indicate one standard error of the mean.}
\label{fig:estimation_error_comparison}
\end{figure}

Figure~\ref{fig:subsidy} plots the trajectories of the average social-cost regret $\mathcal{R}(h)$ and the average net budget $\mathcal{B}(h)$. Both quantities approach zero, consistent with the theoretical guarantees established in Theorem~\ref{thm3}. 

\begin{figure}[htbp]
    \centering    \includegraphics[width=0.5\textwidth]{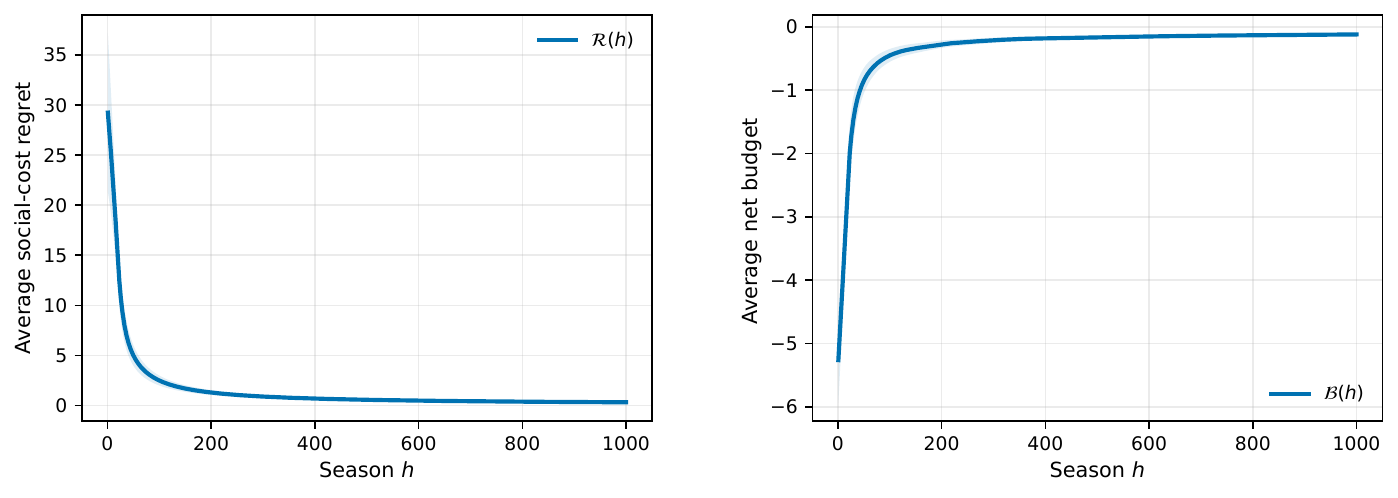}
\caption{Trajectories of the average social-cost regret $\mathcal{R}(h)$ (left) and the average net budget $\mathcal{B}(h)$ (right) under the proposed Algorithm 1. The solid curves show the empirical means, and the shaded regions indicate one standard deviation.}
 \label{fig:subsidy}
\end{figure}

        
	\section{Concluding remarks}\label{se6}	
Under the information asymmetry, we studied adaptive incentive design between a planner and $L$ agents that interact through a dynamic game, in which the agents play an open-loop Nash equilibrium strategy. The agents' costs depend on preference parameters over the system state and actions, which are unknown to the planner.
We proposed an online learning algorithm to estimate agents’ unknown preferences
 and established its strong consistency under a non-PE condition on the observed trajectories.
 Building on this, we developed an adaptive incentive algorithm by using certainty-equivalent design and decaying random probing, and proved that it achieves sublinear regret and net budget relative to the oracle policy. Future work includes extending the framework to closed-loop Nash equilibrium strategies, partial observations, and nonlinear dynamics with nonlinear incentive mechanisms.

\bibliographystyle{IEEEtran}
\bibliography{reference}

\appendices
\section{}\label{sec:social}
Below, we provide an explicit characterization of the socially optimal solution \(\boldsymbol{U}_h^*\) for each season \(h\).

Define
$B=[B_1,\ldots,B_L]$ and $R^*=\operatorname{diag}\{R_1^*,\ldots,R_L^*\}$. Let
$V^{*}(T)=0, r^{*}(T)=0$. For $t=T-1,\ldots,0$, define
\begin{equation*}
\begin{aligned}
K^*(t)
&=
\bigl(R^*+B^\top V^*(t+1)B\bigr)^{-1}
B^\top V^*(t+1)A,\\
V^*(t)
&=
Q^*+A^\top V^*(t+1)A
-K^{*\top}(t)B^\top V^*(t+1)A,\\
r^*(t)
&=
Q^*x^{d*}+A^\top r^*(t+1)
-K^{*\top}(t)B^\top r^*(t+1).
\end{aligned}
\label{eq:riccati_recursion}
\end{equation*}
For a given initial state $\xi_h$, write the socially optimal
strategy defined in \eqref{social_solution} as
$\boldsymbol U_h^*
=[\boldsymbol u_{1,h}^{*\top},\ldots,
\boldsymbol u_{L,h}^{*\top}]^{\top}$, where
$
\boldsymbol u_{i,h}^*
=[u_{i,h}^*(0)^{\top},\ldots,u_{i,h}^*(T-1)^{\top}]^{\top}
$.
For $t=0,\ldots,T-1$, the joint control
$\boldsymbol u_h^*(t)
=[u_{1,h}^*(t)^{\top},\ldots,u_{L,h}^*(t)^{\top}]^{\top}$
is given by
\begin{equation*}
\begin{aligned}
\boldsymbol{u}_h^{*}(t)&={}\bigl(R^*+B^\top V^*(t+1)B\bigr)^{-1}
B^\top r^*(t+1)-K^{*}(t)x^*_h(t).
\end{aligned}
\label{eq:optimal_control}
\end{equation*}
The corresponding optimal state trajectory is generated recursively by $x^*_h(t+1)
=
Ax^*_h(t)+B\boldsymbol{u}_h^{*}(t)$ with $x^*_h(0)=\xi_h$.

\section{}\label{sec:proofs}

\noindent{\bfseries\itshape Proof of Lemma~\ref{lem_ne}:}
Define
\begin{equation}\label{uniquness}
\begingroup
\setlength{\arraycolsep}{2pt}
\small
M=
\begin{bmatrix}
\Gamma_1^\top \bar Q_1 \Gamma_1+\bar R_1
&
\cdots
&
\Gamma_1^\top \bar Q_1 \Gamma_L
\\
\vdots
&
\ddots
&
\vdots
\\
\Gamma_L^\top \bar Q_L \Gamma_1
&
\cdots
&
\Gamma_L^\top \bar Q_L \Gamma_L+\bar R_L
\end{bmatrix},
\endgroup
\end{equation}
where
\begin{equation}\label{eq:gamma}
\Gamma_i=
\begin{bmatrix}
0_{n\times m(T-1)} & 0_{n\times m}\\
\mathcal L_{T-1}(A,B_i) & 0_{n(T-1)\times m}
\end{bmatrix},    
\end{equation}
and $\bar Q_i=I_T\otimes Q_i, 
\bar R_i=I_T\otimes R_i.$
The operator $\mathcal L_{\cdot}(\cdot,\cdot)$ is defined
in~\eqref{eq: operator_L}. Moreover, define $\eta=[\eta_1^{\top}, \cdots \eta_L^{\top}]^{\top}$, $\eta_i =
\Gamma_i^\top
\Bigl[
(I_T \otimes Q_i)\bar A\,\xi
-\mathbf{1}_T \otimes (Q_i x_i^d)
\Bigr],$ and
\begin{equation}\label{eta}
\begin{gathered}
\bar A =
\begin{bmatrix}
I_n & A^\top & \cdots & (A^{T-1})^\top
\end{bmatrix}^{\top},
\\[4pt]
S_x =
I_T \otimes
\begin{bmatrix}
I_n & 0_{n\times m} & 0_{n\times 1}
\end{bmatrix},
\\[4pt]
S_u =
I_T \otimes
\begin{bmatrix}
0_{m\times n} & I_m & 0_{m\times 1}
\end{bmatrix},
\\[4pt]
\mathcal H
=
\frac{1}{2}
\operatorname{diag}\!
\left\{
\Gamma_1^\top S_x+S_u,\ldots,
\Gamma_L^\top S_x+S_u
\right\}.
\end{gathered}
\end{equation}
For a strategy profile
$\boldsymbol{U}=[\boldsymbol{u}_1^\top,\ldots,\boldsymbol{u}_L^\top]^\top$, the corresponding stacked state trajectory $\boldsymbol{X}=[x(0)^\top,\ldots,x(T-1)^\top]^\top$ satisfies
\begin{equation}\label{eq:stack}
X
=
\bar A\xi+\Gamma\boldsymbol{U},
\end{equation}
where $\Gamma=[\Gamma_1, \ldots, \Gamma_L]$.
Let
$\bar Q_i=I_T\otimes Q_i$ and
$\bar R_i=I_T\otimes R_i$.
Substituting \eqref{eq:stack} into player $i$'s cost and
differentiating with respect to $\mathbf u_i$ gives
\begin{equation}\label{eq:foc_i}
\begin{aligned}
0
={}&
2\Gamma_i^\top\bar Q_i
\left(
\bar A\xi+\Gamma\boldsymbol{U}
-\mathbf 1_T\otimes x_i^d
\right)
+2\bar R_i\mathbf u_i \\
&+
\left(\Gamma_i^\top S_x+S_u\right)\mathbf p_i .
\end{aligned}
\end{equation}
Moreover, the Hessian of player \(i\)'s cost with respect to \(\boldsymbol{u}_i\) is $2(\Gamma_i^\top\bar Q_i\Gamma_i+\bar R_i)\succ0.$ Hence, for any fixed strategies of the other players,
player \(i\)'s cost is strictly convex in \(\boldsymbol{u}_i\).
Therefore, \eqref{eq:foc_i} is both necessary and sufficient for
\(\boldsymbol{u}_i\) to be player \(i\)'s unique best response. Stacking \eqref{eq:foc_i} over $i=1,\ldots,L$ yields $M\boldsymbol{U}
=
-\eta-\mathcal H\boldsymbol{P}.$
By Assumption~\ref{assum_ne}, $M$ is nonsingular. Thus, \eqref{first-order} follows. \hfill $\square$

\noindent{\bfseries\itshape Proof of Proposition \ref{pro1}:}
By \eqref{first-order} and the nonsingularity of $M$, an incentive
$\boldsymbol{P}_h$ satisfies \eqref{eq:equilibrium_social_cost} if and only if
\begin{equation}\label{eq:optimal_strategy}
 M\boldsymbol{U}^*_h=-\mathcal H\boldsymbol{P}_h
-\eta_h.   
\end{equation}
If the incentive \(\boldsymbol{P}_h\) satisfies \eqref{eq:optimal_strategy}, then the total net transfer can be written as
\[
\sum_{i=1}^{L}\sum_{t=0}^{T-1}
\gamma_i\bigl(x^*_h(t),u_{i,h}^*(t), t\bigr)
=
\mathbf 1_{LT}^{\top}\mathcal G^*_h\boldsymbol{P}_h.
\]
It follows that an incentive satisfies \eqref{eq:equilibrium_social_cost} and \eqref{eq:min_payment_incentive} if and only if it
satisfies both \eqref{eq:optimal_strategy} and $1_{LT}^{\top}\mathcal G^*_h\boldsymbol{P}_h=0.$ Equivalently, 
\[
\mathcal H^*_h\boldsymbol{P}_h=-b_h,
\]
where $\mathcal H_h^{*}$ and $b_h$ are given in \eqref{eq:optimal_p}.
By the definitions of $S_x$ and $S_u$, each diagonal block
$\Gamma_i^\top S_x+S_u$ of $\mathcal H$ contains an $I_{mT}$ submatrix, so \(\mathcal H\) has full row rank. Moreover, the columns associated with the constant incentive
coefficients are zero in \(\mathcal H\) but equal to one in
\(\mathbf 1_{LT}^{\top}\mathcal G_h^*\). Thus, the last row of \(\mathcal H_h^*\) is linearly independent of
the rows of \(\mathcal H\), and hence \(\mathcal H_h^*\) has full row rank. Therefore, the complete set of
solutions is
\[
\boldsymbol{P}_h
\in
-\mathcal H_h^{*\top}
\bigl(
\mathcal H^*_h\mathcal H_h^{*\top}
\bigr)^{-1}
b_h
+
\ker\bigl(\mathcal H_h^{*}\bigr),
\]
which proves Proposition \ref{pro1}.
\hfill $\square$

\noindent{\bfseries\itshape Proof of Lemma~\ref{lem31}:}
For each agent $i$, partition $p_{i,h}(t)
=
[
p_{i,h}^x(t)^{\top},p_{i,h}^u(t)^{\top},p_{i,h}^0(t)]^{\top}
,$ where $p_{i,h}^x(t)\in\mathbb R^n$,
$p_{i,h}^u(t)\in\mathbb R^m$ and $p_{i,h}^0(t)\in\mathbb R$. By the Pontryagin maximum principle,
there exists a costate sequence
$\{\lambda_{i,h}(t)\}_{t=0}^{T-1}$ satisfying $\lambda_{i,h}(T-1)=0$ and
\begin{align}
\label{eq:pmp_terminal}
\lambda_{i,h}(t-1)
&=
Q_i\bigl(x_h(t)-x_i^d\bigr)
+A^\top\lambda_{i,h}(t)
+\frac12 p_{i,h}^x(t), \nonumber\\
-R_i u_{i,h}(t)
&=B_i^\top\lambda_{i,h}(t)
+\frac{1}{2} p_{i,h}^u(t),
\qquad t=0,\ldots,T-1. \nonumber
\end{align}
Stacking these relations together with the terminal condition
and defining
$\lambda_{i,h}
=[\lambda_{i,h}(0)^\top,\ldots,\lambda_{i,h}(T-1)^\top]^\top$,
we obtain
\begin{equation}\label{eq:adjoint}
\begin{gathered}
\mathcal A\lambda_{i,h}
=
(\mathcal{J}_T\otimes Q_i)
\left(
\boldsymbol{X}_h-\mathbf 1_T\otimes x_i^d
\right)
+
\frac{1}{2}(\mathcal{J}_T\otimes I_n)S_x\mathbf{p}_{i,h},  \\
-(I_T\otimes R_i)\boldsymbol{u}_{i,h}
=
(I_T\otimes B_i^\top)\lambda_{i,h}
+
\frac{1}{2}S_u\mathbf{p}_{i,h}.
\end{gathered}
\end{equation}
where $\mathcal A=I_{nT}-(\mathcal{J}_T\otimes A^\top),$ and
\[
\mathcal{J}_T =
\begin{bmatrix}
0 & 1 & \cdots & 0\\
\vdots & \vdots & \ddots & \vdots\\
0 & 0 &\cdots & 1\\
0 & 0& \cdots& 0
\end{bmatrix}
\in \mathbb{R}^{T\times T}.
\]
Under the adopted stacking convention,
$\Gamma_i^\top=
(I_T\otimes B_i^\top)\mathcal A^{-1}(\mathcal{J}_T\otimes I_n).$ Let $\bar Q_i=I_T\otimes Q_i$,
$\bar R_i=I_T\otimes R_i$, and 
$\bar{x}_i^d=Q_i x_i^d$. 
Eliminating $\lambda_{i,h}$ from \eqref{eq:adjoint} yields
\begin{equation}\label{eq:regression}
\bar R_i\boldsymbol{u}_{i,h}
+\Gamma_i^{\top}\bar Q_i\boldsymbol{X}_h
-\Gamma_i^{\top}(\mathbf{1}_T\otimes \bar{x}_i^d)=
-\frac{1}{2}\left(\Gamma_i^{\top}S_x+S_u\right)\boldsymbol{p}_{i,h}.
\end{equation}

It remains to rewrite the three terms on the left-hand side of~\eqref{eq:regression} as linear functions of the unknown parameters $R_i, Q_i$ and $\bar{x}_i^{d}$. By the standard property of the Kronecker product and vectorization, we have
\begin{equation}\label{kronecker}
    \begin{aligned}
&\bar R_i\boldsymbol{u}_{i,h}=\mathcal{K}_{m,T}(\boldsymbol{u}_{i,h})\operatorname{vec}(R_i),\\&\bar Q_i\boldsymbol{X}_{h}=\mathcal{K}_{n,T}(\boldsymbol{X}_{h})\operatorname{vec}(Q_i),\\
&(\mathbf{1}_{T}\otimes \bar{x}_i^d)=(\mathbf{1}_{T}\otimes I_n)\bar{x}_i^d.       
    \end{aligned}
\end{equation}
For each $i\in[L]$, define
\begin{equation}\label{eq:bar_phi}
\bar{\phi}_{i,h}=\begin{bmatrix}    \mathcal{K}_{m,T}(\boldsymbol{u}_{i,h})\mathscr{D}_m & \Gamma_i^{\top} \mathcal{K}_{n,T}(\boldsymbol{X}_{h})\mathscr{D}_n & -\Gamma_i^{\top}(\mathbf{1}_{T}\otimes I_n)
\end{bmatrix}.    
\end{equation}
The matrix $\bar{\phi}_h$ defined in
\eqref{barphi} can then be written as $\bar{\phi}_h=\operatorname{diag}\{\bar{\phi}_{1,h},\cdots \bar{\phi}_{L,h}\}$.
Combining~\eqref{eq:regression} with \eqref{kronecker}, and using $\operatorname{vec}(R_i)=\mathscr{D}_m\operatorname{vech}(R_i), \operatorname{vec}(Q_i)=\mathscr{D}_n\operatorname{vech}(Q_i)$, where \(\mathscr D_m\) and \(\mathscr D_n\) are the duplication matrices of
appropriate dimensions, we obtain
$y_{i,h}=\bar{\phi}_{i,h}\theta_i.$ Since $E_x\boldsymbol{Z}_{h}=\boldsymbol{X}_{h}+E_x\boldsymbol{\epsilon}_h$ and $E_{u_i}\boldsymbol{Z}_{h}=\boldsymbol{u}_{i,h}+E_{u_i}\boldsymbol{\epsilon}_h$, we have $\phi_{i,h}=\bar{\phi}_{i,h}+\delta_{i,h}.$
Thus, 
\[
(\phi_{i,h}-\delta_{i,h})\theta_i=y_{i,h}.
\]
Stacking these identities over all agents $i=1,\ldots,L$ yields $(\phi_h-\delta_h)\theta=y_h$
which proves Lemma~\ref{lem31}.
\hfill $\square$

{\bfseries\itshape Proof of Lemma~\ref{lem:preliminary_bounds}:}
We first establish lower bounds for $g_h$ and $\lambda_{\max}(\Lambda_h)$.
Fix any $i\in[L]$. By the block structure of $\Gamma_i$ in
\eqref{eq:gamma},
$\Gamma_i^\top(\mathbf 1_T\otimes I_n)$ contains $B_i^\top$
as one of its blocks. Since $(A,B_i)$ is controllable,
$B_i\neq0$, and hence
$\Gamma_i^\top(\mathbf 1_T\otimes I_n)\neq0$.
Thus, there exists a unit vector $q_i\in\mathbb R^n$ such that $\Gamma_i^\top(\mathbf 1_T\otimes I_n)q_i\neq0.$
Define
\[
a_i
=
\operatorname{col}\!\left\{
0_{(i-1)d_\theta},
0_{\frac{m(m+1)}{2}},
0_{\frac{n(n+1)}{2}},
q_i,
0_{(L-i)d_\theta}
\right\}.
\]
By the definitions of $\phi_h$ and $\delta_h$ in~\eqref{phi},
\[
\delta_h a_i=0,
\qquad
\|\phi_h a_i\|^2
=
\left\|
\Gamma_i^\top(\mathbf 1_T\otimes I_n)q_i
\right\|^2>0.
\]
Thus, by $\bar{\phi}_h=\phi_h-\delta_h$,
\[
\|\bar\phi_h\|^2
\ge
\|\bar\phi_h a_i\|^2
=
\|\phi_h a_i\|^2>0,
\]
and hence $g_h
=
\sum_{j=1}^h\|\bar\phi_j\|^2
\ge h\left\|
\Gamma_i^\top(\mathbf 1_T\otimes I_n)q_i
\right\|^2.$
Therefore, $h=O(g_h), a.s.$ Moreover, since $\Xi\succeq0$ and
$a_i^\top\Xi a_i
=
\mathbb E[\|\delta_h a_i\|^2\mid\mathcal F_{h-1}]=0$,
we also have $\Xi a_i=0$. By the recursion of $\Lambda_h$ in~\eqref{alg},
\begin{equation}\label{eq:lambda_recursion}
\Lambda_{h+1}=
\Lambda_1+\sum_{j=1}^h\phi_j^\top\phi_j
-h\Xi.    
\end{equation}
 Hence, $\lambda_{\max}(\Lambda_{h+1})
\ge
a_i^\top\Lambda_1a_i+h\left\|
\Gamma_i^\top(\mathbf 1_T\otimes I_n)q_i
\right\|^2,$ which proves~$h=O(\lambda_{\max}(\Lambda_{h+1}))$. 

Furthermore, by
$\bar\phi_h=\phi_h-\delta_h$, we have $g_h
\leq
2\sum_{j=1}^{h}\|\phi_j\|^2
+
2\sum_{j=1}^{h}\|\delta_j\|^2.$
Taking traces of \eqref{eq:lambda_recursion} gives
\[
\sum_{j=1}^h\|\phi_j\|^2
=
O\!\left(
\lambda_{\max}(\Lambda_{h+1})+h
\right),
\quad\text{a.s.}
\]
Since $\sum_{j=1}^{h}\|\delta_j\|^2=O(h)$ almost surely, 
\[
g_h
=
O\!\left(
\lambda_{\max}(\Lambda_{h+1})+h
\right)
=
O\!\left(
\lambda_{\max}(\Lambda_{h+1})
\right),
\quad\text{a.s.},
\]

Next, we establish the growth rates of
$\bar\phi_h$. Choose $\bar\alpha\in
\left(
\max\left\{\alpha,\frac{1}{\beta}\right\},
\frac12
\right).$ By Assumption~\ref{assum_noise} and the conditional
Borel--Cantelli lemma, $\|\epsilon_h\|
=
o(h^{\bar\alpha})$ almost surely as $h\to \infty.$
Since $\delta_h$ depends linearly on $\epsilon_h$ through fixed
matrices, there exists a constant $C>0$ such that
$\|\delta_h\|\leq C\|\epsilon_h\|$. Hence,
\begin{equation}\label{eq:delta_growth}
\|\delta_h\|
=
o(h^{\bar\alpha}),
\; \text{a.s.,}\quad h\to \infty.
\end{equation}
Moreover, \eqref{growth_phi} gives
$\|\phi_h\|=o(h^\alpha)=o(h^{\bar\alpha})$ because
$\bar\alpha>\alpha$. Therefore, using
$\bar\phi_h=\phi_h-\delta_h$,
\begin{equation}\label{eq:barphi_growth}
\|\bar\phi_h\|
\leq
\|\phi_h\|+\|\delta_h\|
=
o(h^{\bar\alpha}),
\; \text{a.s.,}\quad h\to \infty.
\end{equation}  
\hfill $\square$

\noindent{\bfseries\itshape Proof of Lemma~\ref{lem4}:}
By suitable row and column permutations, there exist permutation
matrices $P_1$ and $P_2$ such that
\[
P_1^\top\mathscr W_hP_2
=
\bar{\mathscr W}_h
:=
\begin{bmatrix}
W_h & \tilde{W}_h^{(1)} & \cdots & \tilde{W}_h^{(n-1)}\\
0   & W_h       & \cdots & \tilde{W}_h^{(n-2)}\\
\vdots & \vdots & \ddots & \vdots\\
0 & 0 & \cdots & W_h
\end{bmatrix},
\]
where, for $k=1,\ldots, n-1$,
$\tilde{W}_h^{(k)}=[\tilde{w}_{ij}^{(k)}I_m]_{i,j}$
is the block matrix with $(i,j)$th block
$\tilde{w}_{ij}^{(k)}I_m$.
By \eqref{eq:d_h}, $\lambda_{\min}(W_h^\top W_h)\ge\rho_h,$
so
$W_h^\dagger=(W_h^\top W_h)^{-1}W_h^\top$ satisfies
$\|W_h^\dagger\|
\le \rho_h^{-1/2}.$ Moreover, the bounds $|\tilde{w}_{ij}^{(k)}|\le C_h$ and $|w_{ij}|\le C_h$ imply
$\|\tilde{W}_h^{(k)}\|\le C_h\sqrt{hr}$ and $\|W_h\|\le C_h\sqrt{hr}$. Hence
\begin{equation}\label{eq:mathcal_n}
(I_n\otimes W_h^\dagger)\bar{\mathscr W}_h
=
I+\mathcal N_h,
\quad
\|\mathcal N_h\|
\le
C\,C_h\sqrt{\frac{h}{\rho_h}},
\end{equation}
where $\mathcal N_h$ is strictly block upper triangular and
$\mathcal N_h^n=0$.
Since $(I+\mathcal N_h)^{-1}
=
\sum_{q=0}^{n-1}(-\mathcal N_h)^q,$ and
$\rho_h\le\lambda_{\max}(W_h^\top W_h)
\le h r C_h^2$,
there exists $C_0>0$, independent of $h$, such that
\begin{equation}\label{eq: bound_n}
\|(I+\mathcal N_h)^{-1}\|
\le
C_0
\left(
\frac{hC_h^2}{\rho_h}
\right)^{(n-1)/2}.
\end{equation}
Consequently, by \eqref{eq:mathcal_n} and \eqref{eq: bound_n}, for every $x$, $\|\bar{\mathscr W}_h x\|
\ge
\sqrt{\rho_h}\,
\|(I+\mathcal N_h)x\|\ge
C_0^{-1}
h^{-(n-1)/2}
\rho_h^{n/2}
C_h^{-(n-1)}
\|x\|.$
Since permutation matrices preserve singular values,
\[
\lambda_{\min}
\bigl(\mathscr W_h^\top\mathscr W_h\bigr)
\ge
\mu
\frac{h^{1-n}\rho_h^n}{C_h^{2n-2}}
\]
for some constant $\mu>0$ independent of $h$.
\hfill $\square$

{\bfseries\itshape Proof of Lemma \ref{lem:phi_excitation}:}
For \(j=1,\ldots,n\), let \(\mathscr S_j\in\mathbb R^{mn\times mn}\) be the block matrix whose
\((p,q)\)-th \(m\times m\) block is
\[
[\mathscr S_j]_{pq}
=\begin{cases}
I_m,& q-p=n-j,\\
0_m,&\text{otherwise}.
\end{cases}
\]
In particular, $\mathscr S_n=I_{mn}$. For $i\in [L]$ and $h\geq 1$, define
{\small \[
\begin{aligned}
\mathcal X_{i,h}
={}
\Bigl[
&\sum_{j=1}^{n}
u_{i,h}(T-n+j-2)^\top
\otimes\mathscr S_j,
\\
&\quad 
\sum_{j=1}^{n}
x_h(T-n+j-1)^\top
\otimes\mathscr S_j,\;\;
-\sum_{j=1}^{n}\mathscr S_j
\Bigr]^{\top}.
\end{aligned}
\]}
and $\mathscr X_{i,h}=[\mathcal X_{i,1}, \ldots \mathcal X_{i,h}]^{\top}$. Moreover,
for each $i\in [L],$ let $\widetilde{\zeta}_{i,h}=[\zeta_{i,1}(T-1), \ldots \zeta_{i,h}(T-1)]^{\top}\otimes I_m$.
By \eqref{zeta}, there exists a sufficiently large $h_0$ such that, for all $h\ge h_0$,
{\small
\begin{equation}\label{eq: tilde_zeta}
\lambda_{\min}
\left(
\widetilde{\zeta}_{i,h}^{\top}\widetilde{\zeta}_{i,h}
\right)
=\lambda_{\min}
\left(
\sum_{\tau=1}^{h}
\zeta_{i,\tau}(T-1)\zeta_{i,\tau}(T-1)^{\top}
\right)
\ge \frac{1}{2}\rho_h.   
\end{equation}}
Applying Lemma~\ref{lem4} with $W_h=\tilde{\zeta}_{i,h}$
and $\mathscr{W}_h=\mathscr X_{i,h}$, and using
\eqref{eq: tilde_zeta} and \eqref{trajectory_bound},
we conclude that there exists a constant $\bar\mu_i>0$ such that
{\small
\begin{equation}\label{mathcalx}
\lambda_{\min}
\left(
\mathscr X_{i,h}^{\top}\mathscr X_{i,h}
\right)=\lambda_{\min}
\left(
\sum_{\tau=1}^{h}
\mathcal X_{i,\tau}\mathcal X_{i,\tau}^{\top}
\right)
\ge
\bar \mu_i\frac{h^{1-n}\rho_h^{n}}{C_h^{2n-2}},
\quad\text{a.s.}    
\end{equation}
}

   Partition \(\bar\phi_{i,h}\) in \eqref{eq:bar_phi} into \(T\) block rows: 
\[
\bar\phi_{i,h}
=
[\bar\phi_{i,h}^{(1)\top},\ldots,\bar\phi_{i,h}^{(T)\top}]^{\top},
\]
where 
$
\bar\phi_{i,h}^{(j)}
\in
\mathbb R^{m\times [m(m+1)/2+n(n+1)/2+n]},
j=1,\ldots T.
$ By the block upper-triangular structure of $\Gamma_i^\top$ and the
definition of $\mathcal K_{\cdot,T}(\cdot)$, the $j$-th block row of
$\bar\phi_{i,h}$ is given by
{\small
\begin{equation}\label{eq:barphi_block}
\begin{aligned}
\bar\phi_{i,h}^{(j)}
={}&
\Bigl[
(u_{i,h}(j-1)^\top\otimes I_m)\mathscr D_m,
\\
&\quad
(
\sum_{\tau=j}^{T-1}
x_h(\tau)^\top
\otimes B_i^\top(A^{\tau-j})^\top)\mathscr D_n,
\\
&\quad
-\sum_{\tau=j}^{T-1}
B_i^\top(A^{\tau-j})^\top
\Bigr],
\end{aligned}
\end{equation}}
for $j=1,\ldots,T-1$. We select the \(n\) block rows of \(\bar\phi_{i,h}\), indexed by $k=T-n, \ldots T-1$, and define
\[
\tilde\phi_{i,h}
=\left[\bar\phi_{i,h}^{(T-n)\top},\ldots,
\bar\phi_{i,h}^{(T-1)\top}\right]^{\top}.
\]
Define $\mathscr E=\begin{bmatrix}
0, 0,
\ldots
I_m
\end{bmatrix}^{\top}
\in\mathbb R^{mn\times m}$, 
\[
\mathscr{C}_i
=\begin{bmatrix}
A^{n-1}B_i,\
A^{n-2}B_i,
\ldots\
B_i
\end{bmatrix}^{\top}
\in\mathbb R^{mn\times n},
\]
and
\[
\mathcal C_i
=\begin{bmatrix}
(I_m\otimes\mathscr E)\mathscr D_m &0&0\\
0&(I_n\otimes\mathscr C_i)\mathscr D_n&0\\
0&0&\mathscr C_i
\end{bmatrix}.
\]
By the definition of $\mathscr S_j, \mathscr C_i$, and $\mathscr E$,
\[
[\mathscr S_j\mathscr E]_p
=\begin{cases}
I_m,&p=j,\\
0,&p\neq j,
\end{cases}
\]
and
\[
[\mathscr S_j\mathscr C_i]_p
=\begin{cases}
B_i^\top(A^{j-p})^\top,&p\le j,\\
0,&p>j,
\end{cases}
\]
where \([\cdot]_p\) denotes the \(p\)-th \(m\)-row block. Consequently,
stacking the preceding \(n\) block rows gives
{\small
\begin{equation}\label{eq:tildephi_expansion}
\begin{aligned}
\tilde\phi_{i,h}
={}&
\Biggl[
\left(
\sum_{j=1}^{n}
u_{i,h}(T-n+j-2)^\top
\otimes(\mathscr S_j\mathscr E)
\right)\mathscr D_m,
\\
&\quad
\left(
\sum_{j=1}^{n}
x_h(T-n+j-1)^\top
\otimes(\mathscr S_j\mathscr C_i)
\right)\mathscr D_n,
\\
&\quad
-\sum_{j=1}^{n}\mathscr S_j\mathscr C_i
\Biggr].
\end{aligned}
\end{equation}}
Using $(a^\top\otimes \mathscr S_j\mathscr E)
=(a^\top\otimes\mathscr S_j)
(I_m\otimes\mathscr E)
$ and, similarly, $(b^\top\otimes\mathscr S_j\mathscr C_i)
=(b^\top\otimes\mathscr S_j)
(I_n\otimes\mathscr C_i),$ then~\eqref{eq:tildephi_expansion} admits the factorization
\begin{equation}\label{eq:tildephi_factor}
\tilde\phi_{i,h}
=\mathcal X_{i,h}^{\top}\mathcal C_i.
\end{equation}
By controllability of \((A,B_i)\), \(\mathscr C_i\) has full column rank.  Moreover, \(\mathscr E\), \(\mathscr D_m\), and \(\mathscr D_n\) all have
full column rank. Hence, \(\mathcal C_i\) has full column rank, and therefore $\mathcal C_i^\top\mathcal C_i
\succeq
\sigma_{\min}^2(\mathcal C_i)I_{d_\theta}.$ Using~\eqref{eq:tildephi_factor} and \eqref{mathcalx}, and defining $\mu_i=\bar \mu_i\sigma_{\min}^2(\mathcal C_i) $, it follows that
\begin{equation}\label{tilde_phi}
\begin{aligned}
\sum_{\tau=1}^{h}
\tilde\phi_{i,\tau}^{\top}\tilde\phi_{i,\tau}
&=
\mathcal C_i^\top
\left(
\sum_{\tau=1}^{h}
\mathcal X_{i,\tau}\mathcal X_{i,\tau}^{\top}
\right)
\mathcal C_i\succeq
\mu_i\frac{h^{1-n}\rho_h^{n}}{C_h^{2n-2}}I_{d_\theta}.
\end{aligned}
\end{equation}
Since \(\tilde\phi_{i,h}\) contains only \(n\) block rows of
\(\bar\phi_{i,h}\), it follows that $\sum_{\tau=1}^{h}
\bar\phi_{i,\tau}^{\top}\bar\phi_{i,\tau}
\succeq
\sum_{\tau=1}^{h}
\tilde\phi_{i,\tau}^{\top}\tilde\phi_{i,\tau}\succeq\mu_i\frac{h^{1-n}\rho_h^{n}}{C_h^{2n-2}}I_{d_\theta}$. Taking $\mu=\min_{i\in[L]}\mu_i>0$ and using
$\bar\phi_\tau
=\operatorname{diag}\{\bar\phi_{1,\tau},\ldots,\bar\phi_{L,\tau}\}$
yields the desired bound \eqref{eq: lambda_min_bound}.
\hfill $\square$

{\bfseries\itshape Proof of Lemma \ref{lem4.2}:}
For each $i\in[L]$, define
\[
F_i=
\begin{bmatrix}
A^{T-2}B_i &
A^{T-3}B_i &
\cdots &
B_i &
0_{n\times m}
\end{bmatrix}.
\]
Moreover, define $\mathfrak E
=
\begin{bmatrix}
0_{m\times m(T-2)} & I_m & 0_{m\times m}
\end{bmatrix},$
and
\[
\mathfrak A_i=
\begin{bmatrix}
0 & \cdots &
\underbrace{\mathfrak E}_{i\text{-th block}} &
\cdots & 0\\
F_1 & \cdots & F_i & \cdots & F_L
\end{bmatrix},
\quad
\mathfrak B=
\begin{bmatrix}
0_{m\times n}\\
A^{T-1}
\end{bmatrix}.
\]
Since $A$ is nonsingular, $(A,B_i)$ is controllable, and $T\geq n+2$, we have
\begin{equation}\label{eq:A_i_rank}
\operatorname{rank}(\mathfrak A_i)=m+n.
\end{equation}
Unrolling the system dynamics gives
\begin{equation}\label{eq:zeta_affine}
\zeta_{i,h}(T-1)
=
\begin{bmatrix}
\mathfrak A_i\boldsymbol{U}_h+\mathfrak B\xi_h\\
-1
\end{bmatrix}.
\end{equation}
By~\eqref{first-order} and
$\boldsymbol{P}_h=\hat{\boldsymbol{P}}_h+h^{-\gamma}v_h$,
\[
\boldsymbol{U}_{h}
=
-M^{-1}
\bigl(
\eta_h+\mathcal H\hat{\boldsymbol{P}}_h
\bigr)
-h^{-\gamma}M^{-1}\mathcal Hv_h.
\]
Substituting this expression into~\eqref{eq:zeta_affine} yields
\begin{equation}\label{eq:zeta_decomposition}
\zeta_{i,h}(T-1)
=
\psi_{i,h}+h^{-\gamma}\Delta_{i,h},
\end{equation}
where
\[
\psi_{i,h}
=
\begin{bmatrix}
a_{i,h}\\
-1
\end{bmatrix},
\qquad
\Delta_{i,h}
=
\begin{bmatrix}
r_{i,h}\\
0
\end{bmatrix},
\]
with
\[
\begin{aligned}
a_{i,h}
&=
-\mathfrak A_iM^{-1}
\bigl(
\eta_h+\mathcal H\hat{\boldsymbol{P}}_h
\bigr)
+\mathfrak B\xi_h,\\
r_{i,h}
&=
-D_i v_h,
\qquad
D_i=\mathfrak A_iM^{-1}\mathcal H.
\end{aligned}
\]

Let \(\bar{\mathcal F}_0=\sigma(\xi_1)\), and for each \(h\geq1\), define
\[
\bar{\mathcal F}_h
=
\sigma\bigl(
\xi_1,\ldots,\xi_{h+1},
\epsilon_1,\ldots,\epsilon_h,
v_1,\ldots,v_h
\bigr).
\]
Then $\psi_{i,h}$ is
$\bar{\mathcal F}_{h-1}$-measurable and $\mathbb E[
\Delta_{i,h}\mid\bar{\mathcal F}_{h-1}]
=0.$
Moreover, $\mathcal H$ has full row rank and hence admits a right
inverse. Together with~\eqref{eq:A_i_rank} and the nonsingularity
of $M$, this implies that $D_i$ has full row rank. Thus, for some
$\kappa_i>0$, $D_iD_i^\top\succeq\kappa_i I_{m+n}.$ Since $\mathbb E[v_hv_h^\top]=I$,
\begin{equation}\label{eq:r_covariance}
\mathbb E\!\left[h^{-2\gamma}
r_{i,h}r_{i,h}^\top
\mid\bar{\mathcal F}_{h-1}
\right]
=
h^{-2\gamma}D_iD_i^\top
\succeq
\kappa_i h^{-2\gamma}I_{m+n}.
\end{equation}

Define the conditional Gramian
\begin{equation}\label{eq:def_condition}
G_{i,h}=
\sum_{\tau=1}^h
\mathbb E\!\left[
\zeta_{i,\tau}(T-1)
\zeta_{i,\tau}(T-1)^\top
\mid\bar{\mathcal F}_{\tau-1}
\right].
\end{equation}
For any $x\in\mathbb R^{m+n}$ and $s\in\mathbb R$,
\eqref{eq:zeta_decomposition} and \eqref{eq:r_covariance} give
\begin{equation}\label{eq:conditional_gramian}
\begin{aligned}
\begin{bmatrix}x\\s\end{bmatrix}^{\!\top}
G_{i,h}
\begin{bmatrix}x\\s\end{bmatrix}
\geq{}&
\kappa_i
\sum_{\tau=1}^h\tau^{-2\gamma}\|x\|^2+
\sum_{\tau=1}^h
(a_{i,\tau}^\top x-s)^2.
\end{aligned}
\end{equation}
Let $\bar a_{i,h}=\frac{1}{h}\sum_{\tau=1}^h a_{i,\tau}.$  By Jensen's inequality and \eqref{eq:conditional_gramian},
\[
\begin{aligned}
\begin{bmatrix}x\\s\end{bmatrix}^{\!\top}
G_{i,h}
\begin{bmatrix}x\\s\end{bmatrix}
&\ge
\kappa_i\sum_{\tau=1}^h\tau^{-2\gamma}\|x\|^2
+h(\bar a_{i,h}^\top x-s)^2  \\
&\ge
\underline c_i h^{1-2\gamma}
\Bigl(\|x\|^2+(\bar a_{i,h}^\top x-s)^2\Bigr),
\end{aligned}
\]
for some \(\underline c_i>0\). Moreover, $s^2
\le
2(\bar a_{i,h}^\top x-s)^2
+2\|\bar a_{i,h}\|^2\|x\|^2,$ and $\|\bar a_{i,h}\|
\le
\max_{1\le\tau\le h}\|a_{i,\tau}\|
\le C(1+C_h)$ for some $C>0$. Hence, for some \(\bar C>0\),
\[
\|x\|^2+s^2
\le
\bar C(1+C_h^2)
\Bigl(\|x\|^2+(\bar a_{i,h}^\top x-s)^2\Bigr).
\]
Therefore, for some $\bar c_i>0$,
\[
\begin{bmatrix}x\\s\end{bmatrix}^{\!\top}
G_{i,h}
\begin{bmatrix}x\\s\end{bmatrix}
\ge
\bar c_i
\frac{h^{1-2\gamma}}{1+C_h^2}
\left(\|x\|^2+s^2\right),
\]
which implies
\begin{equation}\label{eq:conditional_gramian_bound}
\lambda_{\min}(G_{i,h})
\ge
\bar c_i
\frac{h^{1-2\gamma}}{1+C_h^2}.
\end{equation}

It remains to compare the sample Gramian with its conditional
expectation. Define
\begin{equation}\label{def_martingale}
\begin{aligned}
    \tilde{G}_{i,h}
=&
\sum_{\tau=1}^h
\left[
\zeta_{i,\tau}(T-1)
\zeta_{i,\tau}(T-1)^\top
\right]-G_{i,h}.
\end{aligned}  
\end{equation}
By \eqref{eq:def_condition}, 
$\{\tilde{G}_{i,h},\bar{\mathcal{F}}_h\}$
forms a martingale sequence. Using~\eqref{eq:zeta_decomposition}, this martingale can be written as
\begin{equation}
\begin{aligned}
\tilde G_{i,h}
={}
&\sum_{\tau=1}^{h}\Bigl[
\tau^{-\gamma}\psi_{i,\tau}\Delta_{i,\tau}^{\top}
+
\tau^{-\gamma}\Delta_{i,\tau}\psi_{i,\tau}^{\top}
\\
&
+
\tau^{-2\gamma}\Delta_{i,\tau}\Delta_{i,\tau}^{\top}
-
\mathbb E\!\left[\tau^{-2\gamma}
\Delta_{i,\tau}\Delta_{i,\tau}^{\top}
\mid\bar{\mathcal F}_{\tau-1}
\right]
\Bigr].
\end{aligned}
\end{equation} 
By condition \eqref{eq:trajectory_bound_aux}, $\zeta_{i,h}(T-1)=O(C_h), a.s.,$ as $h\to \infty$. Since $\{v_h\}$ is bounded, \eqref{eq:zeta_decomposition}
further implies that $\|\psi_{i,h}\|=O(C_h), a.s.$
Consequently,
$\sum_{\tau=1}^{h}\tau^{-2\gamma}\|\psi_{i,\tau}\|^2=O(h^{1-2\gamma}C_h^2), a.s.,$ as $h\to \infty$. Applying Lemma~\ref{lem:martingale} and since $C_h=O(\log h)$, we obtain 
\[
\sum_{\tau=1}^{h}
\tau^{-\gamma}\psi_{i,\tau}\Delta_{i,\tau}^{\top}=O(C_hh^{\frac{1}{2}-\gamma}\sqrt{\log\log h }),\quad a.s.
\]
Moreover, applying Lemma~\ref{lem:martingale} yields
\begin{equation}
\begin{aligned}
&\sum_{\tau=1}^{h}\left(\tau^{-2\gamma}\Delta_{i,\tau}\Delta_{i,\tau}^{\top}
-
\mathbb E\!\left[\tau^{-2\gamma}
\Delta_{i,\tau}\Delta_{i,\tau}^{\top}
\mid\bar{\mathcal F}_{\tau-1}
\right]\right)\\=&O(h^{\frac{1}{2}-\gamma}\sqrt{\log\log h}),\quad a.s.,\quad h\to \infty.
\end{aligned}
\end{equation}
Therefore,
\begin{equation}\nonumber
\|\tilde G_{i,h}\|
=
O\!\left(
C_h
h^{1/2-\gamma}
\sqrt{\log\log h}
\right),
\qquad\text{a.s.}
\end{equation}
Since $C_h=O(\log h)$ and $\gamma<1/2$,
\begin{equation}\label{eq:martingale_analysis}
\|\tilde G_{i,h}\|
=
o\!\left(
\frac{h^{1-2\gamma}}{1+C_h^2}
\right),
\qquad\text{a.s.}
\end{equation}
Finally, by 
\eqref{eq:conditional_gramian_bound}, \eqref{def_martingale}, and \eqref{eq:martingale_analysis},
\[
\lambda_{\min}\!\left(
I_{m+n+1}+\sum_{\tau=1}^h
\zeta_{i,\tau}(T-1)
\zeta_{i,\tau}(T-1)^\top
\right)
\geq
c_i
\frac{h^{1-2\gamma}}{1+C_h^2}
\]
for some constant $c_i>0$, almost surely.
\hfill $\square$
\section{}\label{sec:martingale_proofs}
In this section, we present two martingale convergence results that were used in the convergence analysis of the proposed algorithm above.
\begin{lemma}[Lemma 2 in \cite{wei1985asymptotic}]\label{lem:martingale} 
Let $\left\{\omega_{n}, \mathcal{F}_{n}\right\}$ be a martingale difference sequence satisfying $\sup_{n} 	\mathbb{E}[\|\omega_{n+1}\|^{\alpha}\mid \mathcal{F}_{n}] < \infty$ almost surely for some $\alpha>2$. Let $\left\{f_{n}, \mathcal{F}_{n}\right\}$ be an adapted sequence, and define $s_{n}=\left(\sum_{i=0}^{n}\|f_{i}\|^{2}\right)^{\frac{1}{2}}$. Then, for every $\delta>1/\min\{\alpha,4\}$,
	\begin{equation}\nonumber
		\sum_{i=0}^{n}f_{i}\omega_{i+1} = O(s_{n}(\log s_n)^{\delta}),\quad a.s.
	\end{equation}
	Furthermore, if $\|f_n\|=o(s_n^{\epsilon})$ almost surely for some $0<\epsilon<1$,
    then
	\begin{equation}\nonumber
		\sum_{i=0}^{n}f_{i}\omega_{i+1} = O(s_{n}\sqrt{\log\log s_n}),\quad a.s.
	\end{equation}
\end{lemma}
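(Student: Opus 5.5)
This statement is Lemma~\ref{lem:martingale}, quoted from Wei (1985), so I would reconstruct the standard argument rather than rely on anything specific to this paper. I treat $\omega_n$ as scalar, or handle it coordinatewise. The argument rests on a truncation together with a martingale almost-sure convergence theorem, applied to the normalized sum $\sum f_i\omega_{i+1}/(s_i(\log s_i)^{\delta})$.

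\textbf{Step 1 (first bound).} Set $a_i=s_i(\log s_i)^{\delta}$, taking $a_i=1$ while $s_i$ is small. Consider the martingale $N_n=\sum_{i\le n} f_i\omega_{i+1}/a_i$. Because $f_i$ and $a_i$ are $\mathcal F_i$-measurable and the conditional $\alpha$-moments are uniformly bounded, the conditional variance of each increment is $O(\|f_i\|^2/a_i^2)$. Now
\[
\sum_i \frac{\|f_i\|^2}{s_i^2(\log s_i)^{2\delta}}\le \int \frac{dx}{x(\log x)^{2\delta}},
\]
since $s_i^2-s_{i-1}^2=\|f_i\|^2$. The integral is finite whenever $2\delta>1$, and this gives a.s.\ convergence of $N_n$ on $\{s_n\to\infty\}$. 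Kronecker's lemma then yields $\sum f_i\omega_{i+1}=o(a_n)$.

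To reach the sharper threshold $\delta>1/\min\{\alpha,4\}$, I would not use second moments alone. Instead I would apply Chow's theorem: if $\sum E[|X_i|^{p}\mid\mathcal F_{i-1}]<\infty$ for some $p\in(1,2]$, then the martingale converges. I would take $p=\min\{\alpha,4\}/2$. I expect the main obstacle to be getting this exponent right: it needs a truncation of $\omega$ at level $a_i/\|f_i\|$. The tail part is handled by conditional Borel--Cantelli together with the $\alpha$-moment bound, and the truncated part by the $p$-th moment criterion with $\sum \|f_i\|^{2p}/a_i^{2p}$-type sums. If $s_n$ stays bounded, the martingale already converges by the same criterion, and the bound $O(1)$ is trivial.

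\textbf{Step 2 (LIL-type bound).} Under $\|f_n\|=o(s_n^{\epsilon})$, the increments $f_i\omega_{i+1}$ are small relative to $s_n$. I would truncate $\omega_{i+1}$ at level $s_i^{1-\epsilon'}$, with $\epsilon<\epsilon'<1$ chosen using the $\alpha>2$ moments so that, by conditional Borel--Cantelli, the truncation is eventually inactive. After recentering, the truncated increments are bounded by $o(s_i/\sqrt{\log\log s_i})$. I would then apply Stout's martingale law of the iterated logarithm for bounded-increment martingales, with the conditional variance $V_n\asymp s_n^2$ up to constants; the upper bound only needs $V_n\le Cs_n^2$. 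This gives $O(\sqrt{V_n\log\log V_n})=O(s_n\sqrt{\log\log s_n})$. The recentering terms are bounded by $\sum\|f_i\|E[|\omega|1_{|\omega|>\cdot}\mid\mathcal F_i]$, which the $\alpha$-moment makes $o(s_n)$.

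Alternatively, I would use an exponential supermartingale $\exp(\lambda M_n-\lambda^2 c V_n)$ with a geometric grid of $\lambda$-values and Borel--Cantelli over epochs where $s_n^2\in[e^k,e^{k+1})$. This is the step where the condition $\|f_n\|=o(s_n^{\epsilon})$ is essential, since it controls the $\lambda\|f\omega\|$ term in the exponential bound.
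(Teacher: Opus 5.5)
The paper gives no proof of this lemma; it is imported from \cite{wei1985asymptotic}. Your reconstruction therefore has to stand on its own, and as written it has two genuine gaps.

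The first gap is in Step~1. The second-moment argument with Kronecker's lemma is correct, but it only yields $\delta>1/2$. The route ``normalize by $a_i$, prove a.s.\ convergence, apply Kronecker'' cannot go below $1/2$ under any moment criterion. Take $f_i\equiv1$ and i.i.d.\ Rademacher $\omega_i$. Your truncation is then inactive, and $\sum_i\omega_{i+1}/(\sqrt{i}(\log i)^{\delta})$ has independent bounded summands with $\sum_i 1/(i(\log i)^{2\delta})=\infty$ for $\delta\le1/2$. By the three-series theorem it diverges a.s., even though $\sum_{i\le n}\omega_{i+1}=O(\sqrt{n\log\log n})$.

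Concretely, Chow's theorem requires an exponent $p\in[1,2]$. The corresponding sum is $\sum_i\|f_i\|^{p}/a_i^{p}$, which dominates the $p=2$ sum because $\|f_i\|/a_i\le1$. The sums $\sum_i\|f_i\|^{2p}/a_i^{2p}$ you wrote do produce the threshold $2p\delta>1$, i.e.\ $\delta>1/\min\{\alpha,4\}$. However, for an exponent $2p>2$ there is no martingale convergence criterion.

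Below $1/2$ you must bound $\sum_i f_i\omega_{i+1}$ directly with a maximal exponential inequality:
\begin{itemize}
\item Truncate $X_i=f_i\omega_{i+1}$ at $b_i=s_i(\log s_i)^{\delta_0}$ with $\delta_0>1/\alpha$.
\item Then $\sum_i\mathbb P(|X_i|>b_i\mid\mathcal F_i)\le C\sum_i\frac{\|f_i\|^2}{s_i^2}(\log s_i)^{-\alpha\delta_0}<\infty$, using $\|f_i\|\le s_i$.
\item The recentering terms are at most $C\sum_{i\le n}\|f_i\|^2/s_i\le 2Cs_n$.
\item Apply Freedman's inequality on the epochs $\{e^k<s_n\le e^{k+1}\}$, with increments $\le 2e^{k+1}(k+1)^{\delta_0}$ and predictable variation $\le Ce^{2k+2}$. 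This gives tail bounds $\exp(-ck^{\delta-\delta_0})$ at level $e^kk^{\delta}$.
\end{itemize}
Hence $\sum_i f_i\omega_{i+1}=O(s_n(\log s_n)^{\delta})$ for every $\delta>\delta_0$, which covers the stated range.

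The second gap is the truncation level in Step~2. Truncating $\omega_{i+1}$ at $s_i^{1-\epsilon'}$ turns conditional Borel--Cantelli into the requirement $\sum_{i:\,f_i\neq0}s_i^{-\alpha(1-\epsilon')}<\infty$, which the hypotheses do not give. Take $\|f_i\|=i^{-1/2}$, so that $s_i^2\sim\log i$ and $\|f_i\|=o(s_i^{\epsilon})$, and take i.i.d.\ standard Gaussian $\omega_i$. Then the probabilities $\mathbb P(|\omega_{i+1}|>(\log i)^{(1-\epsilon')/2})=i^{-o(1)}$ are not summable, so the truncation is active infinitely often. Likewise, your recentering bound $\sum_{i\le n}\|f_i\|\,\mathbb E[|\omega|\mathbf 1_{\{|\omega|>s_i^{1-\epsilon'}\}}]$ is of order $n^{1/2-o(1)}\gg s_n$.

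The hypothesis $\|f_n\|=o(s_n^{\epsilon})$ has to enter the tail estimate. Truncate the product $f_i\omega_{i+1}$ at $s_i^{\kappa}$ with $\kappa\in\bigl((2+\epsilon(\alpha-2))/\alpha,1\bigr)$. Since eventually $\|f_i\|^{\alpha}\le\|f_i\|^2s_i^{\epsilon(\alpha-2)}$, one gets $\sum_i\|f_i\|^{\alpha}s_i^{-\kappa\alpha}\le\sum_i\|f_i\|^2s_i^{-2-\eta}<\infty$ with $\eta=\kappa\alpha-\epsilon(\alpha-2)-2>0$. The recentering is then $o(s_n)$, and the increments are $O(s_i^{\kappa})=o(s_i/\sqrt{\log\log s_i})$.

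Finally, Stout's LIL normalizes by the predictable variation $V_n$ and needs increments $o(\sqrt{V_n/\log\log V_n})$. The lemma provides no lower bound $V_n\gtrsim s_n^2$; that only comes from Assumption~\ref{assum_noise} in the paper's applications. So use your exponential-supermartingale variant with $s_n^2$ as the clock, applied after the corrected truncation, rather than Stout's theorem.
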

\begin{lemma}\label{lem:excitation_equivalence}
Let 
\(\{\omega_h,\mathcal F_h\}\) be a martingale difference
sequence and \(\{f_h,\mathcal F_h\}\) an adapted sequence, satisfying
\[
\mathbb E[\omega_{h+1}^\top\omega_{h+1}\mid\mathcal F_{h}]
=\Sigma,\;\; \sup_{h\geq 0}
\mathbb E\left[
\|\omega_{h+1}\|^{\alpha}
\mid
\mathcal F_{h}
\right]
<\infty,\;\;\text{a.s.},
\]
for some \(\alpha>4\). Define
$\bar f_h=f_h+\omega_{h+1}$ for all $h\geq 0$,
and
\[
S_h
=\sum_{\tau=0}^h
f_\tau^\top f_\tau,
\quad
\bar S_h
=\sum_{\tau=0}^h
\bigl(\bar f_\tau^\top\bar f_\tau-\Sigma\bigr).
\]
Then, for every fixed $\eta\in (1/2, 1)$, 
\begin{equation}\nonumber
\|S_h-\bar S_h\|
=o\left((\lambda_{\max}(S_h)+h)^{\eta}\right),
\quad\text{a.s.}
\end{equation}
\end{lemma}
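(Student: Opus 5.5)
The plan is to expand $\bar f_\tau^\top\bar f_\tau$ and isolate two martingale terms. Since $\bar f_\tau=f_\tau+\omega_{\tau+1}$, we have
\[
\bar S_h-S_h=\sum_{\tau=0}^h\bigl(f_\tau^\top\omega_{\tau+1}+\omega_{\tau+1}^\top f_\tau\bigr)+\sum_{\tau=0}^h\bigl(\omega_{\tau+1}^\top\omega_{\tau+1}-\Sigma\bigr),
\]
so it suffices to show that each sum is $o\bigl((\lambda_{\max}(S_h)+h)^{\eta}\bigr)$ almost surely. Both are martingale-type sums: $f_\tau$ is $\mathcal F_\tau$-measurable, $\omega_{\tau+1}$ is a martingale difference with respect to $\mathcal F_{\tau+1}$, and $\mathbb E[\omega_{\tau+1}^\top\omega_{\tau+1}\mid\mathcal F_\tau]=\Sigma$.

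\emph{Quadratic term.} The sequence $\{\omega_{\tau+1}^\top\omega_{\tau+1}-\Sigma,\mathcal F_{\tau+1}\}$ is a matrix martingale difference sequence. Its conditional $\alpha/2$ moments are uniformly bounded, and $\alpha/2>2$ because $\alpha>4$. Apply Lemma~\ref{lem:martingale} entrywise with $f_i\equiv1$, so that $s_h=\sqrt{h+1}$. This gives the bound $O(\sqrt h(\log h)^{\delta})$ for any $\delta>1/\min\{\alpha/2,4\}$, which is $o(h^{\eta})$ since $\eta>1/2$. Because $h\le\lambda_{\max}(S_h)+h$, this term is $o\bigl((\lambda_{\max}(S_h)+h)^{\eta}\bigr)$.

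\emph{Cross term.} Again apply Lemma~\ref{lem:martingale} entrywise, now with weights built from the rows of $f_\tau$ and with $\omega_{\tau+1}$ as the noise. The relevant normalizer is
\[
s_h^2=\sum_{\tau\le h}\|f_\tau\|^2\le c\,\operatorname{tr}(S_h)\le c'\lambda_{\max}(S_h),
\]
using equivalence of norms in fixed dimension. The first conclusion of Lemma~\ref{lem:martingale} yields $O\bigl(s_h(\log s_h)^{\delta}\bigr)$, which needs no growth condition on $f_\tau$. Writing $N_h=\lambda_{\max}(S_h)+h$, this is $O\bigl(N_h^{1/2}(\log N_h)^{\delta}\bigr)=o(N_h^{\eta})$, since $\eta>1/2$. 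If $s_h$ stays bounded, the martingale sum converges almost surely, so the term is $O(1)=o(N_h^{\eta})$ because $N_h\ge h\to\infty$.

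\emph{Main obstacle.} The delicate step is the cross term, where the weights $f_\tau$ are random, adapted and possibly unbounded. I would rely on the general $(\log s_h)^{\delta}$ form of Lemma~\ref{lem:martingale}, which avoids the condition $\|f_h\|=o(s_h^{\epsilon})$, and handle the two regimes $s_h\to\infty$ and $s_h$ bounded separately. Adding the two bounds gives $\|S_h-\bar S_h\|=o\bigl((\lambda_{\max}(S_h)+h)^{\eta}\bigr)$ almost surely.
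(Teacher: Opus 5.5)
Your proposal is correct and follows the same route as the paper. Both split $\bar S_h-S_h$ into the cross martingale term $\sum_\tau(f_\tau^\top\omega_{\tau+1}+\omega_{\tau+1}^\top f_\tau)$ and the centered quadratic term, and control each with Lemma~\ref{lem:martingale}; the paper uses the $(\log)^{1/2}$ form for the cross term and the $\sqrt{h\log\log h}$ form for the quadratic term, while your separate handling of the bounded-$s_h$ regime and the explicit comparison with $(\lambda_{\max}(S_h)+h)^{\eta}$ fill in details the paper leaves implicit.
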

\begin{proof}
By the definitions of $\bar f_h$, $S_h$, and $\bar S_h$,
\[
\bar S_h-S_h
=\sum_{\tau=0}^h
\left(
f_\tau^\top\omega_{\tau+1}
+
\omega_{\tau+1}^\top f_\tau
\right)
+
\sum_{\tau=0}^h
\left(
\omega_{\tau+1}^\top\omega_{\tau+1}-\Sigma
\right).
\]
Since $(\sum_{\tau=0}^{h}\|f_{\tau}\|^{2})^{1/2}=O(\sqrt{\lambda_{\max}(S_h)})$ almost surely, by Lemma~\ref{lem:martingale},
\[
\sum_{\tau=0}^h
f_\tau^\top\omega_{\tau+1}
=O\left(
\sqrt{
\lambda_{\max}(S_h)}
(\log\lambda_{\max}(S_h)
)^{1/2}
\right),
\quad\text{a.s.}
\]
Moreover,
\(\{\omega_{h}^\top\omega_{h}-\Sigma, \mathcal F_h\}\) is a martingale
difference sequence with uniformly bounded conditional
\(\alpha/2\)-th moments. Since \(\alpha/2>2\), by Lemma~\ref{lem:martingale},
\[
\sum_{\tau=0}^h
(\omega_{\tau+1}^\top\omega_{\tau+1}-\Sigma)
=O\left(\sqrt{h\log\log h}\right),
\qquad\text{a.s.}
\]
This completes the proof of Lemma~\ref{lem:excitation_equivalence}. 
\end{proof}

\end{document}